\newcommand{\bb}[1]{\boldsymbol{#1}}
\newcommand{\x}{\boldsymbol{x}}
\newcommand{\X}{\boldsymbol{X}}
\newcommand{\E}{\mathbb{E}}
\newcommand{\w}{\boldsymbol{w}}
\newcommand{\Z}{\boldsymbol{Z}}
\renewcommand{\r}{\tilde{r}}
\newcommand{\n}{\tilde{n}}
\newtheorem{theorem}{Theorem}
\newtheorem{corollary}{Corollary}
\colorlet{group1}{red!20}
\colorlet{group2}{blue!20}
\colorlet{group3}{green!20}
\colorlet{outs}{black!30}
\colorlet{miss}{white}
\begin{document}

  \title{\vspace{-1cm} \bf Outlier-Robust Multi-Group Gaussian Mixture Modeling with Flexible Group Reassignment}
  \author{Patricia Puchhammer\hspace{.2cm}\\
     Institute of Statistics and Mathematical Methods in Economics, \\ Vienna University of Technology, \\
Ines Wilms \hspace{.2cm}\\
     Department of Quantitative Economics, 
  Maastricht University \\
    and \\
    Peter Filzmoser\\
     Institute of Statistics and Mathematical Methods in Economics,\\ Vienna University of Technology}
\maketitle

\begin{abstract}
Do expert-defined or diagnostically-labeled data groups align with clusters inferred through statistical modeling? If not, where do discrepancies between predefined labels and model-based groupings occur and why? In this work, we introduce the multi-group Gaussian mixture model (MG-GMM), the first model developed to investigate these questions. It incorporates prior group information while allowing flexibility to reassign observations to alternative groups based on data-driven evidence. We achieve this by modeling the observations of each group as arising not from a single distribution, but from a Gaussian mixture comprising all group-specific distributions. Moreover, our model offers robustness against cellwise outliers that may obscure or distort the underlying group structure. We propose a novel penalized likelihood approach, called cellMG-GMM, to jointly estimate mixture probabilities, location and scale parameters of the MG-GMM, and detect outliers through a penalty term on the number of flagged cellwise outliers in the objective function. We show that our estimator has good breakdown properties in presence of cellwise outliers. We develop a computationally-efficient EM-based algorithm for cellMG-GMM, and demonstrate its strong performance in identifying and diagnosing observations at the intersection of multiple groups through simulations and diverse applications in medicine and oenology.
\end{abstract}

\noindent
{\it Keywords:} Gaussian mixture models, cellwise outliers, EM-algorithm, labeled data, breakdown point

\newpage

\section{Introduction} 

In this paper, we study the problem of Gaussian mixture modeling for data pre-partitioned into groups, where the group assignment may be uncertain or imprecise and plagued by outliers. We show how the Gaussian mixture model (GMM) can be extended to a multi-group GMM that (i) exploits prior group information while allowing each observation to be reassigned to another group when supported by the data and (ii) stays reliable in presence of outliers that obscure or distort the group structure.

Data from heterogeneous populations are increasingly common across many  applications (for example, \citealp{Lyu2025} consider mixture models for binary variables, \citealp{Sugasawa2021} propose mixture models with an additional group structure within each cluster). We consider settings where observations can be pre-partitioned into groups using expert knowledge or contextual information; e.g., medical data divided into healthy individuals and patients, or spatial data where terrain type or country borders inform group structures. Such partitioning is often only preliminary because group assignments may be uncertain or imprecise. In medicine, for instance, progressive diseases involve patients transitioning from healthy states to more sever stages of a disease; a diabetes diagnosis, for instance, is based on blood sugar measurements that typically  vary smoothly across health conditions. 

Moreover, outliers are common and may distort the group structure; in the diabetes example, they may arise from  device malfunctions during blood sugar measurement. In complex multivariate settings like ours, such outliers are easily masked and may adversely effect the analysis if undetected.  Outlier detection in a multi-group setup is, however, challenging. An observation may be outlying in its original group yet fit better in another, suggesting a mismatch and the need to reconsider its assignment. Alternatively, an observation may be ``generally'' atypical, meaning  it cannot be appropriately assigned to any group. This atypicality may stem from unusual values across all variables, a few, or even a single variable. This calls for a cellwise outlier detection procedure that flags individual cells rather than entire observations as outlying (\citealp{Alqallaf2009, raymaekers2024challenges}). 

\begin{figure}[t]
     \centering
        \begin{subfigure}[b]{0.24\textwidth}
            \centering
            \includegraphics[width = \textwidth, trim = {0.15cm 1.3cm 0.15cm 1.9cm}, clip]{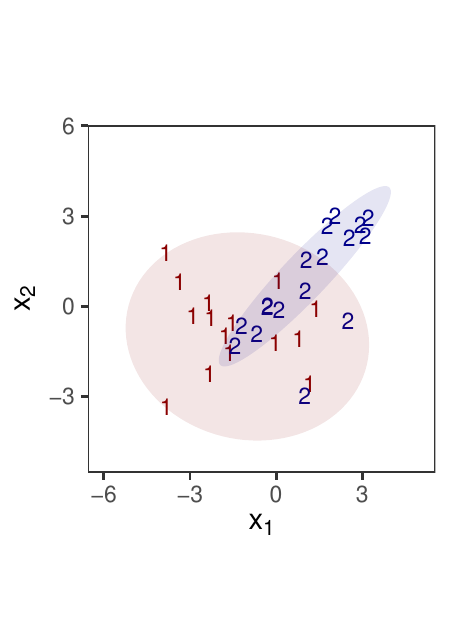}
            \caption{Data generation}
        \end{subfigure}
        \hfill
        \begin{subfigure}[b]{0.24\textwidth}
            \centering
            \includegraphics[width = \textwidth, trim = {0.15cm 0.25cm 0.15cm 0.8cm}, clip]{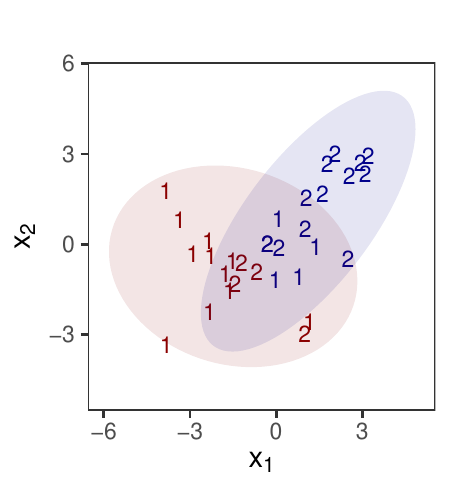}
            \caption{Classification}
        \end{subfigure}
        \hfill
        \begin{subfigure}[b]{0.24\textwidth}
            \centering
            \includegraphics[width = \textwidth, trim = {0.15cm 0.25cm 0.15cm 0.8cm}, clip]{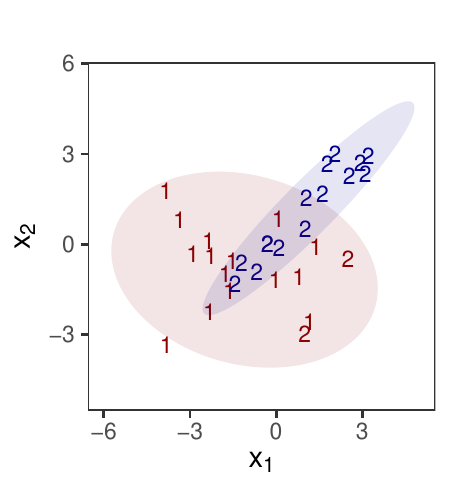}
            \caption{Multi-group GMM}
        \end{subfigure}
        \hfill
        \begin{subfigure}[b]{0.24\textwidth}
            \centering
            \includegraphics[width = \textwidth, trim = {0.15cm 0.25cm 0.15cm 0.8cm}, clip]{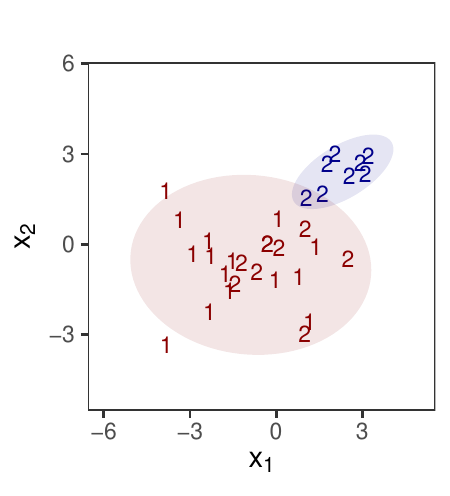}
            \caption{Clustering}
        \end{subfigure}
        \caption{Toy example with two predefined groups (labels 1/2). Colors show model-based assignments; shaded areas are groupwise tolerance ellipses.
        Left: True data generating process with two mislabeled observations.
Middle left: Quadratic discriminant analysis with fixed groups.
Middle right: Multi-group GMM with flexible reassignment.
Right: Standard GMM-based clustering.}
    \label{fig:toy}
\end{figure}

One way to study the distributional characteristics of the data is to treat the initial group partitioning as fixed and ignore potential outliers. However, this can lead to misleading inference on the group-level location and scale parameters and overlook important information on  interconnections among groups; for example, patients in transition and the factors driving that transition. Figure \ref{fig:toy} illustrates this with a toy example of two groups: panel (a) shows the real data generating process, panel (b) estimates, under normality, the mean and covariance of the two groups while treating the initial group assignments (i.e.\ label 1 or 2) as fixed, as is common in supervised classification.

Alternatively, one may  ignore the pre-assigned group structure (and the possible presence of outliers). This, however, risks overlooking important sources of variability when observations are assumed to be identically distributed, or discarding valuable expert or contextual information when using standard mixture models or clustering techniques. Figure \ref{fig:toy} panel (d) shows the result of applying an unsupervised method to the toy example (a classical GMM; \citealp[\texttt{mclust},][]{mclust_CRAN}) which does not exploit the expert-defined group labels. 

Although classification or clustering methods can be made robust to outliers (e.g., \citealp{hubert2024robust} for robust classification; \citealp{garcia2010review, coretto2016robust, coretto2017consistency} for robust clustering, and  \citealp{neykov2007, zaccaria2024} for robust GMMs),  a more flexible modeling approach is still needed -- one that incorporates expert or contextual knowledge while  allowing smooth connections among  predefined groups and flexible reassignment of observations based on data-driven evidence. We offer such a semi-supervised approach through the multi-group GMM, as visualized in Figure \ref{fig:toy} panel (c). 

We make three main contributions to the literature on Gaussian mixture modeling. 

\textbf{(i) A novel multi-group Gaussian mixture model.} We introduce a novel GMM (Section~\ref{subsec:model}); the multi-group GMM (MG-GMM) that allows for expert- or context-based initial group assignments. In contrast to standard GMMs, we do not assume each observation in the data set to be a random draw from one and the same GMM. Instead, we model each observation to have a main distribution, namely the initial group to which it is assigned, while being mixed with distributions of other groups. We hereby assume that a smooth process underlies the initial data partitioning.

\textbf{(ii) Cellwise robust estimation.} We robustify the MG-GMM to the possible presence of cellwise outliers. To this end, we propose the cellMG-GMM, a novel penalized likelihood-based estimator that adds a penalty on the flagged cellwise outliers to the objective function (Section~\ref{subsec:objective}). It jointly detects outliers and estimates the parameters of the MG-GMM. 

Recently, several proposals have successfully embedded cellwise outlier detection into penalized likelihood frameworks. Closely related to our work are  \cite{Raymaekers2023}, who introduce the cellwise minimum covariance determinant estimator, and \cite{zaccaria2024}, who propose a cellwise robust estimator for the standard GMM. Like these studies, we penalize the number of flagged cellwise outliers in the observed likelihood objective; but we do so for the newly introduced multi-group GMM. The latter offers a dual treatment of atypical observations: they may be reassigned to a better-fitting group, or labeled as outlying to all groups based on some or all variables. To our knowledge, we are the first to offer this dual treatment of cellwise outliers in GMMs. The outlier-robust MG-GMM is thus unique in revealing how observations transition between groups and which variables drive these transitions.

\textbf{(iii) Cellwise robust theory.} Our theoretical contribution consists of establishing the good finite-sample cellwise breakdown properties of cellMG-GMM. The definition of an adequate robustness measure for cluster analysis was introduced by \cite{Hennig2004} and extended to multivariate data by \cite{Cuesta2008}. Analyzing the breakdown point under well-clustered, idealized settings is crucial because in  mixture models even a single outlier can make  parameter estimates of at least one  mixture component break down. However, these earlier works address the classical rowwise contamination paradigm. To the best of our knowledge, we are the first to extend breakdown point theory for cluster and finite mixture models to the cellwise outlier paradigm; see Section~\ref{sec:theory}.

Beyond these main contributions, we provide an EM-based algorithm that integrates outlier detection -- treating outliers  as missing values that are unknown in advance --  with estimation of the mixture probabilities, and the location and scale parameters of the multi-group GMM (Section~\ref{sec:algorithm}). Our implementation is publicly available in the package \texttt{ssMRCD} \citep{ssMRCD_Cran} for the statistical computing environment \texttt{R} \citep{Rcoreteam}. Section~\ref{sec:simulations} evaluates our proposal through simulations and shows its robustness to adversarial contamination, while Section~\ref{sec:application} illustrates its value in two diverse applications. Replication files of all analyses are available at \url{https://github.com/patriciapuch/cellMG-GMM}.

\section{Outlier-Robust Multi-Group GMM}
\label{sec:methodology}
We introduce the multi-group Gaussian mixture model in Section~\ref{subsec:model}, and the corresponding penalized likelihood-based estimator called ``cellMG-GM'' in Section~\ref{subsec:objective}.

\subsection{Model and Notation}
\label{subsec:model}

Let~$\X_1, \X_2, \ldots, \X_N$ be data sets from $N$ pre-defined groups consisting of independent observations~$\X_g = ((\x_{g,1})', \ldots, (\x_{g, n_g})')' \in \mathbb{R}^{n_g \times p}$ per group~$g = 1, \ldots, N$ of the same~$p$ variables and $n = \sum_{g = 1}^N n_g$ total number of observations. We assume that observations~$\x_{g,i}$ from group~$g$, $i=1,\ldots ,n_g$, originate from a Gaussian mixture 
\begin{align}
\label{eq:model}
    \x_{g,i} \sim \mathcal{N} ( \bb{\mu}_k, \bb{\Sigma}_k) \text{ with probability } \pi_{g, k} \geq 0,
\end{align}
for~$k=1, \ldots , N$. 
In this novel multi-group GMM, or MG-GMM in short, observations of a particular group can thus originate from a Gaussian mixture of \textit{all} group distributions, this is in contrast to the standard GMM where each observation is a random draw from one and the same GMM. The mixture probabilities $\pi_{g, k}$ $(k=1,\ldots, N)$ for each group $g$ must sum to one. We do assume that each pre-specified group has a main distribution assigned to it. We thus enforce $\pi_{g,g} \geq \alpha \geq 0.5$, where the constant $\alpha$ regulates the model's strictness in terms of group reassignments. For $\alpha = 1$, reassignments are not allowed since all pre-assigned groups are then fixed (i.e.\ $\pi_{g,g}=1, \forall g$). In contrast, for $ 0.5 \leq \alpha < 1$, flexible reassignment is allowed with decreasing  $\alpha$ allowing for more and more flexibility. A more flexible MG-GMM can therefore identify observations that fall in the transition region between groups.

In the following, we introduce a penalized likelihood estimator for the MG-GMM  that is robust to the presence of cellwise outliers. Outliers will be treated as missing values in the likelihood framework such that they cannot influence the estimation process. However, unlike regular missing values, the positions of the outliers are not known in advance; the outliers need to be detected during estimation. In the remainder, we will use the following notation to denote the missingness pattern of the data. Observed and missing cells of $\x_{g,i}$ are denoted by a binary vector~$\w_{g,i} = (w_{g, i1}, \ldots, w_{g, ip})$, where a value of 1 indicates an observed data cell and 0 a missing/outlying data cell. The set of matrices $\bb{W} = (\bb{W}_g)_{g = 1}^N$ then collects all binary vectors $\w_{g,i}, i = 1, \ldots, n_g$, in the rows of each $\bb{W}_g$. These matrices are not given in advance but will be obtained during estimation. Furthermore, $\x_{g,i}^{(\w_{g,i})}$  denotes the vector with only the entries for which the variables are observed (i.e.\ $w_{g,ij}=1$ for variables $j=1,\ldots, p$), similarly for the mean $\bb{\mu}_k^{(\w_{g,i})}$. The matrix $\bb{\Sigma}_k^{(\w_{g,i})}$ denotes the submatrix of $\bb{\Sigma}_k$ containing only the rows and columns of the variables that are observed. For any binary vectors $\w$ and $\tilde{\w}$,  $\bb{\Sigma}_{k}^{(\w| \tilde{\w})}$ denotes the submatrix of $\bb{\Sigma}_{k}$ containing only the rows and columns of the observed variables indicated by $\w$ and $\tilde{\w}$ respectively. By convention, an observation consisting exclusively of missing cells (i.e.\ $\w_{g,i} = \bb 0$) has $\det (\bb{\Sigma}_k^{(\w_{g,i})}) = 1$, the squared Mahalanobis distance $(\x_{g,i}^{(\w_{g,i})} - \bb\mu_k^{(\w_{g,i})})' (\bb\Sigma_{k}^{(\w_{g,i})})^{-1} (\x_{g,i}^{(\w_{g,i})} - \bb\mu_k^{(\w_{g,i})}) = 0$, and $\varphi(\x_{g,i}^{(\w_{g,i})}; \bb{\mu}_k^{(\w_{g,i})}, \bb{\Sigma}_k^{(\w_{g,i})}) = 1$ where  $ \varphi(\x_{g,i}; \bb{\mu}_k, \bb{\Sigma}_k) $ denotes the multivariate normal density with mean $\bb{\mu}_k$ and covariance $\bb{\Sigma}_k$ of an observation $\x_{g,i}$. Finally, superscripts $(\mathbf{1}-\w)$ indicate missing cells instead of observed ones, $\{j: w_{g,ij} = 0,  j =1, \ldots, p\}$.

\subsection{cellMG-GMM: A Penalized Observed Likelihood Estimator}
\label{subsec:objective}

The parameters of the MG-GMM that need to be estimated are the mixture probabilities $\bb{\pi} = (\pi_{g, k} )_{g,k = 1}^N$, the sets of group-specific mean vectors $\bb{\mu} = ( \bb{\mu}_k)_{k = 1}^N$, and scale parameters $ \bb{\Sigma} = ( \bb{\Sigma}_k)_{k = 1}^N$. To simultaneously estimate these MG-GMM parameters and detect the outliers, hence estimate $\bb{W}$, we use a penalized observed likelihood approach.

We consider the \textit{observed likelihood} (\citealp{Dempster1977} and \citealp{Little2019} for the Gaussian model) which  removes the missing values from the likelihood estimation, in combination with a \textit{penalty term} on the number of flagged cellwise outliers; similar in spirit to \cite{Raymaekers2023} for cellwise robust covariance estimation and \cite{zaccaria2024} for cellwise robust (standard) GMMs. We propose the following \textit{observed penalized log-likelihood} $\operatorname{Obj}(\bb{\pi}, \bb{\mu}, \bb{\Sigma}, \bb{W})$ for the MG-GMM model, namely
\begin{align}
      \sum_{g = 1}^N \sum_{i = 1}^{n_g} \left[-2\ln \left(\sum_{k = 1}^N \pi_{g, k} \varphi\left(\x_{g,i}^{(\w_{g,i})}; \bb{\mu}_k^{(\w_{g,i})}, \bb{\Sigma}_{reg,k}^{(\w_{g,i})}\right)\right) + \sum_{j = 1}^{p} q_{g, ij} (1-w_{g, ij})  \right], \label{eq:oob} 
\end{align}
subject to the constraints
\begin{align}
    &\sum_{k = 1}^N \pi_{g, k}  = 1,  \quad \pi_{g,g} \geq \alpha \geq 0.5 & \forall g = 1, \ldots, N  \label{eq:ob_piinequality}\\
       &\sum_{i = 1}^{n_g} w_{g, ij} \geq h_g & \forall j= 1, \ldots, p, \forall g = 1, \ldots, N \label{eq:ob_Wconstraint}  \\ 
     &\bb{\Sigma}_{reg,k} = (1-\rho_k)\bb{\Sigma}_k + \rho_k \bb{T}_k & \forall k = 1, \ldots, N . \label{eq:ob_covreg}
\end{align}

Our estimator, the cellMG-GMM, is then obtained as the minimizer of $\operatorname{Obj}(\bb{\pi}, \bb{\mu}, \bb{\Sigma}, \bb{W})$. The first part of Objective~\eqref{eq:oob} is the observed likelihood of each observation $\x_{g,i}$ given the missingness pattern in $\w_{g,i}$. The second part contains the penalty term which discourages flagging too many cells as outlying. Flagging a cell of an observation $x_{g,ij}$ costs a value of $q_{g, ij}$ in the objective function. Intuitively, a cell $x_{g,ij}$ will be flagged as outlying iff its inclusion worsens the log-likelihood more than the cost of flagging it; this to reduce overflagging. We compute the constants $q_{g, ij}$  in Section~\ref{subsec:algorithm_parameters}; the idea is to flag a cell as outlying iff its squared standardized residual is atypically large, as measured by a $\chi^2$-quantile.

Regarding the constraints, Equation~\eqref{eq:ob_piinequality} originates from the MG-GMM introduced in Section~\ref{subsec:model}. Equation~\eqref{eq:ob_Wconstraint} limits the number of cells that can be flagged per variable $j$ and group $g$. Although one could require at least half the cells per group to be used ($h_g \geq \lceil 0.5n_g\rceil$), this may cause instabilities when estimating covariances if two variables have no overlapping observed cells \citep{Raymaekers2023}. We therefore impose $h_g = \lceil0.75n_g \rceil$ throughout, allowing at most $25\%$  flagged cells per variable $j$ and group $g$. Finally, Equation~\eqref{eq:ob_covreg} enforces regularization on all group-specific covariance matrices: each is a convex combination, with  factor $\rho_k > 0$,  of the group-specific covariance matrix  $\bb{\Sigma}_k$ and a diagonal matrix $\bb{T}_k$ containing  univariate robust scales for group $k$. This regularization is similar in spirit to the MRCD of \cite{Boudt2020}. The proposed values for $\rho_k$ and $\bb T_k$ are detailed in Section~\ref{subsec:algorithm_parameters}.

\section{Algorithm}
\label{sec:algorithm}

We propose a two-step algorithm to solve Problem~\eqref{eq:oob} and obtain the cellMG-GMM estimator. The W-step minimizes over $\bb{W}$ and the Expectation Minimization (Maximization) \citep[EM, ][]{Dempster1977, McLachlan2008} step minimizes over $(\bb{\pi}, \bb{\mu}, \bb{\Sigma})$. While our algorithmic implementation is, overall, similar to \citet{Raymaekers2023}, the EM-step requires careful adaptation to the 
MG-GMM model set-up. Given initial starting values (see Appendix~\ref{subsec:appendix_initialvalues}), we iteratively repeat the W-
and the EM-step until convergence.

\subsection{W-Step}
We update the matrix $\bb{W}$ in the $(\tau +1)$-th step while keeping the mixture parameters at their current values, namely $\hat{\bb{\pi}}^\tau = (\hat{\pi}_{g, k}^\tau )_{g,k = 1}^N$, $\hat{\bb{\mu}}^\tau = (\hat{\bb{\mu}}_k^\tau)_{k = 1}^N,$ and $  \hat{\bb{\Sigma}}^\tau = ( \hat{\bb{\Sigma}}_k^\tau )_{k = 1}^N$. To minimize the objective function Equation~\eqref{eq:oob} with respect to $\bb{W}$, denote the new pattern by $\tilde{\bb{W}}$ which we initialize at $\tilde{\bb{W}} = \hat{\bb{W}}^\tau$. We now modify $\tilde{\bb{W}}$ variable by variable. For a given variable $j$, we aim to obtain a new missingness pattern for the $j$th variable across all groups $g$ and observations $i$. To this end, we calculate the difference in the objective 
\begin{align*}
    \Delta_{g, ij} =& -2  \ln \left(\sum_{k = 1}^N \hat{\pi}_{g, k}^\tau \varphi\left(\x_{g,i}^{(\leftindex_1 {\tilde{\w}}_{g,i})}; {\bb{\hat{\mu}}_k^\tau}^{(\leftindex_1 {\tilde{\w}}_{g,i})}, {\bb{\hat{\Sigma}}_{reg,k}^{\tau}}^{(\leftindex_1 {\tilde{\w}}_{g,i})}\right)\right)  \\
    &+ 2\ln \left(\sum_{k = 1}^N \hat{\pi}_{g, k}^\tau \varphi\left(\x_{g,i}^{(\leftindex_0 {\tilde{\w}}_{g,i})}; {\bb{\hat{\mu}}_k^\tau}^{(\leftindex_0 {\tilde{\w}}_{g,i})}, {\bb{\hat{\Sigma}}_{reg,k}^\tau}^{(\leftindex_0 {\tilde{\w}}_{g,i})}\right)\right)  - q_{g, ij},
\end{align*}
between $\tilde{w}_{g,ij} = 1$, when including the cell in the estimation (denoted as $\leftindex_1 {\tilde{\w}}_{g,i}$), and   $\tilde{w}_{g,ij} = 0$, when flagging it as outlying (denotes as $\leftindex_0 {\tilde{\w}}_{g,i}$). 
If $\Delta_{g, ij} \leq  0$ for $h_g$ or more observations, the minimum is attained by setting the corresponding $\tilde{w}_{g,ij} $ to 1 and the others to 0; otherwise
the minimum is attained by setting $\tilde{w}_{g,ij}$ to 1 for those $h_g$ observations with the smallest $ \Delta_{g, ij}$ and the others to 0.
We update 
$\bb{W}$ by starting with variable $j=1$ and then consecutively cycling through the remaining variables, finally resulting in the updated $\hat{\bb{W}}^{\tau+1} = \tilde{\bb{W}}$.

\subsection{EM-Step}
Given the new missingness pattern $\hat{\bb{W}}^{\tau+1}$, we minimize Objective~\eqref{eq:oob} for incomplete data, hence we carry out an EM-step to update the parameters of the mixture model. To this end, we extend the EM-based algorithm for standard GMMs of
\citet{Eirola2014} to the multi-group GMM setting, thereby incorporating the additional Constraints~\eqref{eq:ob_piinequality} and~\eqref{eq:ob_covreg}. More details and derivations are provided in Appendix~\ref{subsec:appendix_EM}.

\subsection{Convergence of the Algorithm}
Pseudo-code for the algorithm is compactly presented in 
Algorithm~\ref{alg:main} of Appendix~\ref{subsec:app_pseudo}.
The algorithm iterates between the W-step and EM-step until the maximal absolute change in any entry of the covariance matrices, $\max_{k,j,j'} |{{\hat{\Sigma}}_{reg,k, jj'}^{\tau}} - {{\hat{\Sigma}}_{reg,k, jj'}^{\tau+1}}|$, is smaller than $\epsilon_{conv} = 10^{-4}$.

Since the regularization of the covariance matrices acts on the maximization step of the EM-algorithm, the same argumentation as in Proposition 6 from \citet{Raymaekers2023} can be applied to show that each W-step and EM-step reduce the objective function or leave it unchanged while fulfilling all constraints.
The algorithm thus converges; we verified that convergence was achieved in all simulations and applications.

\subsection{Choice of Hyperparameters}
\label{subsec:algorithm_parameters}
Objective function~(\ref{eq:oob}) depends on the hyperparameters $q_{g,ij}$, $\rho_k$, and $\bb T_k$. 

The penalty weights $q_{g,ij}$ 
need to be set for each group $g$, observation $i$ and variable $j$. To this end, we extend the choice of the penalty weights considered by \cite{Raymaekers2023} for cellwise-robust estimation of the MCD to the multi-group GMM setting.
Given initial estimates $\hat{\bb{\pi}}^0, \hat{\bb{\mu}}^0$, $\hat{\bb{\Sigma}}^0$ and $\hat{\bb{W}}^0$, we calculate the probabilities $\hat{t}_{g,i,k}^{0} $ 
(see Appendix A.2 for details)
and use a weighted penalty parameter for each observation, namely
   $q_{g, ij} = \chi_{1,0.99}^2 + \ln(2\pi) + \sum_{k = 1}^N \hat{t}_{g,i,k}^{0} \ln(C_{k,j}^0)$,
where $\chi_{1,0.99}^2$ denotes the $99$-th quantile of the chi-square distribution with one degree of freedom and $C_{k,j}^0 = 1/({\bb{\hat{\Sigma}}_{reg, k}^{0}})^{-1}_{jj}$. 

Regarding Constraint~\eqref{eq:ob_covreg}, we choose a diagonal matrix $\bb T_k$ consisting of robust univariate scale estimates 
for observations from group $k$, $\bb T_k = \operatorname{diag}(\hat{\sigma}_{k,1}, \ldots, \hat{\sigma}_{k,p})$. 
We use the univariate MCD estimator applied to each variable separately. 
To specify $\rho_k$, which regulates
the amount of regularization, 
we set it as small as possible and such that the condition  number fulfills $\rho_k \bb T_k + (1-\rho_k){\bb{\hat{\Sigma}}_{k}^{0}} \leq \kappa_k$ for an initial estimate ${\bb{\hat{\Sigma}}_{k}^{0}}$ and $\kappa_k = \max (1.1\operatorname{cond} \bb T_k, 100)$. We hereby opt for a condition number of $100$ for each covariance, but the factor $1.1$ allows for multivariate data input if the condition number of $\bb T_k$ is high.

\section{Theoretical Properties}
\label{sec:theory}
The study of theoretical properties such as the breakdown point (BP) in cluster and finite mixture model settings is complicated since the addition of a single outlying point can make the parameter estimation of at least one of the mixture components break down \citep{Hennig2004}.
It is therefore common to analyze the BP under a general assumption of well-clustered data in an idealized setting, as introduced in \cite{Hennig2004} for the rowwise contamination paradigm. In Section \ref{subsec:bdp_mixture}, we first extend this idealized setting to the cellwise contamination paradigm, which is of general interest in cluster and finite mixture settings.
In Section \ref{subsec:bdp_grouped}, we then specifically derive the BP of the cellMG-GMM estimator of the multi-group GMM model.

\subsection{Cellwise Breakdown in an Idealized Setting}
\label{subsec:bdp_mixture}

We consider the cellwise outlier paradigm \citep{Alqallaf2009} where data are assumed to be initially generated from a certain distributional model, after which some individual cells are contaminated.
To study cellwise outlyingness in mixture model settings, the idealized setting of well-clustered data in \cite{Hennig2004}, developed for the rowwise outlier paradigm, does not sufficiently separate the clusters under the cellwise outlier paradigm. Indeed, under cellwise contamination, 
the removal of a subset of variables could still lead to cluster overlap, see Figure~\ref{fig:bdp_a} for an intuitive illustration; the notation used in the figure is formalized below. The idealized setting should thus be adapted to cluster separation across all subsets, see Figure~\ref{fig:bdp_b}, which is equivalent to a separation in each variable.

\begin{figure}[t]
    \centering
        \begin{subfigure}[b]{0.46\textwidth}
            \begin{tikzpicture}
            
            \draw[->, thick] (-3,0) -- (2.5,0) node[anchor=west] {};
            \draw[->, thick] (0,-2.5) -- (0,2.5) node[anchor=south] {};
            
            \draw[thick, red] (-0.9,1) ellipse (0.8 and 0.5);
            
            \draw[thick, blue] (1,1) ellipse (0.7 and 0.95);
    
            \node at (-1,1) {\(A_m^1\)};
            \node at (1,1) {\(A_m^2\)};
            
            \draw[->, thick] (-2.1,1) -- (-3.1,1) node[anchor=south east] {};
            \node[rotate=0, anchor = south, scale=0.7] at (-2.5,1) {\(m \rightarrow \infty\)};
            
            \draw[->, thick] (1.85,1) -- (2.7,1);
            \node[rotate=0, anchor = south, scale=0.7] at (2.2,1) {\(m \rightarrow \infty\)};
            
            \filldraw [black] (0.5,-1) circle (2pt);
            \node at (0.5,-1) [below right] {$ \bb{y}_{2,m} 
            $}; 
            
            \filldraw [black] (-1.5,-0.75) circle (2pt);
            \node at (-1.5,-0.75) [above right] {$ \bb{y}_{1,m}$};
            
            \draw[->, thick] (-1.5,-1) -- (-1.5,-2) node[anchor=south east] {};
            \node[rotate = 90, anchor = south, scale=0.7] at (-1.5,-1.4) {\(m \rightarrow \infty\)};
            
            \end{tikzpicture}
        \caption{Non-ideal under cellwise paradigm: Clusters $A_m^1$, $A_m^2$ and $y_{2, m}$ are not separated along the second axis. The points $\bb{y}_{1, m}$ and $\bb{y}_{2, m}$ are outlying, but not separated along the first axis. Outlier $\bb{y}_{1, m}$ is infinitely far away from the clusters, but outlier $\bb{y}_{2, m}$ remains steady for $m \rightarrow \infty$.}
        \label{fig:bdp_a}
        \end{subfigure}
        \hfill
        \begin{subfigure}[b]{0.46\textwidth}
            \begin{tikzpicture}
            
            \draw[->, thick] (-3,0) -- (2.5,0) node[anchor=west] {};
            \draw[->, thick] (0,-2.5) -- (0,2.5) node[anchor=south] {};
            
            \draw[thick, red] (-1.5,-1) ellipse (1 and 0.5);
            
            \draw[thick, blue] (1,1.5) ellipse (0.7 and 1);
            
            \draw[->, thick] (-2.1,-1.6) -- (-2.8,-2.3) node[anchor=south east] {};
            \node[rotate=45, anchor = south, scale=0.7] at (-2.4,-1.9) {\(m \rightarrow \infty\)};
            
            \draw[->, thick] (1.7,2.2) -- (2.5,3);
            \node[rotate=45, anchor = south, scale=0.7] at (2.1,2.6) {\(m \rightarrow \infty\)};
            
            \node at (-1.5,-1) {\(A_m^1\)};
            \node at (1,1.5) {\(A_m^2\)};
            
            \filldraw [black] (0.5,-1) circle (2pt);
            \node at (0.5,-1) [below right] {$ \bb{y}_{2,m} 
            $}; 
            
            \filldraw [black] (-1.5,1) circle (2pt);
            \node at (-1.5,1) [below right] {$ \bb{y}_{1,m}$};
            
            \draw[->, thick] (-1.7,0.95) -- (-3, 0.46) node[anchor=south east] {};
            \node[rotate = 20, anchor = south, scale=0.7] at (-2.2,0.8) {\(m \rightarrow \infty\)};
    
            \draw[dashed] (-1.5,1) -- (-1.5,-0.5);
            
            \end{tikzpicture}
            \caption{Ideal under cellwise paradigm:
            Clusters $A_m^1$, $A_m^2$ are well-separated. Point $\bb{y}_{1,m} \in B_m^1$ is only outlying 
            along the second axis
            (i.e.\ $\w(\bb{y}_{1, m})= (1, 0)$) and its non-outlying part originates from the 1st cluster (indicated by the dashed line). Point $\bb{y}_{2,m}$ is steady and outlying in both directions (i.e.\ $\w(\bb{y}_{2, m}) = \bb{0}$).}    
            \label{fig:bdp_b}
        \end{subfigure}
    \caption{
    Non-ideal setting with overlapping clusters in panel (a) versus ideal setting with well-separated clusters under the
    cellwise outlier paradigm in panel (b). Arrows indicate the direction of each cluster or outlier sequence.}
    \label{fig:bdp}
\end{figure}

Formally and following the ideas of \cite{Hennig2004},
let $s \geq 2$ be the number of clusters, and $\n_1 < \n_2 < \ldots < \n_s = \n \in \mathbb{N}$. 
Consider a sequence of clusters $\left(\mathcal{X}_m\right)_{m \in \mathbb{N}}$. For each
$m$-th part of the sequence, the data~$\mathcal{X}_m$ are clustered into $s$ clusters $A_m^1 =  \{ \bb{x}_{1, m}, \ldots, \bb{x}_{\n_1, m} \}, \ldots, A_m^s =  \{ \bb{x}_{\n_{s-1} +1, m}, \ldots, \bb{x}_{\n_s, m} \}$, with 
 $\mathcal{X}_m = \bigcup_{l = 1}^s A_m^l$ and $\bb{x}_{i, m} = (x_{i1, m}, \ldots, x_{ip, m})$ for $i = 1, \ldots, \n$.
The sequence of well-separated clusters $\left(\mathcal{X}_m\right)_{m \in \mathbb{N}}$ is considered 
ideal when the distances between observations of the same cluster are bounded by a constant $b < \infty$,
\begin{align}
    \max_{1 \leq l \leq s} \max \{ |{x}_{i'j, m} - {x}_{ij, m} |: \bb{x}_{i', m}, \bb{x}_{i, m} \in A_m^l, j = 1, \ldots, p\} < b \quad \forall m \in \mathbb{N}, \label{eq:withincluster_cell}
\end{align}
and observations from different clusters are increasingly far away, thereby enforcing
\begin{align}
    \lim_{m \rightarrow \infty} \min \{ |x_{i'j, m} - x_{ij, m}|:  \bb{x}_{i', m} \in A_m^l, \bb{x}_{i, m} \in A_m^h, h \neq l, j = 1, \ldots, p \} = \infty. \label{eq:betweencluster_cell}
\end{align}
We now
add cellwise outliers $\mathcal{Y}_m =\{ \bb{y}_{1, m}, \dots,\bb{y}_{\r, m} \}$, such that $0 \leq \r_1 \leq \ldots \leq \r_s = \r $ and 
  $  B_m^1 = \{ \bb{y}_{1, m}, \ldots, \bb{y}_{\r_1, m} \}, \ldots, B_m^s = \{ \bb{y}_{\r_{s-1} +1, m}, \ldots, \bb{y}_{\r_s, m} \}.$ 
For each 
observation $\bb{y}_{i, m}$, there exists a $\w(\bb{y}_{i, m}) \in \{0, 1\}^p$ indicating 
outlying cells by $w(\bb{y}_{i, m})_{ j} = 0$ and non-outlying cells by $w(\bb{y}_{i, m})_{ j} = 1$. The non-outlying part 
should originate from one of the constructed clusters,
\begin{align*}
    \max_{1 \leq l \leq s} \max \{ |y_{i'j, m} - x_{ij, m} |:& \bb{x}_{i, m} \in A_m^l, \bb{y}_{i', m} \in B_m^l, \\& j=1, \ldots, p \text{ with }w(\bb{y}_{i', m})_{ j} = 1
    \} < b \quad \forall m \in \mathbb{N}, 
\end{align*}
and the outlying part
should be infinitely far away from all other outlying cells and clusters,
\begin{align}
    \lim_{m \rightarrow \infty} \min \{ |y_{i'j, m} - x_{ij, m}|:  \bb{x}_{i, m} \in \mathcal{X}_m, \bb{y}_{i', m} \in \mathcal{Y}_m, w(\bb{y}_{i', m})_{ j} = 0\} = \infty, \label{eq:outlier_clusters_cell} \\
    \lim_{m \rightarrow \infty} \min \{ |y_{i'j, m} - y_{ij, m}|:  \bb{y}_{i', m}, \bb{y}_{i, m} \in \mathcal{Y}_m, i \neq i',  w(\bb{y}_{i', m})_{ j} = 0\} = \infty. \label{eq:outlier_outlier_cell} 
\end{align}
The breakdown of an estimator $\hat{E}$ can then be defined in a relative fashion, thereby relating its behavior acting over $\mathcal{X}_m$  and over $\mathcal{X}_m \cup \mathcal{Y}_m$ for large values of $m$.
Location breakdown for a cluster $l$ occurs, if 
    $|| \bb{\hat{\mu}}_l(\mathcal{X}_m ) - \bb{\hat{\mu}}_{k}(\mathcal{X}_m \cup \mathcal{Y}_m) ||_2 \rightarrow \infty$ for all $k = 1, \ldots, N$,
where $|| \cdot ||_2$ denotes the Euclidean norm.
A covariance estimator of a cluster $l$ would implode (explode) if $\lambda_p(\bb{\hat{\Sigma}}_l(\mathcal{X}_m)) \rightarrow 0$ ($\lambda_1(\bb{\hat{\Sigma}}_l(\mathcal{X}_m)) \rightarrow \infty$) and $\lambda_p(\bb{\hat{\Sigma}}_l(\mathcal{X}_m \cup \mathcal{Y}_m)) \nrightarrow 0$ ($\lambda_1(\bb{\hat{\Sigma}}_l(\mathcal{X}_m \cup \mathcal{Y}_m)) \nrightarrow \infty$) or vice versa, where $\lambda_1$ and $\lambda_p$ denote the largest and  smallest eigenvalue, respectively.
The weight estimator $\hat{\pi}_l$ of a cluster $l$ breaks down if $\hat{\pi}_l \in \{0, 1\}$, i.e., whenever at least one cluster is empty. 
Finally, the cellwise additive BP is 
defined as
\begin{align*}
    \epsilon^* (\hat{E}) = \min \left\{\frac{ \max_{j = 1, \ldots, p} \sum_{i = 1}^{\r} (1-w(\bb{y}_{i, m})_{ j}) }{\n +\r }: \hat{E}\text{ breaks down} \right\},
\end{align*}
where $\sum_{i = 1}^{\r} (1-w(\bb{y}_{i, m})_{ j})$ denotes the number of contaminated cells for variable~$j$.

\subsection{Cellwise Breakdown of cellMG-GMM}
\label{subsec:bdp_grouped}

To obtain
the BP of the cellMG-GMM estimator of the multi-group GMM,
we assume $N$ 
well-separated underlying clusters and outliers constructed
as described in Section~\ref{subsec:bdp_mixture}.
All observations $\mathcal{X}_m  \cup \mathcal{Y}_m$, clean or contaminated, are partitioned into groups $\Z^1_m, \ldots, \Z^N_m$ of size $n_1 + r_1, \ldots, n_N + r_N$ (where $n_g$ is the number of clean and $r_g$ 
the number of added, contaminated observations of group $g$) by a function $\tilde{g}: \mathcal{X}_m \bigcup \mathcal{Y}_m \rightarrow \{1, \ldots, N \}$, thus $\mathcal{Z}_m = \bigcup_{g = 1}^N \Z^g_m = \mathcal{X}_m \bigcup \mathcal{Y}_m$. 
We assume that for each group $g$ a certain fraction $\tilde{\alpha}_g$ 
of its $n_g$ observations and $r_g$ added outliers are from cluster~$g$, 
\begin{align}
    \frac{|\{\x: \x \in A^g_m, \tilde{g}(\x) = g\}|}{n_g} \geq \tilde{\alpha}_g, \quad \frac{|\{\bb{y}: \bb{y} \in B^g_m, \tilde{g}(\bb{y}) = g\}|}{r_g} \geq \tilde{\alpha}_g, \nonumber
\end{align}
thus, reflecting the major distribution per group. An illustration 
for a fictitious ideal data set is shown in Figure~\ref{fig:illus_groupeddata}. Each column block corresponds to a group, each column within a block to a variable and each row  to an observation.  The first row block per group includes the clean observations, the second block the added and possibly contaminated observations. The cell color 
indicates either clean cells belonging to the ideal group (red, violet, green)  the observation originates from, or outlying cells in gray. For each group, the majority of both clean and contaminated observations comes from the main cluster. 
Cellwise contamination can affect single cells (e.g.\ group 2), all cells of certain variables (e.g.\  group 1, fully gray column for variables 2 and 4) and/or whole observations (e.g.\  group 3, fully contaminated first observation/row). Note that the latter observation is assigned to $B^3_m$, but it could stem from any other group too.

\begin{figure}[t]
    \centering
        \begin{tikzpicture}
        \node at (0, 3.2) {$Z^1_m $};
        \matrix[nodes={draw, minimum size=0.5cm, anchor=center},column sep=0mm, row sep=0mm] (m) {
            \node[fill=group1] {}; & \node[fill=group1] {}; & \node[fill=group1] {}; & \node[fill=group1] {}; & \node[fill=group1] {};\\
            \node[fill=group1] {}; & \node[fill=group1] {}; & \node[fill=group1] {}; & \node[fill=group1] {}; & \node[fill=group1] {};\\
            \node[fill=group1] {}; & \node[fill=group1] {}; & \node[fill=group1] {}; & \node[fill=group1] {}; & \node[fill=group1] {};\\
            \node[fill=group1] {}; & \node[fill=group1] {}; & \node[fill=group1] {}; & \node[fill=group1] {}; & \node[fill=group1] {};\\
            \node[fill=group1] {}; & \node[fill=group1] {}; & \node[fill=group1] {}; & \node[fill=group1] {}; & \node[fill=group1] {};\\
            \node[fill=group2] {}; & \node[fill=group2] {}; & \node[fill=group2] {}; & \node[fill=group2] {}; & \node[fill=group2] {};\\
            \node[fill=group3] {}; & \node[fill=group3] {}; & \node[fill=group3] {}; & \node[fill=group3] {}; & \node[fill=group3] {};\\
            \node[fill=group3] {}; & \node[fill=group3] {}; & \node[fill=group3] {}; & \node[fill =group3] {}; & \node[fill=group3] {};\\
            \\[2mm] 
            \node[fill=group1] {}; & \node[fill=outs] {}; & \node[fill=group1] {}; & \node[fill=outs] {}; & \node[fill=outs] {};\\
            \node[fill=group1] {}; & \node[fill=outs] {}; & \node[fill=outs] {}; & \node[fill=outs] {}; & \node[fill=group1] {};\\
            \node[fill=outs] {}; & \node[fill=outs] {}; & \node[fill=outs] {}; & \node[fill=outs] {}; & \node[fill=group2] {};\\
        };
        
        \draw[decorate,decoration={brace,mirror,amplitude=2mm,raise=1mm}]  (-1.3,2.9) -- (-1.3,0.4);
        \node at (-1.6, 1.65) [anchor=east] {$A^1_m $};
        
        \draw[decorate,decoration={brace,mirror,amplitude=2mm,raise=1mm}]  (-1.3,0.3) -- (-1.3,-0.1);
        \node at (-1.6, 0.1) [anchor=east] {$A^2_m $};
        
        \draw[decorate,decoration={brace,mirror,amplitude=2mm,raise=1mm}]  (-1.3,-0.2) -- (-1.3,-1.2);
        \node at (-1.6, -0.7) [anchor=east] {$A^3_m $};
        
        \draw[decorate,decoration={brace,mirror,amplitude=2mm,raise=1mm}]  (-1.3,-1.4) -- (-1.3,-2.4);
        \node at (-1.6, -1.9) [anchor=east] {$B^1_m $};
        
        \draw[decorate,decoration={brace,mirror,amplitude=2mm,raise=1mm}]  (-1.3,-2.5) -- (-1.3,-2.9);
        \node at (-1.6, -2.7) [anchor = east] {$B^2_m $};
        
        \end{tikzpicture}
        \hfill
        \begin{tikzpicture}
        \node at (0, 3.2) {$Z^2_m $};
        \matrix[nodes={draw, minimum size=0.5cm, anchor=center},column sep=0mm, row sep=0mm] (m) {
            \node[fill=group2] {}; & \node[fill=group2] {}; & \node[fill=group2] {}; & \node[fill=group2] {}; & \node[fill=group2] {};\\
            \node[fill=group2] {}; & \node[fill=group2] {}; & \node[fill=group2] {}; & \node[fill=group2] {}; & \node[fill=group2] {};\\
            \node[fill=group2] {}; & \node[fill=group2] {}; & \node[fill=group2] {}; & \node[fill=group2] {}; & \node[fill=group2] {};\\
            \node[fill=group2] {}; & \node[fill=group2] {}; & \node[fill=group2] {}; & \node[fill=group2] {}; & \node[fill=group2] {};\\
            \\[2mm] 
            \node[fill=outs] {}; & \node[fill=group2] {}; & \node[fill=group2] {}; & \node[fill=outs] {}; & \node[fill=group2] {};\\
            \node[fill=group3] {}; & \node[fill=outs] {}; & \node[fill=outs] {}; & \node[fill=group3] {}; & \node[fill=group3] {};\\
            \\[25mm];\\
        };

        \draw[decorate,decoration={brace,mirror,amplitude=2mm,raise=1mm}]  (-1.3,2.9) -- (-1.3,0.9);
        \node at (-1.6, 1.9) [anchor=east] {$A^2_m $};
        
        \draw[decorate,decoration={brace,mirror,amplitude=2mm,raise=1mm}]  (-1.3,0.6) -- (-1.3,0.2);
        \node at (-1.6, 0.4) [anchor=east] {$B^2_m $};
        
        \draw[decorate,decoration={brace,mirror,amplitude=2mm,raise=1mm}]  (-1.3,0.1) -- (-1.3,-0.35);
        \node at (-1.6, -0.2) [anchor=east] {$B^3_m $};       
        \end{tikzpicture}
        \hfill
        \begin{tikzpicture}
        \node at (0, 3.2) {$Z^3_m $};
        \matrix[nodes={draw, minimum size=0.5cm, anchor=center},column sep=0mm, row sep=0mm] (m) {
           \node[fill=group3] {}; & \node[fill=group3] {}; & \node[fill=group3] {}; & \node[fill=group3] {}; & \node[fill=group3] {};\\
            \node[fill=group3] {}; & \node[fill=group3] {}; & \node[fill=group3] {}; & \node[fill=group3] {}; & \node[fill=group3] {};\\
            \node[fill=group1] {}; & \node[fill=group1] {}; & \node[fill=group1] {}; & \node[fill=group1] {}; & \node[fill=group1] {};\\
            \\[2mm] 
            \node[fill=outs] {}; & \node[fill=outs] {}; & \node[fill=outs] {}; & \node[fill=outs] {}; & \node[fill=outs] {};\\
            \node[fill=group3] {}; & \node[fill=group3] {}; & \node[fill=group3] {}; & \node[fill =group3] {}; & \node[fill=group3] {};\\
            \node[fill=group3] {}; & \node[fill=outs] {}; & \node[fill=outs] {}; & \node[fill=group3] {}; & \node[fill=outs] {};\\
            \node[fill=group1] {}; & \node[fill=outs] {}; & \node[fill=outs] {}; & \node[fill=group1] {}; & \node[fill=group1] {};\\
            \node[fill=outs] {}; & \node[fill=group2] {}; & \node[fill=group2] {}; & \node[fill=outs] {}; & \node[fill=group2] {};\\
            \\[15mm];\\
        };

        \draw[decorate,decoration={brace,mirror,amplitude=2mm,raise=1mm}]  (-1.3,2.9) -- (-1.3,1.95);
        \node at (-1.6, 2.4) [anchor=east] {$A^3_m $};
        
        \draw[decorate,decoration={brace,mirror,amplitude=2mm,raise=1mm}]  (-1.3,1.85) -- (-1.3,1.4);
        \node at (-1.6, 1.65) [anchor=east] {$A^1_m $};
        
        \draw[decorate,decoration={brace,mirror,amplitude=2mm,raise=1mm}]  (-1.3,1.1) -- (-1.3,-0.3);
        \node at (-1.6, 0.5) [anchor=east] {$B^3_m $};
        
        \draw[decorate,decoration={brace,mirror,amplitude=2mm,raise=1mm}]  (-1.3,-0.4) -- (-1.3,-0.8);
        \node at (-1.6, -0.6) [anchor=east] {$B^1_m $};
        
        \draw[decorate,decoration={brace,mirror,amplitude=2mm,raise=1mm}]  (-1.3,-0.95) -- (-1.3,-1.35);
        \node at (-1.6, -1.2) [anchor = east] {$B^2_m $};
        \end{tikzpicture}
    \caption{Fictitious ideal data set with $N = 3$ groups (column blocks),  $p=5$ (variables in columns per block), and respectively 8, 4, 3 clean observations and  3, 2, 5 added and possibly contaminated observations in the rows, across groups 1-3. Cell colors (red-violet-green) indicate from which group each observation originates, or outlyingness (gray).}
    \label{fig:illus_groupeddata}
\end{figure}

For the ideal scenario, we assume that at least $\left\lceil \frac{n_g + r_g +1}{2} \right\rceil$ observations from group $g$ are from cluster $g$ and thus, $\tilde{\alpha}_g$ is restricted to 
$(n_g + r_g) \tilde{\alpha}_g \geq \left\lceil \frac{n_g +r_g+1}{2} \right\rceil $ for all $g = 1, \ldots, N$.
In terms of estimation, this implies that for any variable $j$ and group $g$ there always exists at least one observation in $\Z^g_m$ originating from cluster $g$ which is observed for variable~$j$.

Cellwise BP of the cellMG-GMM estimator of the multi-group GMM 
is defined as the minimal fraction of outlying cells for at least one variable in at least one group 
needed to lead to breakdown of one 
estimator~$\hat{E}$, 
\begin{align*}
    \epsilon_{MG-GMM}^* (\hat{E}) = \min_{g= 1, \ldots, N} \min \left\{
    \frac{ \max_{j = 1, \ldots, p} \sum_{\bb{y} \in \Z^g_m \cap \mathcal{Y}_m} (1-w(\bb{y})_{ j}) }
    {n_g + r_g}:
    \hat{E}\text{ breaks down} \right\}.
\end{align*}
Theorem \ref{theorem:breakdownpoint} presents the breakdown point results; all proofs are in Appendix~\ref{sec:appendix_theory}.

\begin{theorem}
\label{theorem:breakdownpoint}
Given the idealized setting (Section \ref{subsec:bdp_mixture} and extensions thereof in Section~\ref{subsec:bdp_grouped}) and fixed $\rho_k  > 0, \bb{T}_k \succ 0$, the following breakdown results hold under the cellwise contamination paradigm:
    \begin{itemize}
        \item[a.] Assuming that $h_g \geq \lceil 0.75 (n_g + 1)\rceil$ for all $g = 1, \ldots, N$, the location BP is at least $\min_g \{(n_g-h_g+1)/n_g\}$.
        \item[b.] For the covariance estimator, the implosion BP is $1$.
        \item[c.] For the covariance estimator, the explosion BP is at least $\min_g \{(n_g-h_g+1)/n_g\}$.
        \item[d.] For the covariance estimator, the explosion BP is exactly $\min_g \{(n_g-h_g+1)/n_g\}$, when the location estimator did not break down. 
        \item[e.] The weight BP is $1$.
    \end{itemize}
\end{theorem}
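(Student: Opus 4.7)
The plan is to split the five claims into three groups according to which mechanism protects (or fails to protect) the estimator: parts (b) and (e) follow essentially for free from constraints~\eqref{eq:ob_piinequality} and~\eqref{eq:ob_covreg}; parts (a) and (c) both reduce to showing that, under the idealized setting and below the stated fraction of outliers, the minimizer of~\eqref{eq:oob} is forced to flag every contaminated cell; and part (d) requires the reverse construction, exhibiting a contamination at exactly the threshold that provokes explosion.

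First I would dispatch (b) and (e). For implosion (b), constraint~\eqref{eq:ob_covreg} gives $\bb{\Sigma}_{reg,k} \succeq \rho_k \bb{T}_k$, so $\lambda_p(\bb{\Sigma}_{reg,k}) \geq \rho_k \lambda_p(\bb{T}_k) > 0$ uniformly in the data; the regularized covariance can never have a vanishing eigenvalue, whatever contamination is added, and the implosion BP equals $1$. For the weights (e), constraint~\eqref{eq:ob_piinequality} forces $\pi_{g,g} \geq \alpha \geq 1/2$ for every group $g$, so each cluster $l$ appears in its own group's mixture with weight at least $\alpha$; no component can be emptied or saturated by any amount of contamination, giving weight BP equal to $1$.

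For parts (a) and (c), the workhorse lemma is: in the idealized setting and for every $m$ large enough, if the number of contaminated cells in each variable $j$ of each group $g$ is at most $n_g - h_g$, then a feasible missingness pattern exists in which every contaminated cell is flagged and, as $m \to \infty$, the objective~\eqref{eq:oob} evaluated at any competing pattern that leaves some contaminated cell unflagged diverges. The divergence argument is the engine: including a cell $x_{ij,m}$ whose value tends to $\pm \infty$ (while the rest of the row lies close to some cluster centre by~\eqref{eq:withincluster_cell}) makes the Mahalanobis term inside the logarithm in~\eqref{eq:oob} grow quadratically in $x_{ij,m}$, since (b) already bounds the eigenvalues of $\bb{\Sigma}_{reg,k}^{-1}$ away from zero, whereas flagging that cell costs only the finite penalty $q_{g,ij}$. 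Hence the minimizer of~\eqref{eq:oob} coincides, in the limit, with one that flags all contaminated cells. By the within-cluster bound~\eqref{eq:withincluster_cell} and the majority assumption $(n_g + r_g)\tilde{\alpha}_g \geq \lceil (n_g + r_g + 1)/2 \rceil$, the surviving clean observations then suffice in every group and every variable to pin the cluster mean and variance at bounded values, yielding both the location bound (a) and the explosion bound (c); the translation to the stated form $(n_g - h_g + 1)/n_g$ uses $h_g \geq \lceil 0.75(n_g+1)\rceil$ and a counting check that up to $n_g + r_g - h_g$ cells may be flagged per variable among the $n_g + r_g$ total.

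Part (d) is the delicate direction: I must exhibit a contamination at the boundary that actually explodes. The construction is to place exactly $n_g - h_g + 1$ outliers in group $g$, all contaminated in the same variable $j$, with $y_{ij,m} \to \infty$ while their remaining coordinates stay within the cluster-$g$ ellipsoid. No feasible $\bb{W}$ can flag all of them without violating $\sum_i w_{g,ij} \geq h_g$, so at least one outlying cell survives in the estimation, and I would show that this single surviving cell forces the $(j,j)$-entry of $\hat{\bb{\Sigma}}_g$ to diverge because it contributes an unbounded squared deviation from the (non-broken-down) mean. The main obstacle, and the step I expect to require most care, is ruling out that the EM reassigns the contaminated row's responsibility entirely to some other component $k \neq g$ in a way that neutralizes its contribution to $\hat{\bb{\Sigma}}_g$; this is precisely where the hypothesis that the location estimator has not broken down, combined with the lower bound $\pi_{g,g} \geq \alpha$ and the fact that the row's non-contaminated coordinates still fit cluster $g$, must be used to guarantee a strictly positive posterior weight on cluster $g$ and hence explosion along direction $e_j$.
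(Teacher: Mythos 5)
Parts (b) and (e) are fine and coincide with the paper's (one-line) arguments. The genuine gap is in your ``workhorse lemma'' for (a) and (c). You justify the divergence of the Mahalanobis term of an unflagged contaminated cell by saying that (b) ``bounds the eigenvalues of $\bb\Sigma_{reg,k}^{-1}$ away from zero.'' That is backwards: Constraint~\eqref{eq:ob_covreg} gives $\bb\Sigma_{reg,k}\succeq\rho_k\bb T_k$, which bounds $\lambda_p(\bb\Sigma_{reg,k})$ from \emph{below}, i.e.\ $\lambda_1(\bb\Sigma_{reg,k}^{-1})$ from above; it gives no upper bound on $\lambda_1(\bb\Sigma_{reg,k})$, so $\lambda_p(\bb\Sigma_{reg,k}^{-1})$ may tend to $0$ --- which is exactly the explosion scenario you are trying to rule out in (c), so the argument is circular. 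An unflagged diverging cell could a priori be absorbed either by letting one covariance explode (the quadratic form then stays bounded; only the log-determinant penalizes this, via the bound $\ln\det\hat{\bb\Sigma}_{reg,k}^{(\w)}\geq\ln\max_j\hat{\Sigma}_{reg,k,jj}^{(\w)}+(p-1)\ln\tilde b_k(\rho_k,\bb T_k)$ used in the paper's Corollary~\ref{theorem:corollary}) or by letting one of the $N$ means chase the outlier (excluded only after showing that all $N$ means are needed for the $N$ separated clusters, Corollary~\ref{theorem:corollary} parts b2--b3). So ``the minimizer flags every contaminated cell'' cannot be obtained the way you propose; the paper in fact never proves it, and does not need it: for (c) it only exhibits a finite-objective candidate that flags the truly outlying cells and invokes part b1 (covariance explosion forces the objective to explode) to rule explosion out at the optimum.

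For (a) you also skip the step where $h_g\geq\lceil 0.75(n_g+1)\rceil$ actually enters. Location breakdown means $\|\hat{\bb\mu}_l(\mathcal X_m)-\hat{\bb\mu}_k(\mathcal X_m\cup\mathcal Y_m)\|_2\to\infty$ for \emph{all} $k$, so non-breakdown requires one component of the contaminated fit to be close to a component of the clean fit in all $p$ coordinates simultaneously, although the two fits may flag different cells in different variables. The paper obtains this by showing that for every \emph{pair} of variables some clean cluster-$g$ observation is left unflagged in both variables under both fits (this is what $0.75(n_g+1)$ buys), then chaining the pairs $(1,2),(2,3),\dots$ and using uniqueness of the nearby mean (Corollary~\ref{theorem:corollary} b2/b3) to conclude the same component index works for all coordinates; your ``counting check'' does not address this matching problem. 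Finally, your construction for (d) matches the paper's, but the step you flag as delicate --- securing positive posterior weight on the own component $g$ --- is both unnecessary and likely unprovable (another bounded component can dominate the responsibilities exponentially). The correct move is a pigeonhole over all components: the responsibilities of the surviving outlying row sum to one, so some $k^*$ receives weight at least $1/N$; since the location did not break down, all $N$ means stay near the clusters and hence far from the surviving cell in variable $j^*$, and the EM stationarity equations (the paper uses the location equation; your covariance-update variant also works) then force $\hat\Sigma_{reg,k^*,j^*j^*}\to\infty$. Breakdown of \emph{some} covariance, not necessarily group $g$'s, is all the statement requires.
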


Theorem \ref{theorem:breakdownpoint} quantifies theoretical robustness guarantees of the location, covariance and weight estimators against a certain percentage of adversarial contamination. While the covariance estimator is robust against $(n_g-h_g+1)/n_g$ outliers per group $g$ for $h_g$ up to $0.5n_g$, in special cases the location estimator could break down immediately, if the additional restriction on $h_g$ is not fulfilled.

\section{Simulations}
\label{sec:simulations}

We assess the performance of cellMG-GMM in
five main scenarios: 
1) $N = 2$ balanced groups (our basic scenario), 
2) $N = 5$ balanced groups, 
3) $N = 2$ unbalanced groups,
4) $N = 2$ balanced groups  with increasing singularity issues, and 
5) high-dimensional $N = 2$ balanced groups. 
Scenarios 1) and 2) are described in detail in the main text, results for the remaining scenarios are available in Appendix~\ref{app:si:results} and summarized at the end of the results section.

In Section~\ref{subsec:simulations_datageneration}, we detail the generation of clean and contaminated data. 
Benchmark methods and evaluation criteria are summarized in Section~\ref{subsec:simulations_compete} and~\ref{subsec:simulations_eval} respectively. The results of the simulation study are discussed in Section~\ref{subsec:simulations_results}.

\subsection{Data Generation}
\label{subsec:simulations_datageneration}

{\it Clean data.} Data are generated according to the 
multi-group GMM 
in Equation~\eqref{eq:model}, for dimensions $p \in \{10, 20, 60\}$. 
For $N \in\{2, 5\}$ groups, we vary the mixture between the groups indicated by the parameter $\pi_{diag} \in\{0.75, 0.9\}$. The mixture probabilities are then given by $\pi_{gg} = \pi_{diag}$ and $\pi_{g,k} = \frac{1-\pi_{diag}}{N-1}$ for $g, k = 1, \ldots, N, g \neq k$. 
Each group $g$ consists of $n_g \in \{30, 40, 50, 100\}$ 
clean observations drawn with probability $\pi_{g,k}$ from $\mathcal{N}(\bb\mu_k, \bb\Sigma_k)$.

The covariance matrices of the mixture distribution are constructed
based on the approach of \cite{Agostinelli2015} (ALYZ) to obtain well-conditioned correlation matrices. We allow for more variation of the variances and stop the iterative procedure early, specifically when the trace of a covariance is bounded by $[p/2, 2p]$. 

We consider two different mean structures. First, we take
$\bb \mu_k = \bb 0$. 
Secondly, we consider a
more realistic scenario with different means, 
thereby applying the concept of c-separation \citep{Dasgupta1999} that gives a notion of how strongly the distributions overlap. We assume significant overlap ($0.5$-separated clusters) due to an underlying smooth variable and construct the means inductively, starting with $\bb\mu_1 = \bb 0_p$. Given $\bb\mu_1, \ldots, \bb\mu_{k-1}$ a new vector $\bb\mu_{tmp}$ is drawn from $\mathcal{N}(\bb 0_p, \bb I_p)$. To ensure a certain level of separation and overlap, we set the next distributional mean to $\bb\mu_k = t^* (\bb{\mu}_{tmp} - \frac{1}{k-1}\sum_{l = 1}^{k-1} \bb{\mu}_l) + \frac{1}{k-1}\sum_{l = 1}^{k-1} \bb{\mu}_l$, where $t^*$ is the minimal positive value that fulfills 
    $||\bb\mu_l - \bb\mu_k||_2 \geq  0.5 \sqrt{p \max(\lambda_1(\bb \Sigma_l), \lambda_1(\bb \Sigma_k))}$
for all $l = 1, \ldots, k-1$, with equality for at least one $l$. 

{\it Contamination.} For each group, a percentage $\epsilon_{cell} = 10\%$ of random cells per variable is contaminated as in \citet{Raymaekers2023}. Given an observation from group $g$ which is drawn from distribution $k$ and where a subset of variables indexed with $\mathcal{J}$ should be contaminated, cells indexed by $\mathcal{J}$ are replaced with 
\begin{align*}
    \bb \mu_{k,\mathcal{J}} + \bb v_{k ,\mathcal{J}} \frac{\gamma_{cell} \sqrt{|\mathcal{J}|}}{\sqrt{\bb v_{k ,\mathcal{J}}' \bb\Sigma_{k ,\mathcal{J}}^{-1}\bb v_{k ,\mathcal{J}}}}.
\end{align*}
Here, the subscript $\mathcal{J}$ denotes the part of the vectors/matrices corresponding to the indexed variables, and $\bb v_{k ,\mathcal{J}}$ denotes the eigenvector with the smallest eigenvalue of $ \bb\Sigma_{k ,\mathcal{J}}$. The parameter $\gamma_{cell} \in \{2, 6, 10\}$ controls the strength of the outlyingness of contaminated cells with respect to $\bb \mu_k$.
For $\gamma_{cell}=2$ the cellwise outliers are hard to distinguish from regular cells, while $\gamma_{cell}=10$ produces clear outliers which are easier to detect for robust methods, and very influential to non-robust procedures.

\subsection{Benchmarks}
\label{subsec:simulations_compete}

We compare  cellMG-GMM to six benchmarks (see Appendix \ref{app:benchmarks} for more details): 
\begin{description}
    
    \item [\textbf{sample}:] Sample covariance and mean per group 
    as non-robust, supervised benchmark where we use the word ``supervised'' to reflect knowledge of the group membership. 
    
    \item [\textbf{MRCD}:]  Rowwise robust, supervised covariance/location estimator of \cite{Boudt2020}, implemented in the R-package \texttt{rrcov} \citep{rrcov_CRAN},  
    and applied per group.

    \item [\textbf{cellMCD}:] Cellwise robust, supervised covariance/location estimator of \cite{Raymaekers2023}, implemented in the R-package \texttt{cellWise} \citep{cellWise_CRAN}, and applied per group.
    
    \item [\textbf{OC}:] Cellwise robust, supervised covariance estimator of \cite{Ollerer2015}, implemented in the R-package \texttt{pcaPP}~\citep{pcapp_CRAN}, and applied per group.
    
    \item [\textbf{ssMRCD}:] Rowwise robust, semi-supervised covariance/location estimator of \cite{Puchhammer2024}, implemented in the R-package \texttt{ssMRCD} \citep{ssMRCD_Cran}. 

     \item [\textbf{mclust}:] Non-robust, unsupervised basic finite GMM implemented 
    in the R-package \texttt{mclust} \citep{mclust_CRAN} 
    with the correct number of groups provided. 

    \item [\textbf{cellGMM}:]  Cellwise robust, unsupervised basic finite GMM of \cite{zaccaria2024} with R-code and suggested hyper-parameter setting available in their supplement.
\end{description}

\subsection{Evaluation Criteria}
\label{subsec:simulations_eval}
Given an estimated covariance $\hat{\bb \Sigma}_k$ by a particular method, the Kullback-Leibler divergence to the real covariance $\bb \Sigma_k$ is used as evaluation criterion to assess estimation accuracy,
\begin{align*}
    KL(\hat{\bb \Sigma}_k, \bb \Sigma_k) = \operatorname{tr}(\hat{\bb \Sigma}_k \bb \Sigma_k^{-1}) - p - \log\det (\hat{\bb \Sigma}_k \bb \Sigma_k^{-1}).
\end{align*}
For $N \geq 2$, the final performance metric is the average over all distributions, $KL = \frac{1}{N} \sum_{k = 1}^N KL(\hat{\bb \Sigma}_k, \bb \Sigma_k)$. 
The mean estimates $\hat{\bb \mu}_k$ and mixture probabilities $\hat{\bb \pi}$ are evaluated by the Mean Squared Error (MSE)
\begin{equation*}
    MSE(\hat{\bb \mu}_k, \bb \mu_k) = \frac{1}{p} \sum_{j = 1}^p (\mu_{kj} - \hat{\mu}_{kj})^2,  \ \ \ 
     MSE(\hat{\bb \pi}, \bb \pi) = \frac{1}{N^2} \sum_{g = 1}^N \sum_{k = 1}^N(\pi_{g, k} - \hat{\pi}_{g, k})^2,
\end{equation*}
and averaged over the groups for the mean, $MSE(\mu) = \frac{1}{N} \sum_{k = 1}^N MSE(\hat{\bb \mu}_k, \bb \mu_k)$.

Additionally, we measure the correctness of flagged cellwise outliers by the
standard recall, precision and F1-score. 
For outlier flagging, we only compare the cellMG-GMM to the 
cellMCD and the cellGMM, since these are the only 
benchmarks that flags cells as outlying.

\subsection{Results}
\label{subsec:simulations_results}

We focus on Scenarios 1 and 2 for $p=10$ and $n_g=100$ in the text, results for the other settings of $p$ and $n_g$ are available in Appendix~\ref{app:si:results}.
We summarize the main similarities and differences in the results across the other scenarios at the end of this section. 
Each simulation setting is repeated 100 times. 

\begin{figure}[t]
    \centering
    \includegraphics[width=0.9\textwidth, trim = {0cm 
    1cm
    0cm 0.6cm}, clip]{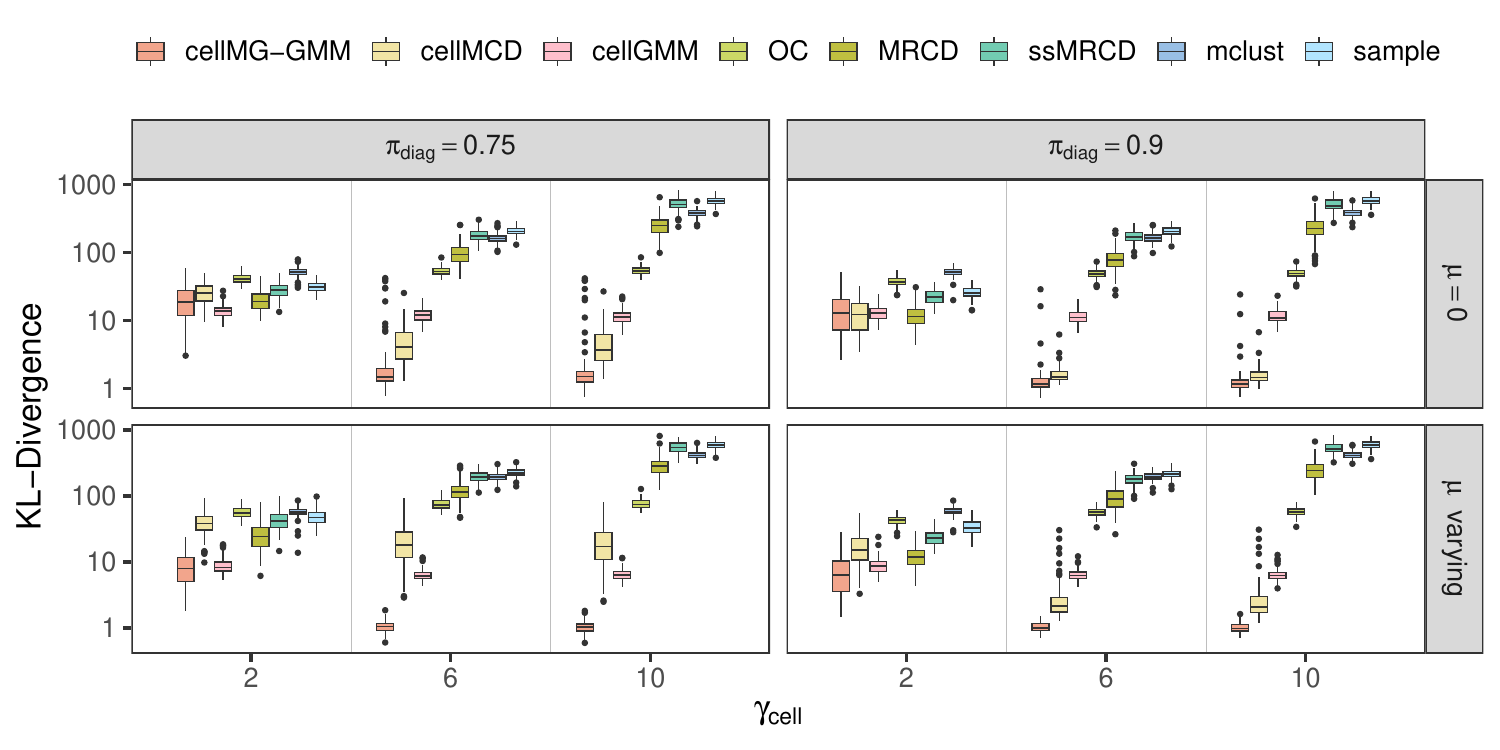}
    \includegraphics[width=0.9\textwidth, trim = {0cm 0cm 0cm 1.5cm}, clip]{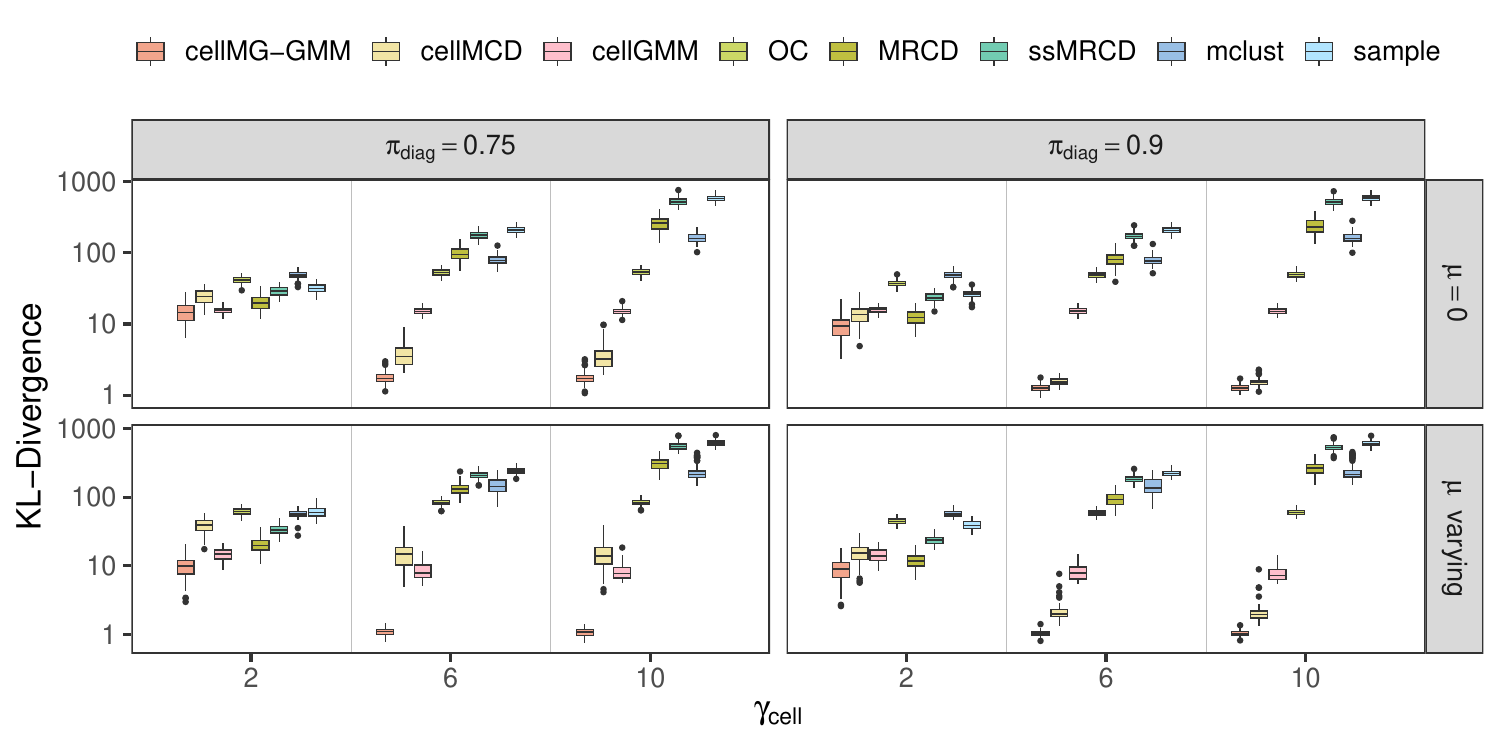}    
    \caption{KL-divergence for the basic balanced Scenario 1 with $N=2$ 
    (top) and Scenario 2 with $N=5$ (bottom), for varying strength of outlyingness $\gamma_{cell}$.}
    \label{fig:set1_A0X_sigma}
\end{figure}

We start with the basic balanced Scenario 1 with
$N = 2$ groups. 
Figure~\ref{fig:set1_A0X_sigma}, top panel,  
shows the KL-divergence for covariance estimation across all eight competing methods and a varying strength of outlyingness $\gamma_{cell}$.
Estimation accuracy results in terms of the group means are, qualitatively, similar and presented together with the results on the mixture probabilities of cellMG-GMM in Appendix~\ref{app:si:results}. The four subpanels differ regarding the coherency in the predefined groups. For example, observations of one group are very coherent for $\pi_{diag} = 0.9$ and $\mu = 0$ (top right panel) or less coherent for $\pi_{diag} = 0.75$ and varying $\mu$. 
Across all four coherency types, 
only the cellwise robust methods can manage outlying cells as $\gamma_{cell}$ increases, as expected. CellMG-GMM, cellMCD and cellGMM are the most reliable while OC 
is somewhat robust against an increase in the degree of cell outlyingness. 
When varying the group means (i.e.\ bottom row ``$\mu$ varying''),
especially cellMG-GMM maintains its good performance.
For cellMCD, non-coherency in the mean and covariance structures 
confuses the algorithm; its estimation accuracy and ability to correctly flag the outlying cells deteriorates, see Figure~\ref{fig:sim_balanced2_ALYZCOR_W} (top panel).
In comparison, the cellGMM benefits from less coherent groups due to more distinct clusters. However, cellGMM does not benefit from more clearly distinguished outliers, in contrast to cellMG-GMM and cellMCD.

\begin{figure}[t]
    \centering
    \includegraphics[width=0.9\textwidth, trim = {0cm 
    1cm
    0cm 0.6cm}, clip]{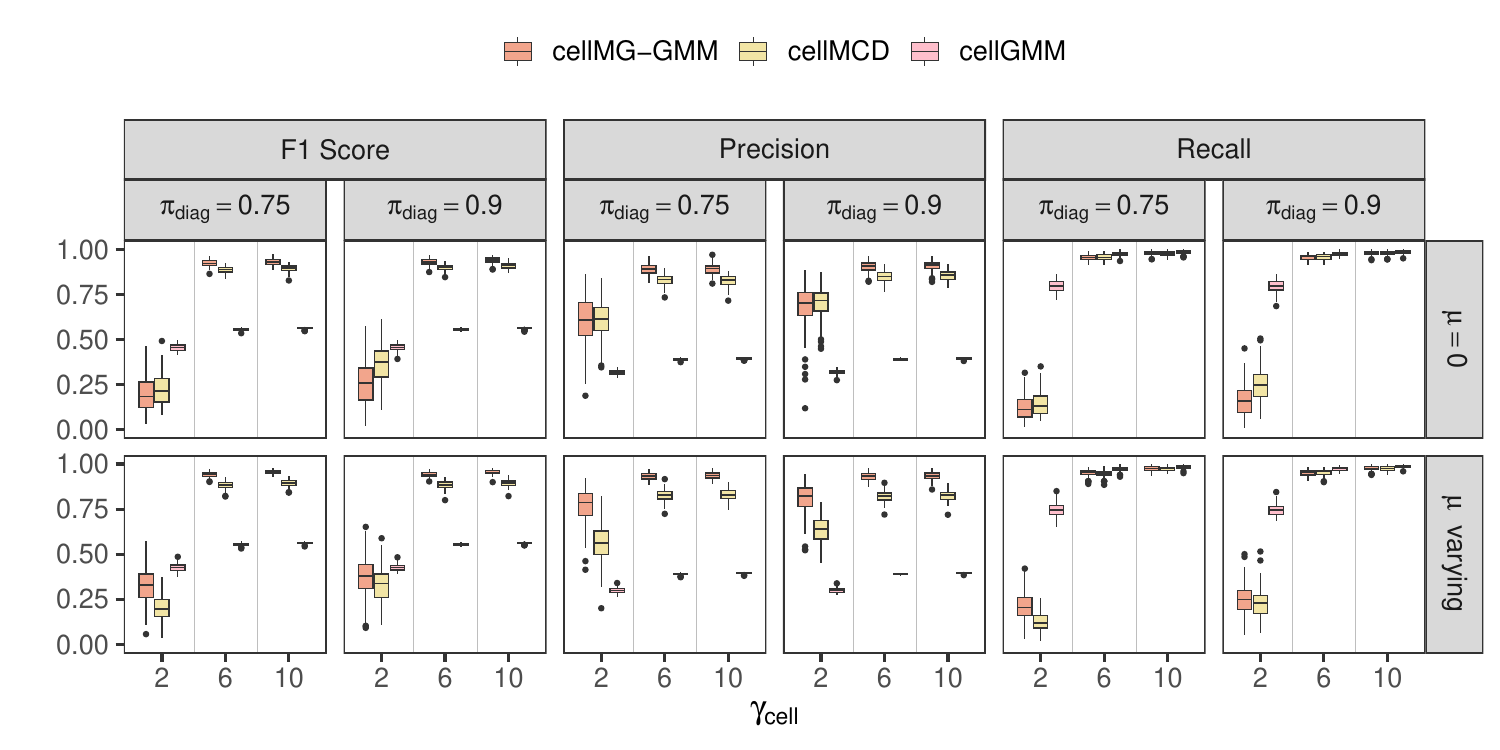}
    \includegraphics[width=0.9\textwidth, trim = {0cm 0.4cm 0cm 1.4cm}, clip]{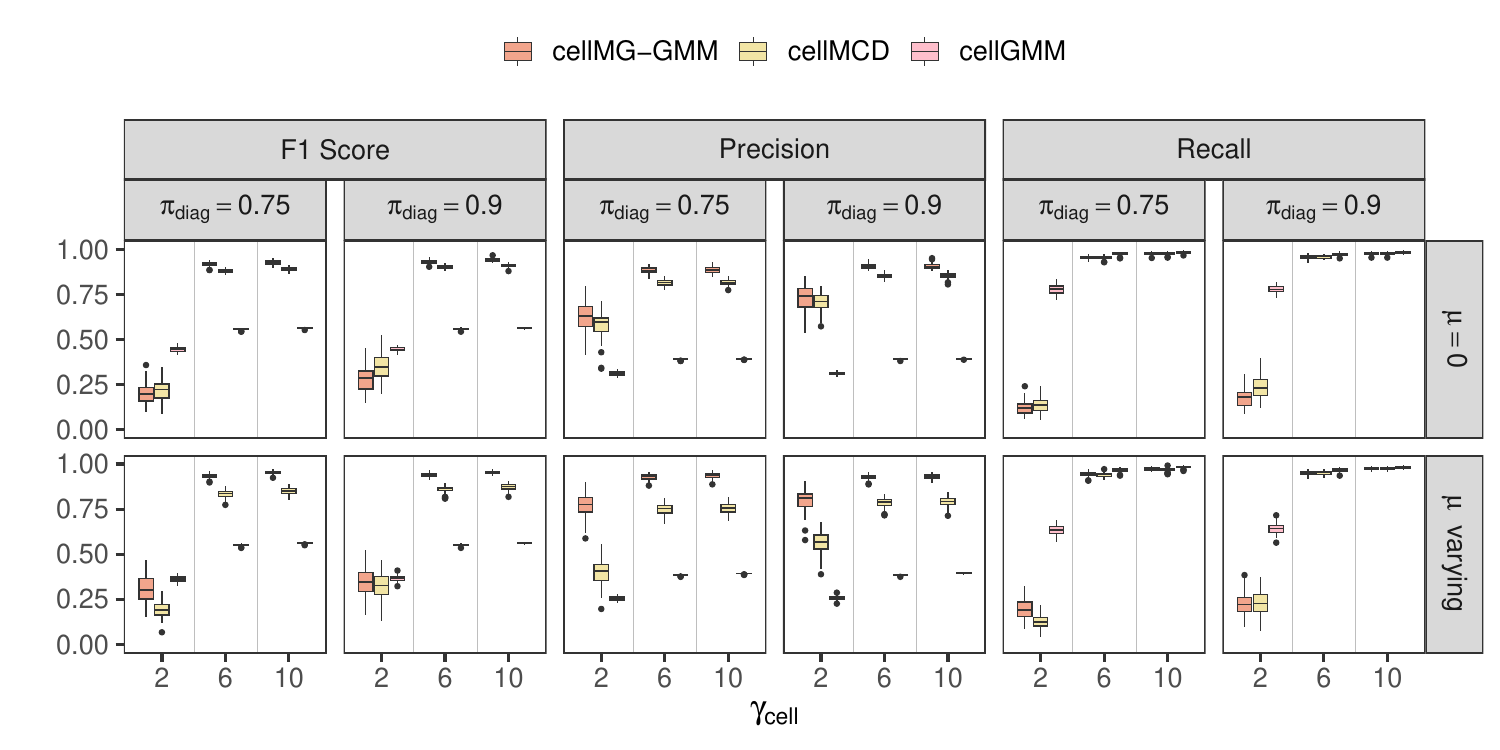}
    \caption{Performance of cellwise outlier detection evaluated by on precision, recall and F1-score for the basic balanced Scenario 1 with $N=2$ 
    (top) and 
    Scenario 2 with $N=5$ (bottom), for varying strength of outlyingness $\gamma_{cell}$.}
    \label{fig:sim_balanced2_ALYZCOR_W}
\end{figure}

In Scenario 2 with 
$N = 5$ groups (bottom panel in Figure~\ref{fig:set1_A0X_sigma}), 
we see similar but even more prominent patterns.
Methods that are not robust to cellwise outliers increasingly suffer with the degree of outlyingness. 
For varying $\mu$, the findings are similar to the basic setting, but we do see that cellMG-GMM performs better than cellMCD and cellGMM in the most and least coherent setting (top right and bottom left panel, respectively). 
The more groups are present among our considered scenarios, the better our proposal can leverage its strengths. 

With respect to the other 
scenarios (detailed results in Appendix~\ref{app:si:results}), the findings are, overall, qualitatively similar. 
The results 
in the unbalanced setting with $N = 2, p= 10, n_1 = 100$ and $ n_2 = 50$ (Scenario 3) are comparable to the balanced settings described above. When increasing the $p$-to-$n$-ratio ($N = 2$, $p = 20$, $n_1 = n_2 = 30$) in Scenario 4, we see that cellMCD and cellGMM struggle to flag cellwise outliers due to low 
estimation accuracy, thereby often delivering worse covariance estimates than the OC 
method. 
In the high dimensional Scenario 5 with $N = 2$, $ p = 60$, $ n_1 = n_2 = 40$, 
cellMG-GMM generally outperforms OC. 

In general, cellMG-GMM consistently performs well across all scenarios.
While it oftentimes performs comparable to cellMCD when $\mu = 0$, in realistic multi-group settings with varying group means,
it outperforms all  considered benchmarks.

\section{Applications}
\label{sec:application}
We demonstrate cellMG-GMM's practical advantages and versatility on diverse applications in
medicine (Section~\ref{subsec:darwin}), oenology (Section~\ref{subsec:wine}) and meteorology (Appendix \ref{app:weather}).

\subsection{Alzheimer Disease: Darwin Data}
\label{subsec:darwin}

Alzheimer is a non-curable neuro-degenerative disease which progresses over time, leading to cognitive impairment. To mitigate its
effects 
on affected patients and their loved ones, early diagnosis and treatment are essential. 
Previous research such as  \citet{Cilia2022} typically distinguishes between $N=2$ groups, namely healthy subjects and diagnosed Alzheimer patients, and train a classifier to discriminate between the groups. While the groups are established 
by an official diagnosis, some subjects can be on the verge to Alzheimer, thereby not yet being diagnosed or only recently. 
Then, a semi-supervised,
modeling approach, like MG-GMM, can  
analyze group intertwinings and highlight contributing factors.

We analyze the DARWIN (Diagnosis AlzheimeR WIth haNdwriting) data set \citep{Cilia2022}, available in the R-package \texttt{robustmatrix} \citep{Mayrhofer2024}, which contains handwriting samples from $n_1 = 85$ healthy subjects and $n_2 = 89$ patients with diagnosed Alzheimer disease (AD). Each subject was asked to execute 25 different handwriting tasks on a tablet from which 18 summary features where extracted: total time, air time, paper time, mean speed on paper, mean speed in air, mean acceleration on paper, mean acceleration on air, mean jerk on paper, mean jerk in air, mean of pressure, variance of pressure, generalization of the mean relative tremor (GMRT) on paper, GMRT in air, mean GMRT, number of pendowns, maximal x-extension, maximal y-extension and dispersion index; see 
\citet{Cilia2018} for more details. Similar to \citet{Mayrhofer2025}, we exclude the variables total time, mean GMRT and air time due to linear dependencies and unreliable measurements. The remaining variables are summarized over the  tasks by the median and median absolute deviation (mad), leading to 
$p = 30$ variables.

We apply the cellMG-GMM estimator (with $h_g = 0.75n_g$) and vary the parameter 
$\alpha \in \{1, 0.99, \ldots, 0.51 ,0.5\}$ to analyze how the 
groups become gradually more overlapping, since a decreasing $\alpha$ allows for more and more group re-assignments.
The left panel of Figure~\ref{fig:AD_residsd} presents the class probabilities for varying $\alpha$ for subjects whose probability $\hat{t}_{g, i, g}$ of being in their initial class $\hat{t}_{g, i, g}$ is lower than 50\% for at least one value of $\alpha$; hence, \textit{switching subjects}. 
A subset of 8 AD diagnosed patients and 2 healthy subjects (i.e.\ the bottom ones in each panel, as visible by the direct gray coloring as soon as $\alpha < 1$) move to the opposite group as soon as a switch is allowed, thereby indicating strong 
similarities 
with the opposite group. 

\begin{figure}
    \centering
    \includegraphics[width=0.95\textwidth, trim={0.6cm 0.7cm 0cm 0},clip]{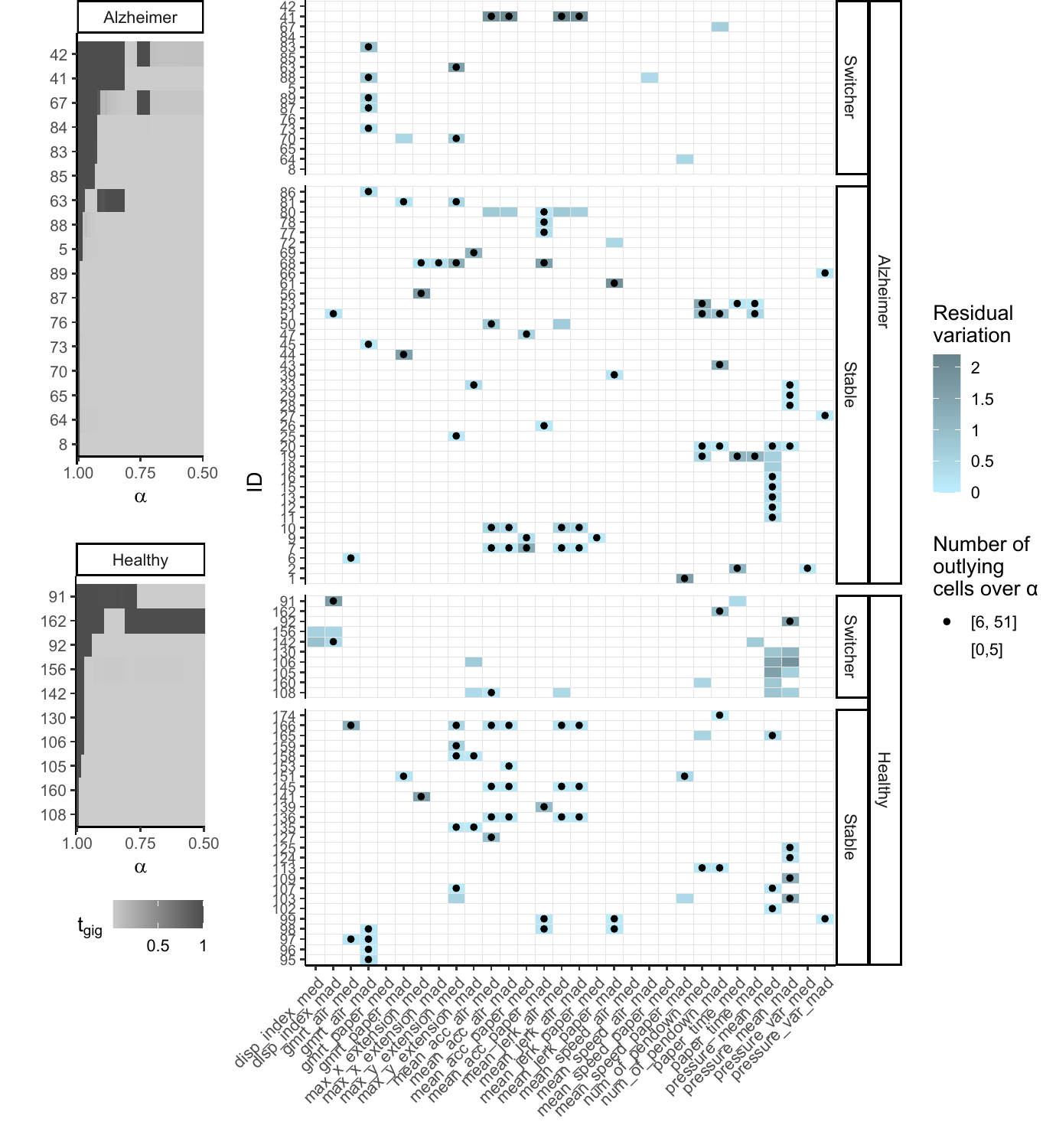}
    \caption{Left: Class probabilities $t_{g,i,g}$ for switching subjects per group (Alzheimer vs Healthy),  sorted by time of switching. Right: Data matrix with subjects (rows) and variables (columns), split by group  and sorted by switching and stable subjects within each group. Cell colors reflect the 
    standard deviation of residuals over $\alpha$. Dotted cells mark frequent outlyingness across different values of $\alpha$.}
    \label{fig:AD_residsd}
\end{figure}

The right panel of Figure~\ref{fig:AD_residsd} shows all cells of the $(n=84) \times (p=30)$ data matrix, including only the subjects for which at least one cell for one value of $\alpha$ is outlying. The subjects are split into Alzheimer patients and healthy subjects, and within each group the switching subjects are separated and sorted as in the left panel. 

Each cell of the data matrix is colored based on the variation of its standardized residuals,
\begin{align*}
    r_{g,ij} = \sum_{k = 1}^N \hat{t}_{g,i,k} \frac{x_{g,ij} - \hat{x}_{g,ij}^k}{\sqrt{
    \hat{\bb \Sigma}_{reg,k}^{(j|j)} - \hat{\bb \Sigma}_{reg,k}^{(j|\hat{\w}_{g,i})} \left(\hat{\bb \Sigma}_{reg,k}^{(\hat{\w}_{g,i}|\hat{\w}_{g,i})} \right)^{-1} \hat{\bb \Sigma}_{reg,k}^{(\hat{\w}_{g,i}|j)}
    }},
\end{align*}
over varying $\alpha$, 
where $\hat{x}_{g,ij}^k$ denotes the expected value of $x_{g,ij}$ 
assuming that it comes from distribution $k$ and using only unflagged cells $\hat{\w}_{g,i}$.
White cells indicate non-outlyingness across all $\alpha$.

Alzheimer patient 8 switches immediately to the healthy group without any change in residuals (i.e.\ no coloring). This patient is at the overlap of the two groups with respect to all variables, but it is relatively closer to the center of the healthy group.  Such a subject is likely to have an early diagnosis and low cognitive impairment.

Cells marked by a dot are outlying for several (i.e.\ 6 or more out of 51) values of $\alpha$, and  the cell color reflects the standard deviation of the residuals over varying $\alpha$.
Higher residual variability can occur for different reasons: (a) the subject switches to the other group, (b) the cell is identified as an outlier for particular values of $\alpha$, or both (a) and (b) occur.
The variables \texttt{pressure\_mean} (both median and mad) display many cells with high residual variability. Several of those cells are outlying (i.e.\ marked by a dot) as soon as the given diagnosis is no longer enforced, thereby revealing the inhomogeneity of these subjects with respect to the variables  \texttt{pressure\_mean}. 
There is, however, also a block of cells for the variables  \texttt{pressure\_mean} that is not outlying (i.e\ colored cells without dots). These subjects switch from the healthy to the AD group as the latter provides a better model fit. cellMG-GMM suggests that a closer inspection of the patients, possibly being in transition, and the variable \texttt{pressure\_mean} is needed since either unfavorable measurement conditions or other undiagnosed or progressive diseases affecting it could lead to this unusual behavior. 

\subsection{Wine Quality Data}
\label{subsec:wine}
We use a data set of \citet{Cortez2009}, available at the UCI Machine Learning Repository \citep{Wine2009} that consist of $p = 11$ physicochemical measurements, including fixed acidity, volatile acidity, citric acid, residual sugar, chlorides, free sulfur dioxide, total sulfur dioxide, density, pH-level, sulphates, and alcohol percentage, for $n = 4898$ samples of white \textit{vinho verde}, a popular Portuguese wine. Each wine was qualitatively graded from 0 (very bad) to 10 (excellent) by three different sensory assessors through blind tasting. The median of the three grades is reported as the variable quality.

Originally, \citet{Cortez2009} trained a Support Vector Machine classifier given the quality variable. 
We, in contrast, aim to leverage MG-GMM's flexibility to investigate how qualitative expert evaluations of wine are consistent with their quantitative chemical features.
We therefore partition the data into $N=3$ groups based on the quality assessment: the first group with low wine quality includes $n_1 = 1640$ wine samples with quality assessments 3 to 5, 
the second group with medium quality contains $n_2 = 2198$ samples with quality level 6, and the third group includes $n_3 = 1060$ good quality wine samples with quality assessments 7 to 10. 
Due to prominent skewness in multiple variables, we apply a robust transformation to each variable to achieve central normality \citep[see][]{Raymaekers2024}. We then apply the cellMG-GMM estimator with $h_g = 0.75n_g$ and $\alpha = 0.75$; taking $\alpha > 0.5$ stabilizes the estimation due to the low number of unbalanced groups and some incoherency within the groups.

\begin{figure}[t]
    \centering
    \includegraphics[width=1\textwidth]{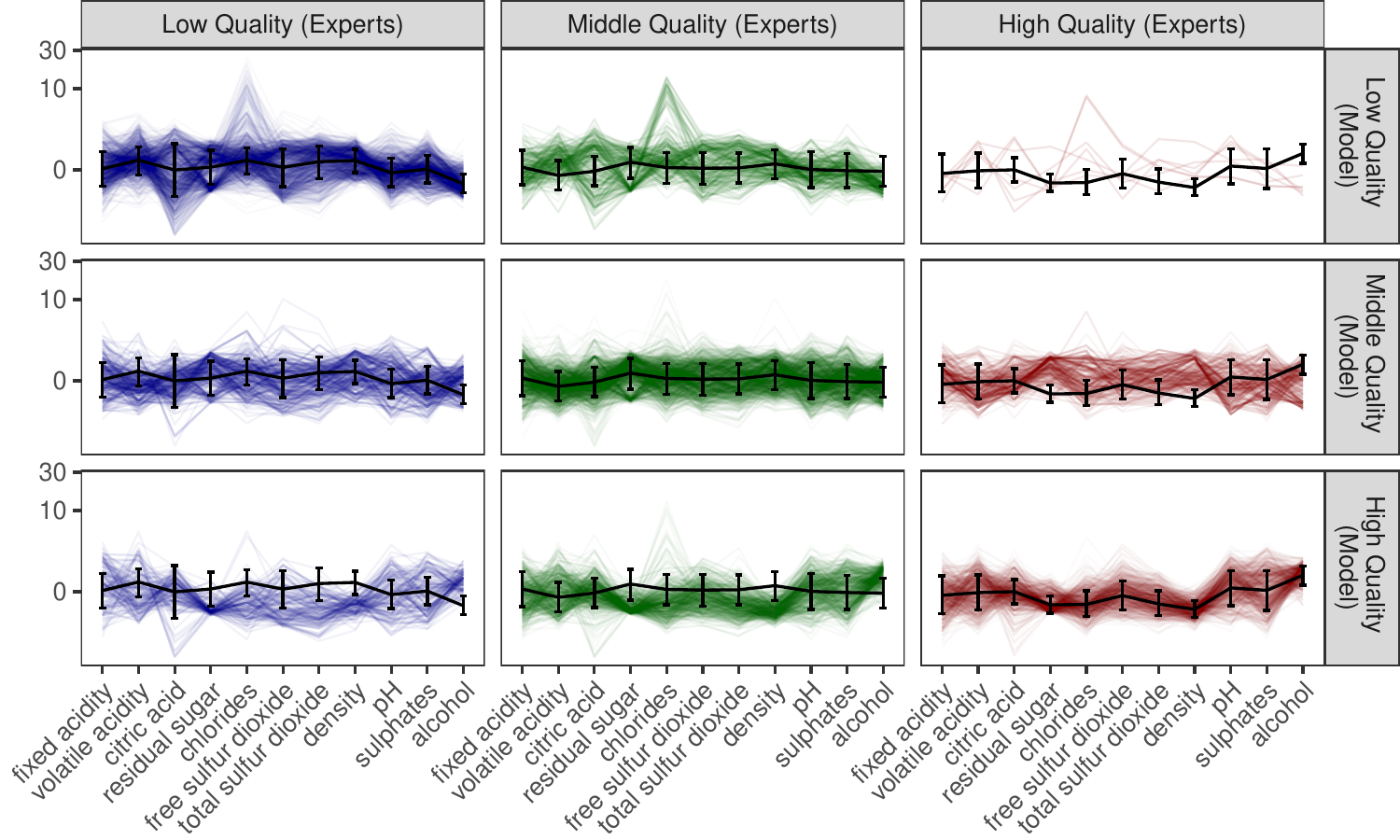}
    \caption{
    Discrepancies between the
    wine quality assessment of the experts (columns) and the model-based grouping (rows) based on the physicochemical features. Black lines show estimated location and standard deviation for expert groups, colored lines show wine measurements corresponding to each panel.}
    \label{fig:wine_white_pcp}
\end{figure}

The parallel coordinate plot in Figure~\ref{fig:wine_white_pcp}
highlights the discrepancies between the predefined expert labels (columns) and the model-based groupings (rows).
Diagonal panels highlight wine samples where both agree on their quality.
Panels below the main diagonal show wine samples that experts rate lower than their physicochemical measurements would suggest, and vice versa for the panels above the main diagonal.
Each panel includes the estimated location (solid black line) and standard deviation (black error bars) provided by the cellMG-GMM for the expert-proposed group; these are thus identical in each column. We notice a strong heterogeneity within each expert group. While the wine samples where experts and cellMG-GMM agree are quite coherent, clear structural differences are visible for the discrepant cases. 
The two bottom left panels show quantitatively good wines that are rated low by experts. They differ most prominently from less qualitative wines by low density and residual sugar while containing a relatively high amount of alcohol. On the contrary,  wines rated too high by experts (middle right panel) show adverse results for residual sugar, density and alcohol. 

Moreover, there are many cellwise outliers detected by cellMG-GMM that are also visible in the parallel coordinate plot. Especially the many outlying chloride values are noticeable, as well as the low citric acid values. Robustness against cellwise outliers, which cellMG-GMM provides, is thus key to get reliable estimates and to avoid clusters being dominated by one variable with many extreme values.

\section{Conclusion}
\label{sec:conclusion}
We propose a probabilistic multi-group Gaussian mixture model, MG-GMM, that accounts for expert or context-based group information and delivers 
(i) model-based groupings where observations may be flexibly reassigned to other groups based on data-driven evidence, and
(ii) outlier-robust moment estimates that can be one-to-one matched to the predefined groups. 
The combination of these features has not yet been offered by other methods.

To obtain the mixture parameter estimates and jointly identify cellwise outliers, we introduce cellMG-GMM, a penalized observed likelihood-based estimator for which we provide an EM-based algorithm that is carefully tailored towards the multi-group setting.
A key ingredient of cellMG-GMM is the parameter $\alpha$ that regulates the strictness of the initial group membership, or put alternatively the flexibility in terms of group reassignments. As 
$\alpha$ is varied, it can thus shed light on the transition dynamics of observations across groups. The parameter $\alpha$ hereby bridges the gap between separate group-specific parameter estimation with no reassignment ($\alpha = 1$) and a  (cellwise robust) yet standard GMM with a given number of clusters in the other extreme 
case ($\alpha = 0$); which we exclude since we assume that each pre-specified group has a main distribution assigned to it ($\alpha \geq 0.5$). The cellMG-GMM is therefore methodologically located between the cellMCD, applied to each group separately, and the cellGMM. 

A further theoretical contribution of our work is the introduction of an appropriate notion of breakdown in the multi-group setting with cellwise contamination. 
We describe  a novel idealized setting of well-clustered and cellwise-contaminated data for which the robustness properties can theoretically be evaluated and compared across  different methods. 
This idealized setting is of independent, general interest for cluster and finite mixture settings characterized by cellwise (instead of rowwise) contamination, and we directly extend  it to the multi-group GMM setting to  prove the breakdown properties of the cellMG-GMM.
The good robustness properties are confirmed in extensive simulation experiments.  

CellMG-GMM is applicable across many fields of research where assignments to pre-defined groups need to be viewed more flexibly, in a semi-supervised way. We  demonstrate the practical advantages of cellMG-GMM on versatile examples where the rich output produced by it allows for different interpretation angles. 
Future research might leverage the moment estimates delivered by cellMG-GMM for other prominent multivariate analyses like principal component analysis, discriminant analysis or graphical modeling.

\textbf{Acknowledgments:} We thank Jakob Raymaekers for comments provided on earlier versions of the paper. This work is
co-funded by the European Union (SEMACRET, Grant Agreement no. 101057741) and UKRI (UK Research and Innovation). Ines Wilms is supported by a grant from the Dutch Research Council (NWO), research
program Vidi under grant number VI.Vidi.211.032.

\section*{Disclosure statement}\label{disclosure-statement}

The authors have the no conflicts of interest to declare.

\section*{Data Availability Statement}\label{data-availability-statement}
The wine data are openly available in the UCI Machine Learning Repository at \linebreak \href{http://doi.org/10.24432/C56S3T}{http://doi.org/10.24432/C56S3T}. The DARWIN data set is available in the R-package \texttt{robustmatrix} \citep{Mayrhofer2024} and the weather data is available in the R-package \texttt{ssMRCD} \citep{ssMRCD_Cran}, both are openly available on CRAN.

\bibliographystyle{apalike}
\bibliography{biblio}

@article{Puchhammer2024,
  title={Spatially smoothed robust covariance estimation for local outlier detection},
  author={Puchhammer, Patricia and Filzmoser, Peter},
  journal={Journal of Computational and Graphical Statistics},
  volume={33},
  number={3},
  pages={928--940},
  year={2024},
  publisher={Taylor \& Francis}
}

@article{Boudt2020,
  title={The minimum regularized covariance determinant estimator},
  author={Boudt, Kris and Rousseeuw, Peter J and Vanduffel, Steven and Verdonck, Tim},
  journal={Statistics and Computing},
  volume={30},
  number={1},
  pages={113--128},
  year={2020},
  publisher={Springer}
}

@article{Raymaekers2023,
  title={The cellwise minimum covariance determinant estimator},
  author={Raymaekers, Jakob and Rousseeuw, Peter J},
  journal={Journal of the American Statistical Association},
  pages={1--12},
  year={2023},
  publisher={Taylor \& Francis}
}

@article{Hennig2004,
  title={Breakdown points for maximum likelihood estimators of location--scale mixtures},
  author={Hennig, Christian},
  journal={The Annals of Statistics},
  volume={32},
  number={1},
  pages={1313--1340},
  year={2004}
}

@article{Cuesta2008,
  title={Robust estimation in the normal mixture model based on robust clustering},
  author={Cuesta-Albertos, Juan and Matr{\'a}n, Carlos and Mayo-Iscar, Agustin},
  journal={Journal of the Royal Statistical Society Series B: Statistical Methodology},
  volume={70},
  number={4},
  pages={779--802},
  year={2008},
  publisher={Oxford University Press}
}

@article{Eirola2014,
  title={Mixture of {Gaussians} for distance estimation with missing data},
  author={Eirola, Emil and Lendasse, Amaury and Vandewalle, Vincent and Biernacki, Christophe},
  journal={Neurocomputing},
  volume={131},
  pages={32--42},
  year={2014},
  publisher={Elsevier}
}

@article{Agostinelli2015,
  title={Robust estimation of multivariate location and scatter in the presence of cellwise and casewise contamination},
  author={Agostinelli, Claudio and Leung, Andy and Yohai, Victor J and Zamar, Ruben H},
  journal={Test},
  volume={24},
  pages={441--461},
  year={2015},
  publisher={Springer}
}

@inproceedings{Dasgupta1999,
  title={Learning mixtures of {Gaussians}},
  author={Dasgupta, Sanjoy},
  booktitle={40th Annual Symposium on Foundations of Computer Science (Cat. No. 99CB37039)},
  pages={634--644},
  year={1999},
  organization={IEEE}
}

@article{Alqallaf2009,
  title={Propagation of outliers in multivariate data},
  author={Alqallaf, Fatemah and Van Aelst, Stefan and Yohai, Victor J and Zamar, Ruben H},
  journal={The Annals of Statistics},
  pages={311--331},
  year={2009},
  publisher={JSTOR}
}

@article{zaccaria2024,
  title={Cellwise outlier detection in heterogeneous populations},
  author={Zaccaria, Giorgia and Garc{\'\i}a-Escudero, Luis A and Greselin, Francesca and Mayo-{\'I}scar, Agust{\'\i}n},
  journal={Technometrics},
  volume={67},
  number={4},
  pages={643–654},
  year={2025},
  publisher={Taylor \& Francis}
}

@Article{Geosphere2022,
  author  = {{GeoSphere Austria}},
  journal = {https://data.hub.zamg.ac.at},
  year      = {2022},
}

@Manual{ssMRCD_Cran,
    title = {\texttt{ssMRCD}: Spatially Smoothed MRCD Estimator},
    year = {2025},
    author = {Patricia Puchhammer},
    note = {R package version 2.0.0}
}

@Manual{cellWise_CRAN,
    title = {\texttt{cellWise}: Analyzing Data with Cellwise Outliers},
    year = {2023},
    author = {Jakob Raymaekers and Peter J Rousseeuw and Wannes Van den Bossche and Mia Hubert},
    note = {R package version 2.5.3}
}

@Manual{mclust_CRAN,
    title = {\texttt{mclust}: {Gaussian} Mixture Modelling for Model-Based Clustering, Classification, and Density Estimation},
    year = {2024},
    author = {Chris Fraley and Adrian E Raftery and Luca Scrucca and T Brendan Murphy and Michael Fop},
    note = {R package version 6.6.1}
}

@Manual{rrcov_CRAN,
    title = {\texttt{rrcov}: Scalable Robust Estimators with High Breakdown Point},
    year = {2024},
    author = {Valentin Todorov},
    note = {R package version 1.7-6}
}

@article{Ollerer2015,
  title={Robust high-dimensional precision matrix estimation},
  author={{\"O}llerer, Viktoria and Croux, Christophe},
  journal={Modern nonparametric, robust and multivariate methods},
  pages={325--350},
  year={2015},
  publisher={Springer}
}

@article{Dempster1977,
  title={Maximum likelihood from incomplete data via the {EM} algorithm},
  author={Dempster, Arthur P and Laird, Nan M and Rubin, Donald B},
  journal={Journal of the Royal Statistical Society Series B: Statistical Methodology},
  volume={39},
  number={1},
  pages={1--22},
  year={1977},
  publisher={Wiley Online Library}
}

@book{McLachlan2008,
  title={The {EM} algorithm and extensions},
  author={McLachlan, Geoffrey J and Krishnan, Thriyambakam},
  year={2008},
  publisher={John Wiley \& Sons}
}

@article{Cilia2022,
  title={Diagnosing {Alzheimer’s} disease from on-line handwriting: A novel dataset and performance benchmarking},
  author={Cilia, Nicole D and De Gregorio, Giuseppe and De Stefano, Claudio and Fontanella, Francesco and Marcelli, Angelo and Parziale, Antonio},
  journal={Engineering Applications of Artificial Intelligence},
  volume={111},
  pages={104822},
  year={2022},
  publisher={Elsevier}
}

@article{Cilia2018,
  title={An experimental protocol to support cognitive impairment diagnosis by using handwriting analysis},
  author={Cilia, Nicole D and De Stefano, Claudio and Fontanella, Francesco and Di Freca, Alessandra Scotto},
  journal={Procedia Computer Science},
  volume={141},
  pages={466--471},
  year={2018},
  publisher={Elsevier}
}

@Manual{Mayrhofer2024,
    title = {robustmatrix: Robust Matrix-Variate Parameter Estimation},
    author = {Marcus Mayrhofer and Una Radojičić and Peter Filzmoser},
    year = {2024},
    note = {R package version 0.1.3}
}

@article{Mayrhofer2025,
  title={Robust covariance estimation and explainable outlier detection for matrix-valued data},
  author={Mayrhofer, Marcus and Radoji{\v{c}}i{\'c}, Una and Filzmoser, Peter},
  journal={Technometrics},
  number={just-accepted},
  pages={1--23},
  year={2025},
  publisher={Taylor \& Francis}
}

@article{neykov2007,
  title={Robust fitting of mixtures using the trimmed likelihood estimator},
  author={Neykov, Neyko and Filzmoser, Peter and Dimova, Rumiana and Neytchev, Plamen},
  journal={Computational Statistics \& Data Analysis},
  volume={52},
  number={1},
  pages={299--308},
  year={2007},
  publisher={Elsevier}
}

@article{Cortez2009,
  title={Modeling wine preferences by data mining from physicochemical properties},
  author={Paulo Cortez and Antonio L Cerdeira and Fernando Almeida and Telmo Matos and Jos{\'e} Reis},
  journal={Decision Support Systems},
  year={2009},
  volume={47},
  pages={547-553}
}

@misc{Wine2009,
  author       = {Cortez, Paulo  and Cerdeira, Antonio L and Almeida, Fernando and Matos, Telmo and Reis, Jos{\'e}},
  title        = {{Wine Quality}},
  year         = {2009},
  howpublished = {UCI Machine Learning Repository},
  note         = {{DOI}: https://doi.org/10.24432/C56S3T}
}

@article{Raymaekers2024,
  title={Transforming variables to central normality},
  author={Raymaekers, Jakob and Rousseeuw, Peter J},
  journal={Machine Learning},
  volume={113},
  number={8},
  pages={4953--4975},
  year={2024},
  publisher={Springer}
}

@book{Little2019,
  title={Statistical analysis with missing data},
  author={Little, Roderick JA and Rubin, Donald B},
  year={2019},
  publisher={John Wiley \& Sons}
}

@manual{pcapp_CRAN,
  title={\texttt{pcaPP}: Robust PCA by projection pursuit},
  author={Filzmoser, Peter and Fritz, Heinrich and Kalcher, Klaudius},
  journal={R package version},
  volume={1},
  year={2009},
  note = {R package version 2.0-5}
}

@article{coretto2016robust,
  title={Robust improper maximum likelihood: tuning, computation, and a comparison with other methods for robust {Gaussian} clustering},
  author={Coretto, Pietro and Hennig, Christian},
  journal={Journal of the American Statistical Association},
  volume={111},
  number={516},
  pages={1648--1659},
  year={2016},
  publisher={Taylor \& Francis}
}

@article{coretto2017consistency,
  title={Consistency, breakdown robustness, and algorithms for robust improper maximum likelihood clustering},
  author={Coretto, Pietro and Hennig, Christian},
  journal={Journal of Machine Learning Research},
  volume={18},
  number={142},
  pages={1--39},
  year={2017}
}

@article{raymaekers2024challenges,
  title={Challenges of cellwise outliers},
  author={Raymaekers, Jakob and Rousseeuw, Peter J},
  journal={Econometrics and Statistics},
  year={2024},
  publisher={Elsevier}
}

@article{hubert2024robust,
  title={Robust discriminant analysis},
  author={Hubert, Mia and Raymaekers, Jakob and Rousseeuw, Peter J},
  journal={Wiley Interdisciplinary Reviews: Computational Statistics},
  volume={16},
  number={5},
  pages={e70003},
  year={2024},
  publisher={Wiley Online Library}
}

@article{garcia2010review,
  title={A review of robust clustering methods},
  author={Garc{\'\i}a-Escudero, Luis A and Gordaliza, Alfonso and Matr{\'a}n, Carlos and Mayo-Iscar, Agust{\'\i}n},
  journal={Advances in Data Analysis and Classification},
  volume={4},
  pages={89--109},
  year={2010},
  publisher={Springer}
}

@article{Raymaekers_Rousseeuw_2021, 
place={The Netherlands}, 
title={Handling Cellwise Outliers by Sparse Regression and Robust Covariance}, 
volume={1}, 
number={3}, 
journal={Journal of Data Science, Statistics, and Visualisation}, 
author={Raymaekers, Jakob and Rousseeuw, Peter J}, 
year={2021}
}

@Manual{Rcoreteam,
  title = {{R}: A Language and Environment for Statistical Computing},
  author = {{R Core Team}},
  organization = {{R} Foundation for Statistical Computing},
  address = {Vienna, Austria},
  year = {2025},
  note = {https://www.R-project.org/}
}

@article {Lyu2025,
    AUTHOR = {Lyu, Zhongyuan and Chen, Ling and Gu, Yuqi},
     TITLE = {Degree-{H}eterogeneous {L}atent {C}lass {A}nalysis for
              {H}igh-{D}imensional {D}iscrete {D}ata},
   JOURNAL = {J. Amer. Statist. Assoc.},
  FJOURNAL = {Journal of the American Statistical Association},
    VOLUME = {120},
      YEAR = {2025},
    NUMBER = {552},
     PAGES = {2435--2448},
      ISSN = {0162-1459,1537-274X},
   MRCLASS = {99-01},
  MRNUMBER = {5006853},
       DOI = {10.1080/01621459.2025.2455198},
}

@article {Sugasawa2021,
    AUTHOR = {Sugasawa, Shonosuke},
     TITLE = {Grouped heterogeneous mixture modeling for clustered data},
   JOURNAL = {J. Amer. Statist. Assoc.},
  FJOURNAL = {Journal of the American Statistical Association},
    VOLUME = {116},
      YEAR = {2021},
    NUMBER = {534},
     PAGES = {999--1010},
      ISSN = {0162-1459,1537-274X},
   MRCLASS = {62H30},
  MRNUMBER = {4270039},
       DOI = {10.1080/01621459.2020.1777136},
}

\newpage
 
\appendix

\renewcommand{\theequation}{A\arabic{equation}} 
\setcounter{equation}{0}
\setcounter{page}{1}

\begin{center}
{\LARGE Supplementary material to: ‘Outlier-Robust Multi-Group Gaussian Mixture Modeling with Flexible Group Reassignment'}
\end{center}

Section~\ref{sec:appendix_algorithm} contains details of the algorithm.  Section~\ref{sec:appendix_theory} contains the proof of Theorem~1. Section~\ref{app:simulations} contains additional details and results for the simulations and Section~\ref{app:weather} contains the weather example.

\section{Details of the EM-Algorithm}
\label{sec:appendix_algorithm}

In this appendix further details on the proposed EM-algorithm are provided.

\subsection{Initialization}
\label{subsec:appendix_initialvalues}
First, all data sets are standardized robustly on a global scale (thus ignoring the group structure). 
This leads to global scale and shift invariance and is helpful to stabilize the regularization of the covariance matrices. Note that the final estimates, obtained after convergence of the algorithm,  are rescaled to the original scale.

For a given $\alpha$, the initial estimate for $\hat{\bb \pi}^0$ has $\hat{\pi}^0_{g,g} = \alpha$ and $\hat{\pi}^0_{g,k} = (1-\alpha)/(N-1)$ for $g \neq k$.
We then use the DDCW algorithm of \cite{Raymaekers_Rousseeuw_2021}, applied separately for each group, to get initial estimates $\hat{\bb{\Sigma}}_{reg,k}^0 $ and $\hat{\bb{\mu}}_k^0$, in line with \citet{Raymaekers2023}. 
We hereby assume that each group has a main distribution as enforced by $\pi_{g,g} \geq \alpha \geq 0.5$ for all $g = 1, \ldots, N$. Thus, taking a robust estimate of the covariance and mean of the main bulk of the observations for each group separately is reasonable and a good initial estimate of the corresponding main distribution. To ensure regularity also in cases with low number of observations in a group $k$, each time a covariance is calculated by the DDCW-algorithm, it is regularized with regularization matrix $\bb T_k$ and an adaptive regularization factor $\rho_k$ ensuring a maximal condition number of $\kappa_k$, as detailed in Section~\ref{subsec:algorithm_parameters}.
Finally, the entries of the matrices $\bb{W}^0$ are all set to one, as in \citet{Raymaekers2023}.

\subsection{EM-Step}
\label{subsec:appendix_EM}

The Expectation-Maximization (EM) algorithm
is often used to find maximum likelihood estimates in settings where the data is incomplete.
In our setting, the missingness pattern is indicated by $\bb W$, which is not known in advance but is estimated in the W-step of the algorithm. Conditional on the current $\bb W$, the EM-step then updates the parameters of the mixture model. 

For each observation $\x_{g,i}$ a binary random variable $z_{g,i,k}$ indicates whether it was drawn from distribution $k$. The likelihood resulting from including the additional random variables $z_{g,i,k}$ is called the \textit{complete log-likelihood} and the resulting objective function, the \textit{complete objective function} $\operatorname{CObj}(\bb{\pi}, \bb{\mu}, \bb{\Sigma}, \bb{W}, \bb{Z})$, is $-2$ times
\begin{align*}
     \sum_{g = 1}^N \sum_{i = 1}^{n_g} \Bigg[\sum_{\substack{k = 1\\\pi_{g, k}\neq 0}  }^N  z_{g,i,k} \ln \left( \pi_{g, k}\varphi\left(\x_{g,i}^{(\w_{g,i})}; \bb{\mu}_k^{(\w_{g,i})}, \bb{\Sigma}_{reg,k}^{(\w_{g,i})}\right)\right) +\sum_{j = 1}^{p} q_{g, i, j} (1-w_{g, ij}) \Bigg],  
\end{align*}
where $\bb{Z}$ collects all $z_{g,i,k}$. Taking the conditional expectation of $z_{g,i,k}$ gives
\begin{align*}
    t_{g,i,k} = \E[z_{g,i,k}|\x_{g,i}^{(\w_{g,i})}, \bb{\pi}, \bb{\mu}, \bb{\Sigma}, \bb{W}] = \frac{\pi_{g, k} \varphi\left(\x_{g,i}^{(\w_{g,i})}; \bb{\mu}_k^{(\w_{g,i})}, \bb{\Sigma}_{reg,k}^{(\w_{g,i})}\right) }{\sum_{l = 1}^N \pi_{g, l}\varphi\left(\x_{g,i}^{(\w_{g,i})}; \bb{\mu}_l^{(\w_{g,i})}, \bb{\Sigma}_{reg,l}^{(\w_{g,i})}\right)}.
\end{align*}
The \textit{expected objective function} $\operatorname{EObj}(\bb{\pi}, \bb{\mu}, \bb{\Sigma}, \bb{W}) $, is then $-2$ times
\begin{align}
    \sum_{g = 1}^N \sum_{i = 1}^{n_g} \Bigg[ \sum_{\substack{k = 1\\\pi_{g, k}\neq 0}}^N  t_{g,i,k} \ln \left( \pi_{g, k}\varphi\left(\x_{g,i}^{(\w_{g,i})}; \bb{\mu}_k^{(\w_{g,i})}, \bb{\Sigma}_{reg,k}^{(\w_{g,i})}\right)\right) + \sum_{j = 1}^{p}  q_{g, i, j} (1-w_{g, ij}) \Bigg].  \label{eq:eob} 
\end{align}
The EM algorithm then leverages that we can iteratively take the expectation and maximize the Expected Objective~\eqref{eq:eob}. 

Extending the
maximization step regarding the parameters $\bb \mu$ and $\bb \Sigma$ for the GMM with missing values \citep{Eirola2014} to the multi-group GMM with missing values is straightforward since the group structure can be ignored once the conditional expectation of $z_{g,i,k}$ is calculated.
The only difference is the estimation of the mixture probabilities $\bb{\pi}$ due to the constraints $\sum_{k = 1}^N \pi_{g,k} = 1$  and $\pi_{g,g} \geq \alpha$ for all $g = 1, \ldots, N$. To find the optimal mixture probability, the Karush-Kuhn-Tucker theorem can be applied. Setting the derivative of the Expected Objective~\eqref{eq:eob} with respect to $\pi_{g,l}$ to zero, then the following conditions have to hold
\begin{equation*}
    \frac{\partial [EObj + \lambda(1-\sum_{k = 1}^N \pi_{g,k}) + \mu (\alpha - \pi_{g,g})]}{\partial \pi_{g,l}}  = \mu (\alpha - \pi_{g,g}) = 1-\sum_{k = 1}^N \pi_{g,k} = 0,
\end{equation*}
as well as $\mu \geq 0$.
Plugging in the 
formula from Equation~\eqref{eq:eob} leads to 
\begin{align*}
    \lambda &= \frac{-2 \sum_{i = 1}^{n_g} \sum_{l= 1, l\neq g}^N t_{g,i,l}}{(1-\pi_{g,g})} = \frac{-2 \sum_{i = 1}^{n_g} (1- t_{g,i,g})}{(1-\pi_{g,g})}.
\end{align*}
Plugging $\lambda$ in leads to
\begin{align*}
     \pi_{g, l} =& \frac{(1-\pi_{g,g}) \sum_{i = 1}^{n_g}  t_{g,i,l} }{ \sum_{i = 1}^{n_g} (1-t_{g,i,g})} = (1-\pi_{g,g})\frac{\frac{1}{n_g}\sum_{i = 1}^{n_g}  t_{g,i,l} }{ 1 -\frac{1}{n_g} \sum_{i = 1}^{n_g}t_{g,i,g}}.
\end{align*}
For the Lagrange parameter $\mu$, we finally have 
\begin{align*}
   \frac{- \frac{1}{n_g}\sum_{i = 1}^{n_g}  t_{g,i,g}}{\pi_{g, g}} + \frac{ (1-\frac{1}{n_g}\sum_{i = 1}^{n_g} t_{g,i,g})}{(1-\pi_{g,g})} =& \frac{\mu}{2n_g} \geq 0\\
    \frac{\pi_{g, g}}{(1-\pi_{g,g})} \geq& \frac{\frac{1}{n_g}\sum_{i = 1}^{n_g}  t_{g,i,g}}{(1-\frac{1}{n_g}\sum_{i = 1}^{n_g} t_{g,i,g})}.
\end{align*}
Since $f(x) = x/(1-x)$ is monotonously increasing, this 
holds if $\pi_{g, g} \geq \frac{1}{n_g}\sum_{i = 1}^{n_g}  t_{g,i,g}$. Thus, if the inequality is strict, $\mu > 0$ and $\pi_{g, g}  = \alpha$. Otherwise, $\pi_{g, g} = \frac{1}{n_g}\sum_{i = 1}^{n_g}  t_{g,i,g}$ is a feasible solution which is equal to the solution of the unconstrained minimization problem. Overall, we have
\begin{align*}
    \pi_{g, g} = \max\left\{\alpha, \frac{1}{n_g}\sum_{i = 1}^{n_g} t_{g,i,g} \right\}, \quad  \pi_{g, l}  = (1-\pi_{g,g})\frac{\frac{1}{n_g}\sum_{i = 1}^{n_g}  t_{g,i,l} }{ 1 -\frac{1}{n_g} \sum_{i = 1}^{n_g}t_{g,i,g}}.
\end{align*}
Note that the regularity condition \textit{linear independence constraint qualification} (LICQ) is fulfilled for all feasible $\bb \pi$.

Thus, the mixture probability estimates that fulfill the constraints $\sum_{k = 1}^N \pi_{g, k}  = 1$ and $\pi_{g,g} \geq \alpha \geq 0.5 $ for all $g = 1, \ldots, N$ in iteration step~$\tau +1$ are given by
\begin{align*}
    \hat{\pi}_{g, g}^{\tau+1} = \max\left\{\alpha, \frac{1}{n_g}\sum_{i = 1}^{n_g} \hat{t}_{g,i,g}^{\tau+1} \right\}, \quad  \hat{\pi}_{g, k}^{\tau+1}  = (1-\hat{\pi}_{g, g}^{\tau+1})\frac{\frac{1}{n_g}\sum_{i = 1}^{n_g}  \hat{t}_{g,i,k}^{\tau+1} }{ 1 -\frac{1}{n_g} \sum_{i = 1}^{n_g} \hat{t}_{g,i,g}^{\tau+1}},
\end{align*}
where $\hat{t}_{g,i,k}^{\tau+1}$ denotes the estimated expected probability that observation $\x_{g,i}$ is from distribution $k$ given the estimates (${\bb{\hat{\Sigma}}_{reg}^{\tau}}$, $\hat{\bb{\mu}}^\tau$, $\hat{\bb{\pi}}^\tau$, $\hat{\bb{W}}^{\tau+1}$) from the previous step,
\begin{align}
\label{eq:expected_t}
    \hat{t}_{g,i,k}^{\tau+1} = \frac{\hat{\pi}_{g, k}^\tau \varphi\left(\x_{g,i}^{(\hat{\w}_{g,i}^{\tau+1})}; {\bb{\hat{\mu}}_k^\tau}^{(\hat{\w}_{g,i}^{\tau+1})}, {\bb{\hat{\Sigma}}_{reg,k}^\tau}^{(\hat{\w}_{g,i}^{\tau+1})}\right) }{\sum_{l = 1}^N \hat{\pi}_{g, l}^\tau \varphi\left(\x_{g,i}^{(\hat{\w}_{g,i}^{\tau+1})}; {\bb{\hat{\mu}}_l^\tau}^{(\hat{\w}_{g,i}^{\tau+1})}, {\bb{\hat{\Sigma}}_{reg,l}^\tau}^{(\hat{\w}_{g,i}^{\tau+1})}\right)}.
\end{align}

The new estimates for the group-specific means are given by
\begin{align*}
    {\bb{\hat{\mu}}_k^{\tau+1}} = \frac{1}{\bar{t}_k} \sum_{g = 1}^N \sum_{i = 1}^{n_g} \hat{t}_{g,i,k}^{\tau+1} {\bb {\hat{x}}_{g,i,k}^{\tau+1}},
\end{align*}
with $ \bar{t}_k= \sum_{g = 1}^N \sum_{i = 1}^{n_g} \hat{t}_{g,i,k}^{\tau+1}$ and conditional expectations $\hat{\x}^{\tau+1}_{g,i,k}$ given by 
\begin{small}
\begin{align} 
    {\bb {\hat{x}}_{g,i,k}^{\tau+1}}^{(1-\hat{\w}^{\tau+1}_{g,i})} &= {\bb{\hat{\mu}}_k^\tau}^{(1-\hat{\w}^{\tau+1}_{g,i})} + {\bb{\hat{\Sigma}}_{reg,k}^\tau}^{(1-\hat{\w}^{\tau+1}_{g,i}| \hat{\w}^{\tau+1}_{g,i})}\left({\bb{\hat{\Sigma}}_{reg,k}^\tau}^{(\hat{\w}^{\tau+1}_{g,i}| \hat{\w}^{\tau+1}_{g,i})} \right)^{-1} \left( \x_{g,i}^{(\hat{\w}^{\tau+1}_{g,i})} - {\bb{\hat{\mu}}_k^\tau}^{(\hat{\w}^{\tau+1}_{g,i})}\right) \nonumber \\ 
    {\bb {\hat{x}}_{g,i,k}^{\tau+1}}^{(\hat{\w}^{\tau+1}_{g,i})} &=  \x_{g,i}^{(\hat{\w}^{\tau+1}_{g,i})}, \label{eq:imputed1}
\end{align}
\end{small} 
for an observation $\x_{g,i}$ with missingness pattern $\hat{\w}^{\tau+1}_{g,i}$, assuming that it comes from distribution $k$.

Finally, the new estimates of the regularized covariance matrices are
\begin{align*}
   {\bb{\hat{\Sigma}}_{reg,k}^{\tau+1}} = \rho_k {\bb T}_k + (1- \rho_k)\frac{1}{\bar{t}_k} \sum_{g = 1}^N \sum_{i = 1}^{n_g} \hat{t}_{g,i,k}^{\tau+1} \left[  ({\bb {\hat{x}}_{g,i,k}^{\tau+1}} - {\bb{\hat{\mu}}_k^{\tau+1}}) ({\bb {\hat{x}}_{g,i,k}^{\tau+1}} - {\bb{\hat{\mu}}_k^{\tau+1}})' +  \bb{\tilde{\Sigma}}_{reg,k}^\tau \right]
\end{align*}
with
\begin{align*}
    {\bb{\tilde{\Sigma}}_{reg,k}^\tau}^{(1-\hat{\w}^{\tau+1}_{g,i}| 1 -\hat{\w}^{\tau+1}_{g,i})}  &= {\bb{\hat{\Sigma}}_{reg,k}^\tau}^{(1-\hat{\w}^{\tau+1}_{g,i}| 1-\hat{\w}^{\tau+1}_{g,i})} - {\bb{\hat{\Sigma}}_{reg,k}^\tau}^{(1-\hat{\w}^{\tau+1}_{g,i}|\hat{\w}^{\tau+1}_{g,i})} \\
    & \quad \times \left( {\bb{\hat{\Sigma}}_{reg,k}^\tau}^{(\hat{\w}^{\tau+1}_{g,i}|\hat{\w}^{\tau+1}_{g,i})} \right)^{-1} {\bb{\hat{\Sigma}}_{reg,k}^\tau}^{(\hat{\w}^{\tau+1}_{g,i}| 1-\hat{\w}^{\tau+1}_{g,i})},
\end{align*}
for unobserved variables ($\hat{\w}^{\tau+1}_{g,i}$ equal to $0$), all other entries of ${\bb{\tilde{\Sigma}}_{reg,k}^\tau}$ are equal to zero.

\subsection{Algorithm Pseudo-code}
\label{subsec:app_pseudo}
Pseudo-code for the algorithm is compactly presented in Algorithm \ref{alg:main}.

\begin{algorithm}[H]
\caption{Cellwise-robust estimation of the multi-group GMM}
\label{alg:main}
\begin{algorithmic}[1] 
    \Require $\X_1, \X_2, \ldots, \X_N$; initial estimates ${\bb{\hat{\Sigma}}_{reg}^{0}}$, $\hat{\bb{\mu}}^0$, $\hat{\bb{\pi}}^0$, $\hat{\bb{W}}^0$; hyperparameters $q_{g, ij}$, $\bb T_k$, $\rho_k$, $\epsilon_{conv}$, $h_g$, $\alpha$
    \State $\bb{W} \leftarrow \hat{\bb{W}}^0$
    \State $(\bb{\Sigma}_{reg}, \bb{\mu}, \bb{\pi}) \leftarrow ({\bb{\hat{\Sigma}}_{reg}^{0}}, \hat{\bb{\mu}}^0, \hat{\bb{\pi}}^0) $
    \State $\texttt{crit} \leftarrow \infty$
    \While{\texttt{crit} $>$ $\epsilon_{conv}$}
        \State $\bb{\Sigma}_{reg}^{prev} \leftarrow \bb{\Sigma}_{reg}$
        \State $\bb{W} \leftarrow$ \texttt{wstep}($\X, \bb{\Sigma}_{reg}, \bb{\mu}, \bb{\pi}, \bb{W}, q_{g, ij}, h_g$)
        \State $(\bb{\Sigma}_{reg}, \bb{\mu}, \bb{\pi}) \leftarrow$ \texttt{emstep}($\X, \bb{\Sigma}_{reg}, \bb{\mu}, \bb{\pi}, \bb{W}, \bb T, \rho, \alpha$)
        \State \texttt{crit} $\leftarrow \max_{k,j,j'} |{\Sigma}_{reg,k, jj'}^{prev} - {\Sigma}_{reg,k, jj'}|$
    \EndWhile
    \State \Return $\bb{\Sigma}_{reg}, \bb{\mu}, \bb{\pi}, \bb W$
\end{algorithmic}
\end{algorithm} 

\renewcommand{\thecorollary}{B\arabic{corollary}} 
\setcounter{corollary}{0}
\renewcommand{\theequation}{B\arabic{equation}} 
\setcounter{equation}{0}
\section{Derivations of the Breakdown Point}
\label{sec:appendix_theory}

We start with a preliminary result (Section \ref{app:subsec:corollary}), and then present the proofs of Theorem~\ref{theorem:breakdownpoint} (Section \ref{app:subsec:theorem}). 
For ease of notation across all proofs, we drop the superscript $m$ for observations and the explicit dependence of the estimators on $\mathcal{Z}_m$ or $\mathcal{X}_m$ when possible. All limits correspond to $m \rightarrow \infty$. The notation $\w(\bb{y})$ marks the real outlying cells of $\bb{y}$ while the notation $\w_{\bb{y}}$ indicates the missingness pattern of $\bb{y}$ for a given $\bb W$ from the objective function if the indexation of $\bb{y}$ is irrelevant.

\subsection{Preliminary Result} \label{app:subsec:corollary}

\begin{corollary} \label{theorem:corollary}
Given the idealized setting (Section \ref{subsec:bdp_mixture} and \ref{subsec:bdp_grouped}) and fixed $\rho_k > 0,  \bb{T}_k \succ 0$ (positive definite), the following statements hold.
\begin{itemize}
    \item[a.] 
    For uncontaminated data $\mathcal{Z}_m = \mathcal{X}_m$ ($m \in \mathbb{N}$), there exist feasible estimates $\hat{\bb{\pi}}$, $\hat{\bb{\mu}}$, $\hat{\bb{\Sigma}}$ such that
    $\operatorname{Obj}(\hat{\bb{\pi}}$, $\hat{\bb{\mu}}$, $\hat{\bb{\Sigma}}, \bb{W})$
    is finite for any feasible set of $\bb{W}$.
    \item[b.] For contaminated data $\mathcal{Z}_m$ ($m \in \mathbb{N}$) and sets of estimates $\hat{\bb{\pi}}  (\mathcal{Z}_m)$, $\hat{\bb{\mu}}(\mathcal{Z}_m)$, $ \hat{\bb{\Sigma}}(\mathcal{Z}_m) $, $\hat{\bb{W}}(\mathcal{Z}_m)$: 
    \begin{itemize}
    \item[b1.] If there exists an $l$ such that $ \lambda_1(\hat{\bb{\Sigma}}_{reg, l}(\mathcal{Z}_m)) \rightarrow \infty$ for $m \rightarrow \infty$, then 
    $\operatorname{Obj}(\hat{\bb{\pi}}  (\mathcal{Z}_m), \linebreak \hat{\bb{\mu}}(\mathcal{Z}_m), \hat{\bb{\Sigma}}(\mathcal{Z}_m), \hat{\bb{W}}(\mathcal{Z}_m))$
    goes to infinity. 
    \item[b2.] If there exists a variable $j^*$, $l$, $k$ and a constant $\tilde{b}$ such that $|\hat{{\mu}}_{k,j^*}(\mathcal{Z}_m) - \hat{{\mu}}_{l,j^*}(\mathcal{Z}_m)| < \tilde{b}$ for $l \neq k$, 
    then 
    $\operatorname{Obj}(\hat{\bb{\pi}}  (\mathcal{Z}_m), \hat{\bb{\mu}}(\mathcal{Z}_m), \hat{\bb{\Sigma}}(\mathcal{Z}_m), \hat{\bb{W}}(\mathcal{Z}_m))$
    goes to infinity.
    \item[b3.] Given any feasible set of $\bb{W}$ with finite objective function, then, for all groups $g$ and  observations $\x_{g,i} \in (A^g \cup B^g) \cap \bb{Z}^g$ there exists exactly one estimate $\hat{\bb\mu}_k(\mathcal{Z}_m)$ with
        $||\x_{g,i}^{(\w_{g,i})} - \hat{\bb\mu}_k(\mathcal{Z}_m)^{(\w_{g,i})} ||  < \infty.$
\end{itemize}
\end{itemize}
\end{corollary}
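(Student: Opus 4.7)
The plan is to prove the four parts of the corollary separately, with (a) handled by an explicit construction and parts (b1)--(b3) all resting on pigeonhole-type arguments tailored to the idealized setting. For (a), I would exhibit feasible estimates $\hat{\pi}_{g,g} = \alpha$, $\hat{\pi}_{g,k} = (1-\alpha)/(N-1)$ for $k \neq g$ (satisfying~\eqref{eq:ob_piinequality}), $\hat{\bb{\mu}}_k$ equal to the centroid of cluster $A_m^k$, and $\hat{\bb{\Sigma}}_k = \bb{T}_k$ (so $\hat{\bb{\Sigma}}_{reg,k} = \bb{T}_k$, positive definite and independent of $m$). Within-cluster distances are bounded by $b$ by~\eqref{eq:withincluster_cell}, so all Mahalanobis distances stay uniformly bounded and each $\varphi_k(\x^{(\w)};\cdot)$ is bounded below away from zero for any sub-pattern $\w$. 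Summing $-2\ln$ of the mixture density over the finitely many observations yields a bounded contribution, and for any feasible $\bb{W}$ the penalty $\sum_j q_{g,ij}(1-w_{g,ij})$ is trivially finite since the $q_{g,ij}$ are fixed constants, so $\operatorname{Obj}$ is finite.

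For (b1), the regularization constraint first gives $\lambda_p(\hat{\bb{\Sigma}}_{reg,l}) \geq \rho_l \lambda_p(\bb{T}_l) > 0$, and combining with $\lambda_1(\hat{\bb{\Sigma}}_{reg,l}) \to \infty$ yields $\det(\hat{\bb{\Sigma}}_{reg,l}^{(\w)}) \to \infty$ for any non-empty $\w$, hence $\varphi_l(\cdot) \to 0$ uniformly. Suppose for contradiction that the objective stays bounded along a subsequence. Then each observation's mixture density must be bounded below by some $c > 0$, and since the component-$l$ contribution vanishes, some component $k' \neq l$ must satisfy $\pi_{g,k'}\varphi_{k'}(\x^{(\w)}) \geq c/(N-1)$. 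The lower bound on $\tilde{\alpha}_g$ guarantees that each group $g$ contains at least one observation originating from cluster $A_m^g$, so the covering component $k'$ must place $\hat{\bb{\mu}}_{k'}$ at bounded distance from $A_m^g$ along the observed cells. With only $N-1$ admissible $k'$'s but $N$ pairwise $\infty$-separated clusters by~\eqref{eq:betweencluster_cell}, pigeonhole forces some $k'$ to be pinned near two clusters infinitely far apart, a contradiction.

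Part (b2) runs the same pigeonhole coordinate-wise. If $|\hat{\mu}_{k,j^*} - \hat{\mu}_{l,j^*}| < \tilde{b}$, then along coordinate $j^*$ only $N-1$ bounded-distance equivalence classes of means exist, while by the cellwise idealized setting the clusters are pairwise separated in every coordinate, including $j^*$. Some cluster $A_m^{g^*}$ therefore lies infinitely far from every $\hat{\mu}_{k'}$ in $j^*$, and constraint~\eqref{eq:ob_Wconstraint} ensures that at least one observation in $\bb{Z}_m^{g^*}$ originating from $A_m^{g^*}$ keeps cell $j^*$ unflagged; for that observation the $j^*$ contribution to the Mahalanobis form inside every $\varphi_k$ blows up, driving the mixture density to zero and the objective to infinity. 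For (b3), bounded objective combined with (b1) gives uniformly bounded spectrum of every $\hat{\bb{\Sigma}}_{reg,k}$, so $\pi_{g,k}\varphi_k(\x^{(\w_{g,i})}) \geq c/N$ is equivalent to $\|\x^{(\w_{g,i})} - \hat{\bb{\mu}}_k^{(\w_{g,i})}\|$ being bounded. Existence of at least one such $k$ per observation follows from the lower bound on the mixture density; uniqueness follows by (b2), since two candidates would have means within bounded distance in every observed coordinate, contradicting finite objective.

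The main obstacle, I expect, is making the pigeonhole step in (b1) and (b2) fully rigorous in the cellwise setting, where the observed patterns $\w_{g,i}$ vary across observations and separation must be applied on the actually-observed coordinates rather than on the full $p$-dimensional vectors. The role of $h_g$ is precisely to guarantee that for each variable $j$ and group $g$ at least $h_g$ observations have $j$ unflagged, so enough cluster-$g$ observations carry coordinate-$j$ information to activate the separation argument. The delicate bookkeeping is to verify that for every allocation of covering components to clusters under the cellwise pigeonhole, there exists at least one observation whose unflagged coordinates witness the separation and hence blow up the Mahalanobis distance in every relevant $\varphi_k$ simultaneously.
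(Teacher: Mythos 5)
Your overall route matches the paper's (an explicit construction for part a, contradiction-plus-pigeonhole arguments for b1--b3), but there is a genuine gap exactly where you yourself defer to ``delicate bookkeeping'', and it is the heart of the proof rather than a routine check. In b1, the step ``$\lambda_1(\hat{\bb{\Sigma}}_{reg,l})\to\infty$ yields $\det(\hat{\bb{\Sigma}}_{reg,l}^{(\w)})\to\infty$ for any non-empty $\w$, hence $\varphi_l\to 0$ uniformly'' is false: take $\hat{\bb{\Sigma}}_{reg,l}=\operatorname{diag}(m,1)$, whose largest eigenvalue explodes while the submatrix for the second coordinate is constant, so an observation whose observed cells avoid the exploding coordinate keeps a non-vanishing component-$l$ density. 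Consequently ``the component-$l$ contribution vanishes'' does not hold for all observations, the covering components cannot be restricted to $N-1$ candidates, and the pigeonhole cannot be run on whole observations: two observations from different clusters that are each close to the same $\hat{\bb{\mu}}_{k'}$ only on their own (possibly disjoint) observed cells produce no contradiction. The paper's proof resolves precisely this by localizing the explosion to a single designated coordinate $j^*(l)=\arg\max_j \hat{\Sigma}_{reg,l,jj}$ via the bound $\ln\det\hat{\bb{\Sigma}}_{reg,l}^{(\w)}\geq \ln\max_j\hat{\Sigma}_{reg,l,jj}^{(\w)}+(p-1)\ln\tilde{b}_l(\rho_l,\bb{T}_l)$, and then selecting, for every group $g$, an observation from cluster $g$ that has this \emph{common} coordinate observed --- existence of such observations is exactly what the majority condition on $\tilde{\alpha}_g$ combined with \eqref{eq:ob_Wconstraint} guarantees in the grouped idealized setting. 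With one shared coordinate, component $l$ is excluded for all these $N$ observations and the coordinate-wise separation \eqref{eq:betweencluster_cell} makes the pigeonhole bite; the same device is what makes b2 rigorous. Without it, b1 and b2 are not established.

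A second, smaller error is in part a: the claim that ``each $\varphi_k(\x^{(\w)};\cdot)$ is bounded below away from zero'' is wrong, since an observation is infinitely far from the centroids of all clusters other than its own; only the own-cluster component stays bounded below, and in the grouped idealized setting a minority of group-$g$ observations may originate from clusters $l\neq g$. Your construction therefore only works because the off-diagonal weights $(1-\alpha)/(N-1)$ are strictly positive, which fails for $\alpha=1$; the paper handles $\alpha=1$ as a separate case (there $\bb{Z}^g=A^g$, so the diagonal component suffices). Your b3 argument (existence from the lower bound on the mixture density given non-exploding covariances, uniqueness via b2) follows the paper's route and is fine once b1 and b2 are repaired.
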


\begin{proof}

First note that in the following, the penalty term can generally be left out since it is always bounded, $ |\sum_{g = 1}^N \sum_{i = 1}^{n_g} \sum_{j = 1}^{p} q_{g, ij} (1-w_{g, ij})| \leq pN \max_g n_g \max_{g, i, j} q_{g, ij} < \infty.$

\textbf{Proof of part a.}        
        Given a data matrix $\mathcal{X}$, we construct a set of estimators with finite objective function value.
        For all $k = 1, \ldots, N$ set $\hat{\Sigma}_{k ,jj} = 1$ and zero otherwise and $\hat{\bb \mu}_{k} = \frac{1}{| A^k_m|}\sum_{\x \in A^k_m} \x$, where $ |A^k_m|$ denotes the number of elements in $A^k_m$. Then, the regularized covariance matrices $\hat{\bb\Sigma}_{reg, k}$ have finite positive eigenvalues. Consider two cases for $\alpha$:
        
            First, assume $\alpha \neq 1$. Set $\hat{\pi}_{k,k} = \alpha \geq 0.5$, $\hat{\pi}_{k,l} = \frac{1-\alpha}{N-1} > 0$ for $k \neq l$. 
             For each observation $\x_{g,i}$ with $\w_{g,i}$ originating from any cluster~$l$ it holds that
            \begin{align*}
                \ln \Bigg(&\sum_{k = 1}^N  \hat{\pi}_{g, k}  \varphi\left(\x_{g,i}^{(\w_{g,i})}; \hat{\bb\mu}_k^{(\w_{g,i})}, \hat{\bb\Sigma}_{reg,k}^{(\w_{g,i})}\right)\Bigg) \\
                & \geq \ln \left( \frac{1-\alpha}{N-1}  \varphi\left(\x_{g,i}^{(\w_{g,i})}; \hat{\bb\mu}_l^{(\w_{g,i})}, \hat{\bb\Sigma}_{reg,l}^{(\w_{g,i})}\right)\right) \\
                & = \ln \frac{1-\alpha}{N-1}  + \ln   \frac{e^{-\frac{1}{2}(\x_{g,i}^{(\w_{g,i})} - \hat{\bb\mu}_l^{(\w_{g,i})})' (\hat{\bb\Sigma}_{reg,l}^{(\w_{g,i})})^{-1} (\x_{g,i}^{(\w_{g,i})} - \hat{\bb\mu}_l^{(\w_{g,i})})}}{\sqrt{(2\pi)^{\sum_j w_{g,ij}} \det \hat{\bb\Sigma}_{reg,l}^{(\w_{g,i})}}} \\
                 & \geq \ln \frac{1-\alpha}{N-1}  -\frac{1}{2} (\bb{b}^{(\w_{g,i})})' (\hat{\bb\Sigma}_{reg,l}^{(\w_{g,i})})^{-1} (\bb{b}^{(\w_{g,i})}) 
                 - \frac{1}{2} p \ln (2\pi)  -\frac{1}{2} \ln\det \hat{\bb\Sigma}_{reg,l}^{(\w_{g,i})},
            \end{align*}
            where $\bb b$ denotes the vector $\bb b = (b, \ldots, b) \in \mathbb{R}^p$, with $b$ corresponding to the upper bound of the within-cluster distances in the ideal scenario, and the last inequality follows from 
            \begin{align}
                \max_{1 \leq l \leq s} \max \{ ||\bb{x}_{i', m} - \bb{x}_{i, m} ||_2: \bb{x}_{i', m}, \bb{x}_{i, m} \in A_m^l\} < b \quad \forall m \in \mathbb{N}, \label{eq:withincluster}
            \end{align}
            where $||.||_2$ denotes the Euclidean norm.  
            Since all terms on the right hand side are bounded, the objective function is bounded from above. For the lower bound, 
            \begin{align*}
                \ln \Bigg(&\sum_{k = 1}^N \hat{\pi}_{g, k}  \varphi\left(\x_{g,i}^{(\w_{g,i})}; \hat{\bb\mu}_k^{(\w_{g,i})}, \hat{\bb\Sigma}_{reg,k}^{(\w_{g,i})}\right)\Bigg)\\
                & \leq  \max_k \ln\left(\varphi\left(\x_{g,i}^{(\w_{g,i})}; \hat{\bb\mu}_k^{(\w_{g,i})}, \hat{\bb\Sigma}_{reg,k}^{(\w_{g,i})}\right)\right)\\
                & \leq \max_k( \underbrace{-\frac{1}{2} (\x_{g,i}^{(\w_{g,i})} - \hat{\bb\mu}_k^{(\w_{g,i})})' (\hat{\bb\Sigma}_{reg,k}^{(\w_{g,i})})^{-1} (\x_{g,i}^{(\w_{g,i})} - \hat{\bb\mu}_k^{(\w_{g,i})}))}_{\leq 0} \\ &\quad \quad \underbrace{-\frac{1}{2} \sum_j w_{g,ij} \ln (2\pi)}_{\leq 0} + \max_k(-\frac{1}{2}  \ln\det \hat{\bb\Sigma}_{reg,k}^{(\w_{g,i})}) \\
                & \leq -\frac{1}{2}  \ln \min_k\det \hat{\bb\Sigma}_{reg,k}^{(\w_{g,i})}.
            \end{align*}
            Since the covariance estimates are finite, the objective function is finite for any feasible $\bb{W}$.
    
            Second, assume $\alpha = 1$.  Set $\hat{\pi}_{k,k} = 1$, $\hat{\pi}_{k,l} = 0$ for all $k \neq l$. 
            All observations from a group $g$ originate from cluster $g$, $\bb Z^g = A^g$. 
            Thus, for any $\x_{g,i}$ it holds that
            \begin{align*}
                \ln \Bigg(\sum_{k = 1}^N & \hat{\pi}_{g, k}  \varphi\left(\x_{g,i}^{(\w_{g,i})}; \hat{\bb\mu}_k^{(\w_{g,i})}, \hat{\bb\Sigma}_{reg,k}^{(\w_{g,i})}\right)\Bigg)\\
                & = -\frac{1}{2}(\x_{g,i}^{(\w_{g,i})} - \hat{\bb\mu}_g^{(\w_{g,i})})' (\hat{\bb\Sigma}_{reg,g}^{(\w_{g,i})})^{-1} (\x_{g,i}^{(\w_{g,i})} - \hat{\bb\mu}_g^{(\w_{g,i})}) \\
                & \quad \quad  -\frac{1}{2} \sum_j w_{g,ij} \ln (2\pi) -\frac{1}{2} \ln\det \hat{\bb\Sigma}_{reg,g}^{(\w_{g,i})} \\
                 & \geq -\frac{1}{2}\left(  (\bb{b}^{(\w_{g,i})})' (\hat{\bb\Sigma}_{reg,g}^{(\w_{g,i})})^{-1} (\bb{b}^{(\w_{g,i})}) + p \ln (2\pi)  + \ln\det \hat{\bb\Sigma}_{reg,g}^{(\w_{g,i})}\right) 
            \end{align*}
            and the objective function is bounded from above. For the lower bound, it follows
            \begin{align*}
                \ln \Bigg(\sum_{k = 1}^N \hat{\pi}_{g, k}  &\varphi\left(\x_{g,i}^{(\w_{g,i})}; \hat{\bb\mu}_k^{(\w_{g,i})}, \hat{\bb\Sigma}_{reg,k}^{(\w_{g,i})}\right)\Bigg)\\
                & = \underbrace{-\frac{1}{2} (\x_{g,i}^{(\w_{g,i})} - \hat{\bb\mu}_g^{(\w_{g,i})})' (\hat{\bb\Sigma}_{reg,g}^{(\w_{g,i})})^{-1} (\x_{g,i}^{(\w_{g,i})} - \hat{\bb\mu}_g^{(\w_{g,i})})}_{\leq 0} \\ &\quad \quad \underbrace{-\frac{1}{2} \sum_j w_{g,ij} \ln (2\pi)}_{\leq 0} -\frac{1}{2}  \ln\det \hat{\bb\Sigma}_{reg,g}^{(\w_{g,i})}\\
                & \leq -\frac{1}{2}  \ln\det \hat{\bb\Sigma}_{reg,g}^{(\w_{g,i})}.
            \end{align*}
            Thus, the objective function is bounded for any feasible $\bb{W}$.

   \textbf{Proof of part b1.}   
        Assume that under the given estimates the objective function is bounded. By construction, the estimated covariances $\hat{\bb\Sigma}_{reg,k}$ are regular and thus, the lowest eigenvalues $\lambda_p(\hat{\bb\Sigma}_{reg,k}) \geq \tilde{b}_k(\rho_k, \bb{T}_k)>0 $ are bounded away from zero. According to the proof of Proposition 2b) from \citet{Raymaekers2023} it holds for all $k$ and any feasible $\hat{\w}$ that
        \begin{align*}
            \ln \det \hat{\bb\Sigma}_{reg,k}^{(\hat{\w})} \geq \ln \max_{j = 1, \ldots, p} \hat{\Sigma}_{reg,k, jj}^{(\hat{\w})}  + (p-1)\ln\tilde{b}_k(\rho_k, \bb{T}_k), 
        \end{align*}
        where $\tilde{b}_k(\rho_k, \bb{T}_k)$ is a constant depending only on $\rho_k$ and $\bb{T}_k$.
 
        From part~a.~we know that for all $\x_{g,i}$ from group $g$ it holds that
        \begin{align}
            \ln \Bigg(&\sum_{k = 1}^N \hat{\pi}_{g, k}  \varphi\left(\x_{g,i}^{(\hat{\w}_{g,i})}; \hat{\bb\mu}_k^{(\hat{\w}_{g,i})}, \hat{\bb\Sigma}_{reg,k}^{(\hat{\w}_{g,i})}\right)\Bigg) \notag\\
            & \leq -\frac{1}{2} \min_k \Bigg( (\x_{g,i}^{(\hat{\w}_{g,i})} - \hat{\bb\mu}_k^{(\hat{\w}_{g,i})})' (\hat{\bb\Sigma}_{reg,k}^{(\hat{\w}_{g,i})})^{-1} (\x_{g,i}^{(\hat{\w}_{g,i})} - \hat{\bb\mu}_k^{(\hat{\w}_{g,i})})
            + \ln\det \hat{\bb\Sigma}_{reg,k}^{(\hat{\w}_{g,i})}\Bigg) \notag\\
            &\leq  -\frac{1}{2} \min_k \Bigg( (\x_{g,i}^{(\hat{\w}_{g,i})} - \hat{\bb\mu}_k^{(\hat{\w}_{g,i})})' (\hat{\bb\Sigma}_{reg,k}^{(\hat{\w}_{g,i})})^{-1} (\x_{g,i}^{(\hat{\w}_{g,i})} - \hat{\bb\mu}_k^{(\hat{\w}_{g,i})}) \notag\\
            & \quad \quad + \ln \max_{j = 1, \ldots, p} \hat{\Sigma}_{reg,k, jj}^{(\hat{\w}_{g,i})}  + (p-1)\ln\tilde{b}_k(\rho_k, \bb{T}_k) \Bigg)\notag\\
            &\leq  -\frac{1}{2}\min_k (p-1)\ln\tilde{b}_k(\rho_k, \bb{T}_k) -\frac{1}{2} \min_k \Bigg( (\x_{g,i}^{(\hat{\w}_{g,i})} - \hat{\bb\mu}_k^{(\hat{\w}_{g,i})})'(\hat{\bb\Sigma}_{reg,k}^{(\hat{\w}_{g,i})})^{-1} (\x_{g,i}^{(\hat{\w}_{g,i})} - \hat{\bb\mu}_k^{(\hat{\w}_{g,i})})  \notag\\
            &  \quad\quad 
            + \ln \max_{j = 1, \ldots, p} \hat{\Sigma}_{reg,k, jj}^{(\hat{\w}_{g,i})} \Bigg). \label{eq:proof_argmin}
        \end{align}

        Let $j^*(l) = \max_{j = 1, \ldots, p} \hat{\Sigma}_{reg,l,jj}$ for the distribution where $\lambda_1(\hat{\bb{\Sigma}}_{reg, l}) \rightarrow \infty$.
        For each group $g$ there exists at least one observation $\x_{g,i^*(g)}$ from cluster $g$ for which variable $j^*(l)$ is observed, $w_{g,i^*(g)j^*(l)} = 1$. For these observations, we have
        \begin{align*}
            (\x_{g,i^*(g)}^{(\hat{\w}_{g,i^*(g)})} - \hat{\bb\mu}_l^{(\hat{\w}_{g,i^*(g)})})' (\hat{\bb\Sigma}_{reg,l}^{(\hat{\w}_{g,i^*(g)})})^{-1}(\x_{g,i^*(g)}^{(\hat{\w}_{g,i^*(g)})} - \hat{\bb\mu}_l^{(\hat{\w}_{g,i^*(g)})}) \quad\quad\\ 
            + \ln \max_{j = 1, \ldots, p} \hat{\Sigma}_{reg,l, jj}^{(\hat{\w}_{g,i^*(g)})} &\geq \ln \max_{j = 1, \ldots, p} \hat{\Sigma}_{reg,l, jj}\\
            & = \ln \max_{j, j' = 1, \ldots, p} |\hat{\Sigma}_{reg,l, jj'} | \\
            & \geq \ln \frac{\lambda_1(\hat{\bb{\Sigma}}_{reg, l})}{p} \rightarrow \infty.
        \end{align*}
        Thus, for all $\x_{g,i^*(g)}, g = 1, \ldots, N$, the argument $l$ cannot be the minimizer. 
        
        Without loss of generality, assume that all other covariance matrices are bounded, $\lambda_1(\hat{\bb{\Sigma}}_{reg, k}) < \infty$ if $ k \neq l$. 
        Within the ideal scenario,
        it holds that $|x_{g,i^*(g)j^*(l)} -x_{h,i^*(h)j^*(l)}| \rightarrow \infty$ if $g \neq h$. Also,
        \begin{align*}
            (\x_{g,i^*(g)}^{(\hat{\w}_{g,i^*(g)})} - \hat{\bb\mu}_k^{(\hat{\w}_{g,i^*(g)})})' (\hat{\bb\Sigma}_{reg,k}^{(\hat{\w}_{g,i^*(g)})})^{-1} (\x_{g,i^*(g)}^{(\hat{\w}_{g,i^*(g)})} - \hat{\bb\mu}_k^{(\hat{\w}_{g,i^*(g)})}) \\
            \geq (x_{g,i^*(g)j^*(l)}- \hat{\mu}_{k, j^*(l)})^2 \lambda_p\left((\hat{\bb\Sigma}_{reg,k}^{(\hat{\w}_{g,i^*(g)})})^{-1}\right).
        \end{align*}
        The smallest eigenvalue going to zero, $\lambda_p((\hat{\bb\Sigma}_{reg,k}^{(\hat{\w}_{g,i^*(g)})})^{-1}) \rightarrow 0$ 
        implies $\lambda_1(\hat{\bb\Sigma}_{reg,k}^{(\hat{\w}_{g,i^*(g)})}) \rightarrow \infty$ as well as $\lambda_1(\hat{\bb\Sigma}_{reg,k}) \rightarrow \infty$, which contradicts that the other covariances are bounded in the first eigenvalue. Thus, $\lambda_p\left((\hat{\bb\Sigma}_{reg,k}^{(\hat{\w}_{g,i^*(g)})})^{-1}\right)$ is bounded away from zero.  

        Since all observations are increasingly far away, there exists at least one $\x_{g',i^*(g')}$ such that $(x_{g',i^*(g')j^*(l)}- \hat{\mu}_{k, j^*(l)})^2 \rightarrow \infty$ for all $k = 1, \ldots, N, k \neq l$ and for which the minimum from Equation~\eqref{eq:proof_argmin} goes to infinity. Moreover, all parts are bounded from above, 
        \begin{align*}
            \ln \Bigg(\sum_{k = 1}^N \hat{\pi}_{g, k}  \varphi\left(\x_{g,i}^{(\hat{\w}_{g,i})}; \hat{\bb\mu}_k^{(\hat{\w}_{g,i})}, \hat{\bb\Sigma}_{reg,k}^{(\hat{\w}_{g,i})}\right)\Bigg) \leq -\frac{p}{2} \min_k \ln\tilde{b}_k(\rho_k, \bb{T}_k).
        \end{align*}
        Thus, the objective function has to explode.

 \textbf{Proof of part b2.}
        Assume that the objective function of the estimators $\hat{\bb{\pi}}$, $\hat{\bb{\mu}}$, $ \hat{\bb{\Sigma}}$, $\hat{\bb{W}}$ is finite. Then $ \hat{\bb{\Sigma}}_{reg, k}$ are regular and not exploding due to part b1.
        For all groups $g$ there exists at least one observation $\x_{g,i^*(g)} \in (A^g \cup B^g) \cap \bb{Z}^g$ such that $\hat{w}_{g,i^*(g)j^*} = 1$. Let $C_1 =\min_{k, \hat{\w},j}  \hat{\bb\Sigma}_{reg,k, jj}^{(\hat{\w})} > 0$ and $C_2 =\min_{k, \hat{\w}, j} (\hat{\bb\Sigma}_{reg,k}^{(\hat{\w})})^{-1}_{jj} > 0$ (see part b1), then it holds
        \begin{align*}
            \ln \Bigg(\sum_{k = 1}^N & \hat{\pi}_{g, k}   \varphi\left(\x_{g,i^*(g)}^{(\hat{\w}_{g,i^*(g)})}; \hat{\bb\mu}_k^{(\hat{\w}_{g,i^*(g)})}, \hat{\bb\Sigma}_{reg,k}^{(\hat{\w}_{g,i^*(g)})}\right)\Bigg) \notag\\
            & \leq   -\frac{1}{2}\min_k (p-1)\ln\tilde{b}_k(\rho_k, \bb{T}_k)  -\frac{1}{2} \min_k \ln \max_{j = 1, \ldots, p} \hat{\bb\Sigma}_{reg,k, jj}^{(\hat{\w}_{g,i^*(g)})} \\
            & \quad \quad -\frac{1}{2} \min_k  \Bigg((\x_{g,i^*(g)}^{(\hat{\w}_{g,i^*(g)})} - \hat{\bb\mu}_k^{(\hat{\w}_{g,i^*(g)})})' (\hat{\bb\Sigma}_{reg,k}^{(\hat{\w}_{g,i^*(g)})})^{-1} 
            (\x_{g,i^*(g)}^{(\hat{\w}_{g,i^*(g)})} - \hat{\bb\mu}_k^{(\hat{\w}_{g,i^*(g)})})\Bigg)  \\
            & \leq  -\frac{1}{2}\min_k (p-1)\ln\tilde{b}_k(\rho_k, \bb{T}_k) -\frac{1}{2} \ln C_1 \notag \\
            & \quad \quad -\frac{1}{2} C_2 \min_k \Bigg( (x_{g,i^*(g)j^*}- \hat{\mu}_{k, j^*})^2\Bigg).
        \end{align*}
        There are $N$ many observations observed in $j^*$ that move increasingly far away from each other in variable $j^*$. Since there exists $l',l$ such that to $|\hat{{\mu}}_{l',j^*} - \hat{{\mu}}_{l,j^*}| < \tilde{b}$ there are only $N-1$ location estimates that move infinitely far away from each other. It follows that $\max_g\min_k (x_{g,i^*(g)j^*}- \hat{\mu}_{k, j^*})^2 \rightarrow \infty$ and thus, there is one term in the objective function that explodes, while the others are bounded (see part b1).

   \textbf{Proof of part b3.} 
        From the proof of part b2
        together with 
        \begin{align*}
            (\x_{g,i}^{(\w_{g,i})} - \hat{\bb\mu}_k^{(\w_{g,i})})' 
            (\hat{\bb\Sigma}_{reg,k}^{(\w_{g,i})})^{-1} (\x_{g,i}^{(\w_{g,i})} - \hat{\bb\mu}_k^{(\w_{g,i})}) 
            \geq \lambda_p((\hat{\bb\Sigma}_{reg,k}^{(\w_{g,i})})^{-1}) \sum_{\substack{j: w_{g,ij} = 1}} | x_{g,ij} - \hat{\mu}_{k,j}|^2
        \end{align*}
        for all $k = 1, \ldots, N$, we know that
        \begin{align*}
            \ln \Bigg(\sum_{k = 1}^N & \hat{\pi}_{g, k}   \varphi\left(\x_{g,i}^{(\w_{g,i})}; \hat{\bb\mu}_k^{(\w_{g,i})}, \hat{\bb\Sigma}_{reg,k}^{(\w_{g,i})}\right)\Bigg) \notag\\
            & \leq\underbrace{ -\frac{1}{2}\min_k (p-1)\ln\tilde{b}_k(\rho_k, \bb{T}_k)  -\frac{1}{2} \min_k \ln \max_{j = 1, \ldots, p} \hat{\bb\Sigma}_{reg,k, jj}^{(\w_{g,i})}}_{\text{bounded by part b1 and finite objective function}} \\
            & \quad \quad -\frac{1}{2} \min_k  \Bigg( \lambda_p((\hat{\bb\Sigma}_{reg,k}^{(\w_{g,i})})^{-1}) \sum_{j : w_{g,ij} = 1} | x_{g,ij} - \hat{\mu}_{k,j}|^2\Bigg) \\
            & \leq\underbrace{-\frac{1}{2}\min_k (p-1)\ln\tilde{b}_k(\rho_k, \bb{T}_k)  -\frac{1}{2} \min_k \ln \max_{j = 1, \ldots, p} \hat{\bb\Sigma}_{reg,k, jj}^{(\w_{g,i})}-\frac{1}{2} \min_k \lambda_p((\hat{\bb\Sigma}_{reg,k}^{(\w_{g,i})})^{-1})}_{\text{bounded by part b1 and finite objective function}} \\
            & \quad \quad -\frac{1}{2} \min_k  \Bigg(  \sum_{\substack{j : w_{g,ij} = 1}} | x_{g,ij} - \hat{\mu}_{k,j}|^2\Bigg).
        \end{align*}
        Thus, for all $\x_{g,i} \in (A^g \cup B^g) \cap \bb{Z}^g$ the term        
        $$\min_k ( \sum_{\substack{j : w_{g,ij} = 1}} | x_{g,ij} - \hat{\mu}_{k,j}|^2 )$$ needs to stay bounded, otherwise the objective function would explode. It follows that for each $\x_{g,i}$ there exists a $k^*$ such that $||\x_{g,i}^{(\w_{g,i})} - \hat{\bb\mu}_{k^*}(\mathcal{Z}_m)^{(\w_{g,i})} ||  < \infty$. Due to Corollary~\ref{theorem:corollary} part b2 
        and the finite objective function, the corresponding $k^*$ is unique.
        
\end{proof}

\subsection{Proof of Theorem \ref{theorem:breakdownpoint}} \label{app:subsec:theorem}

\begin{proof}
    We first discuss the parts for which the proofs are more compact, then the parts with lengthier proofs.  
  
    \textbf{Proof of part b.}
    Clear, since the lowest eigenvalues are always bound away from zero \citep[see also proof of Theorem 2c in][]{Puchhammer2024}.
    
    \textbf{Proof of part e.}
    Since the constraint
    $\pi_{g,g} \geq \alpha \geq 0.5$ for all $g = 1, \ldots, N$ restricts the estimates $\hat{\bb{\pi}}(\mathcal{Z}_m)$ such that $ \hat{{\pi}}(\mathcal{Z}_m)_{g, g} \geq \alpha > 0.5$ for all $g$, the weight of each cluster $k$ is $\frac{1}{N}\sum_{g = 1}^N\hat{{\pi}}(\mathcal{Z}_m)_{g, k} \geq \frac{\alpha}{N} > 0$. Thus, all clusters have non-zero weight.

    \textbf{Proof of part a.}
    Assume, that there are up to $n_g-h_g$ cellwise outliers. By flagging all the cellwise outliers with $\hat{\bb{W}}$, there exists a solution with finite objective function according to Corollary \ref{theorem:corollary} part a and the optimal estimates have a finite objective function value. Denote the optimal estimates with $\hat{\bb{\pi}}  (\mathcal{Z})$, $\hat{\bb{\mu}}(\mathcal{Z})$, $ \hat{\bb{\Sigma}}(\mathcal{Z}) $, $\hat{\bb{W}}(\mathcal{Z})$ for the contaminated data and $\hat{\bb{\pi}}  (\mathcal{X})$, $\hat{\bb{\mu}}(\mathcal{X})$, $ \hat{\bb{\Sigma}}(\mathcal{X}) $, $\hat{\bb{W}}(\mathcal{X})$ for the uncontaminated data. 
    
    Based on the constraint for $h_g$, for each group $g$ and any pair of variables $j_1$ and $j_2$ there exist at least two uncontaminated observation $\x_{g,i},\x_{g,i'} \in A^g \cap \bb{Z}^g$ such that $\hat{\bb{w}}_{g,ij_1}(\mathcal{X}) = \hat{\bb{w}}_{g,ij_2}(\mathcal{X}) = 1$ and $\hat{\bb{w}}_{g,i'j_1}(\mathcal{Z}) = \hat{\bb{w}}_{g,i'j_2}(\mathcal{Z}) = 1$, respectively. Since the objective function is finite, it follows from Corollary \ref{theorem:corollary} 
    part b3 that for each $\x_{g,i}$ and $\x_{g,i'}$ there exists a unique $k_*$ and $k_*'$ such that $||{\x}_{g,i}^{(\hat{\bb{w}}_{g,i}(\mathcal{X}))} - \hat{\bb\mu}_{k_*}^{(\hat{\bb{w}}_{g,i}(\mathcal{X}))}(\mathcal{X}) || < \infty$ and $||{\x}_{g,i'}^{(\hat{\bb{w}}_{g,i'}(\mathcal{Z}))} - \hat{\bb\mu}_{k_*'}^{(\hat{\bb{w}}_{g,i'}(\mathcal{Z}))}(\mathcal{Z}) || < \infty$, respectively. 
    
    We show, that $k_*$  is the same over all pairs of variables. Let $j_1 = 1$ and $j_2 = 2$ and $\x_{g,i_1}$ be the corresponding observation where both variables are observed. There exists a unique $k_{1*}$ such that $||x_{g,i_11} - \hat{\mu}_{k_{1*}, 1}(\mathcal{X}) || < \infty$ and $||x_{g,i_1 2} - \hat{\mu}_{k_{1*}, 2}(\mathcal{X}) || < \infty$. For $j_1 = 2$ and $j_2 = 3$ there exists an observation $\x_{g,i_2}$ and a unique $k_{2*}$ such that $||x_{g,i_22} - \hat{\mu}_{k_{2*}, 2}(\mathcal{X}) || < \infty$ and $||x_{g,i_23} - \hat{\mu}_{k_{2*}, 3}(\mathcal{X}) || < \infty$. Since $|x_{g,i_12}- x_{g,i_22}| < \infty$
    in the ideal scenario,
    it follows from Corollary~\ref{theorem:corollary} part b2 
    that $k_{1*} = k_{2*}$. By induction it follows, that $k_*$ is the same for all variables.
    The same applies to $k_*'$.
    
    Since the distance between observations from $A_g$ are bounded in 
    the ideal scenario, also the distance between $\hat{\bb\mu}_{k_*}(\mathcal{X})$ and $\hat{\bb\mu}_{k_*'}(\mathcal{Z})$ is bounded in each variable and thus,
        $|| \hat{\bb\mu}_{k_*}(\mathcal{X}) - \hat{\bb\mu}_{k_*'}(\mathcal{Z}) ||^2 
        = \sum_{j = 1}^p   |\hat{\mu}_{{k_*},j}(\mathcal{X}) - \hat{\mu}_{{k_*'},j}(\mathcal{Z})|^2 < \infty$
    holds for the choice of $k_*$ and $k_*'$ based on a given group.
    
    Based on increasing distance of clusters to each other in the ideal scenario and
    Corollary ~\ref{theorem:corollary}
    part b2 and b3, for any given $k$ there exists exactly one group $g(k)$ such that the distance of $\hat{\bb\mu}_{k}(\mathcal{X})$ to observations from $A_{g(k)} \cap \Z_{g(k)}$ is bounded. 
    Following from above, for all $\hat{\bb\mu}_{k}(\mathcal{X})$ there exists $\hat{\bb\mu}_{k'(g(k))}(\mathcal{X})$ with
      $  || \hat{\bb\mu}_{k}(\mathcal{X}) - \hat{\bb\mu}_{k'(g(k))}(\mathcal{Z}) || < \infty$
    and no breakdown occurs.
    
     \textbf{Proof of part c.}
    From Corollary~\ref{theorem:corollary} part a, we know for uncontaminated 
    data $\mathcal{X}_m$ that the objective function is finite for the minimizers, and from Corollary~\ref{theorem:corollary} part b1, we know that the covariance matrix estimates are not exploding. Thus, a breakdown occurs only when there exists an $l$ such that $\lambda_1(\hat{\bb{\Sigma}}_{reg, l}(\mathcal{Z}_m)) \rightarrow \infty $. 
    
    Assume that for each group $g$ only up to $n_g-h_g$ cells per column are contaminated and outlying in the idealized scenario. It is possible to set $\hat{\bb W}$ such that $w_{\bb{y},j} = 0$ for all cells of added outliers $\bb{y}$ exactly when $w(\bb{y})_j = 0$. Thus, there exists a copy of an uncontaminated ideal scenario $\tilde{\mathcal{X}}_m$, that has the same values if cells are observed as indicated by $\hat{\bb W}$ and non-outlying values if $w_{\bb{y},j} = 0$. From Corollary~\ref{theorem:corollary} part a, for the given $\hat{\bb W}$ it follows that there exist candidate estimates with finite objective function for $\tilde{\mathcal{X}}_m$ and the value of the objective function on $\mathcal{X}_m \cup \mathcal{Y}_m$ is the same (and finite). From Corollary~\ref{theorem:corollary} part b1, it follows that if a covariance matrix explodes, the objective function explodes as well and the estimates cannot be minimizers of the objective function because there exist candidate estimates with a lower objective function. Thus, the breakdown point is at least $\min_g \{(n_g-h_g+1)/n_g\}$.

    \textbf{Proof of part d.}
    We construct a counter example that shows that the covariance needs to explode if the location estimator
    did not break down in 
    the idealized scenario.

    Given an uncontaminated sample $\mathcal{X}$ and one variable $j^*$, we assume that all cells from variable $j^*$ of the uncontaminated data are positive. The uncontaminated data $\mathcal{X}$ is partitioned into groups $\Z^1, \ldots, \Z^N$ and only one group $g'$ is contaminated with $n_{g'} - h_{g'} + 1$ many cellwise outliers $\mathcal{Y}$, outlying only in variable $j^*$ with negative values. Thus, for any $\bb W_{g'}$ there is always at least one outlying cell in variable $j^*$, that is observed. The data used in the contaminated case is then $\mathcal{Z} = \bigcup_{g = 1}^N \Z^g$. For an estimator $\hat{\bb W} (\mathcal{Z})$ let $\tilde{\bb y}$ be an outlier for which variable $j^*$ is observed, $w(\tilde{\bb y})_{j^*} = 0$ and $\hat{w}_{\tilde{\bb y},j^*} = 1$. 

    Let $\hat{t}_k(\bb z)$ denote the probability of an observation $\bb z\in \Z_g$ that it belongs to distribution $k$ given the estimates $\hat{\bb{\pi}}(\mathcal{Z})$, $\hat{\bb{\mu}}(\mathcal{Z})$, $ \hat{\bb{\Sigma}}(\mathcal{Z}) $ and $\hat{\bb{W}}(\mathcal{Z})$,
    \begin{align*}
            \hat{t}_k(\bb z) = \frac{\hat{\pi}_{g, k} \varphi\left(\bb z^{(\hat{\bb w}_{\bb z})}; \hat{\bb{\mu}}_{k}^{(\hat{\bb w}_{\bb z})}, \hat{\bb{\Sigma}}_{reg,k}^{(\hat{\bb w}_{\bb z})}\right) }{\sum_{l = 1}^N \hat{\pi}_{g, l} \varphi\left(\bb z^{(\hat{\bb w}_{\bb z})}; \hat{\bb{\mu}}_l^{(\hat{\bb w}_{\bb z})}, \hat{\bb{\Sigma}}_{reg,l}^{(\hat{\bb w}_{\bb z})}\right)}.
    \end{align*}
    Note that due to the regularity of the covariance estimates the density goes to zero, $\varphi\left(\bb z^{(\hat{\bb w}_{\bb z})}; \hat{\bb{\mu}}_{k}^{(\hat{\bb w}_{\bb z})}, \hat{\bb{\Sigma}}_{reg,k}^{(\hat{\bb w}_{\bb z})}\right) \rightarrow 0 $, if $||\bb z^{(\hat{\bb w}_{\bb z})} -  \hat{\bb{\mu}}_{k}^{(\hat{\bb w}_{\bb z})}||_2 \rightarrow \infty$ and thus $ \hat{t}_k(\bb z) \rightarrow 0$. 
    Since there are $N$ many possible distributions, for $\tilde{\bb y}$ there exists a distribution $k^*$ with $\hat{t}_{k^*}(\tilde{\bb y}) \geq \frac{1}{N} > 0$. 

    Upon convergence of the EM-algorithm the location estimate of the $j^*$-th variable of distribution $k^*$ is 
    \begin{align*}
        \hat{{\mu}}_{k^*j^*}(\mathcal{Z}) & =\frac{1}{\bar{t}_{k^*}}  \sum_{g = 1}^N \sum_{\bb{z} \in \Z_g} \hat{t}_{k^*}(\bb z) \hat{ z}_{j^*},
    \end{align*}
    with $\bar{t}_{k^*} = \sum_{g = 1}^N \sum_{\bb{z} \in \Z_g} \hat{t}_{k^*}(\bb z)$ and $\hat{ z}_{j^*}$ being the imputed value of $\bb z$ for variable $j^*$. 
    For $\hat{w}_{\bb z,j^*} = 1$ it is equal to $z_{j^*}$  and for $\hat{w}_{\bb z,j^*} = 0$ it is equal to 
    \begin{align*}
        \hat{\mu}_{k^* j^*} + \hat{\bb{\Sigma}}_{reg,k^*}^{(j^*| \hat{\bb{w}}_{\bb z})} \left(\hat{\bb{\Sigma}}_{reg,k^*}^{(\hat{\bb{w}}_{\bb z}| \hat{\bb{w}}_{\bb z})} \right)^{-1} \left( \bb z^{(\hat{\bb{w}}_{\bb z})} - \hat{\bb{\mu}}_{k^*}^{(\hat{\bb{w}}_{\bb z})}\right),
    \end{align*}
    where $\hat{\bb{\Sigma}}_{reg,k^*}^{(j^*| \hat{\bb{w}}_{\bb z})}$ indicates the submatrix $\hat{\bb{\Sigma}}_{reg,k^*}$ consisting of the $j^*$-th row and the observed variables of $\bb z$ as columns.

    Denote the set of observations of $\mathcal{Z}$ where variable $j^*$ is observed as $\mathcal{O}_{j^*} = \{ \bb z \in \mathcal{Z}: \hat{{w}}_{\bb z, j^*} = 1\}$, and let $\mathcal{O}_{j^*}^c$ denote its complement. We can then separate the sum term into 
    \begin{align*}
        \hat{{\mu}}_{k^*j^*}(\mathcal{Z}) 
        & =\frac{1}{\bar{t}_{k^*}}  \sum_{g = 1}^N \sum_{\bb{z} \in \Z_g} \hat{t}_{k^*}(\bb z) \hat{ z}_{j^*} \\
        & =\frac{1}{\bar{t}_{k^*}}  \sum_{g \neq g'} \sum_{\bb{x} \in \Z_g} \hat{t}_{k^*}(\bb x) \hat{ x}_{j^*} + \frac{1}{\bar{t}_{k^*}} \sum_{\bb{z} \in \Z_{g'}} \hat{t}_{k^*}(\bb z) \hat{z}_{j^*} \\
         & =\frac{1}{\bar{t}_{k^*}}  \sum_{g=1}^N \sum_{\bb{x} \in \Z_g \cap \mathcal{X}} \hat{t}_{k^*}(\bb x) \hat{ x}_{j^*} + \frac{1}{\bar{t}_{k^*}} \sum_{\bb{y} \in \Z_{g'} \cap \mathcal{Y}} \hat{t}_{k^*}(\bb y) \hat{y}_{j^*} 
     \end{align*}
     and further into
     \begin{align*}
     \hat{{\mu}}_{k^*j^*}(\mathcal{Z})
         & =\frac{1}{\bar{t}_{k^*}}  \sum_{g=1 }^N \sum_{\bb{x} \in \Z_g \cap \mathcal{X}\cap \mathcal{O}_{j^*}} \hat{t}_{k^*}(\bb x) \hat{ x}_{j^*} + \frac{1}{\bar{t}_{k^*}}  \sum_{g=1 }^N \sum_{\bb{x} \in \Z_g \cap \mathcal{X}\cap \mathcal{O}_{j^*}^c} \hat{t}_{k^*}(\bb x) \hat{ x}_{j^*} \\
         &\quad \quad  + \frac{1}{\bar{t}_{k^*}} \sum_{\bb{y} \in \Z_{g'} \cap \mathcal{Y}\cap \mathcal{O}_{j^*}} \hat{t}_{k^*}(\bb y) \hat{ y}_{j^*} + \frac{1}{\bar{t}_{k^*}} \sum_{\bb{y} \in \Z_{g'} \cap \mathcal{Y}\cap \mathcal{O}_{j^*}^c} \hat{t}_{k^*}(\bb y) \hat{y}_{j^*}.
    \end{align*}
    Together with the expressions for the imputed values $\hat{z}_{j^*}$, we get
    \begin{align*}
         \hat{{\mu}}_{k^*j^*}(\mathcal{Z}) & = \frac{1}{\bar{t}_{k^*}}  \sum_{g= 1}^N \sum_{\bb{x} \in \Z_g \cap \mathcal{X}\cap \mathcal{O}_{j^*}} \hat{t}_{k^*}(\bb x) x_{j^*} + \frac{1}{\bar{t}_{k^*}} \sum_{\bb{y} \in \Z_{g'} \cap \mathcal{Y}\cap \mathcal{O}_{j^*}} \hat{t}_{k^*}(\bb y) y_{j^*}\\ 
         & \quad \quad + \frac{1}{\bar{t}_{k^*}}  \sum_{g= 1}^N \sum_{\bb{x} \in \Z_g \cap \mathcal{X}\cap \mathcal{O}_{j^*}^c} \hat{t}_{k^*}(\bb x)\Big[ \hat{\mu}_{k^* j^*}(\mathcal{Z}) + \hat{\bb{\Sigma}}_{reg,k^*}^{(j^*| \hat{\bb{w}}_{\bb x})} \left(\hat{\bb{\Sigma}}_{reg,k^*}^{(\hat{\bb{w}}_{\bb x}| \hat{\bb{w}}_{\bb x})} \right)^{-1} \\
         & \quad \quad \times\left( \bb x^{(\hat{\bb{w}}_{\bb x})} - \hat{\bb{\mu}}_{k^*}(\mathcal{Z})^{(\hat{\bb{w}}_{\bb x})}\right) \Big] + \frac{1}{\bar{t}_{k^*}} \sum_{\bb{y} \in \Z_{g'} \cap \mathcal{Y}\cap \mathcal{O}_{j^*}^c} \hat{t}_{k^*}(\bb y) \Big[ \hat{\mu}_{k^* j^*}(\mathcal{Z}) \\
         &\quad \quad + \hat{\bb{\Sigma}}_{reg,k^*}^{(j^*| \hat{\bb{w}}_{\bb y})} \left(\hat{\bb{\Sigma}}_{reg,k^*}^{(\hat{\bb{w}}_{\bb y}| \hat{\bb{w}}_{\bb y})} \right)^{-1} \left( \bb y^{(\hat{\bb{w}}_{\bb y})} - \hat{\bb{\mu}}_{k^*}(\mathcal{Z})^{(\hat{\bb{w}}_{\bb y})}\right) \Big].
    \end{align*}
    
    Subtracting the estimated location on the uncontaminated sample $\hat{{\mu}}_{k^*j^*}(\mathcal{X})$ and using that the location estimator
    did not break down, we further get
    \begin{align*}
        &\underbrace{\hat{{\mu}}_{k^*j^*}(\mathcal{Z}) -\hat{{\mu}}_{k^*j^*}(\mathcal{X})}_{\text{bounded}} =\\
        &\quad \quad =\frac{1}{\bar{t}_{k^*}}  \sum_{g= 1}^N \sum_{\bb{x} \in \Z_g \cap \mathcal{X}\cap \mathcal{O}_{j^*}} \hat{t}_{k^*}(\bb x) \underbrace{(x_{j^*}-\hat{{\mu}}_{k^*j^*}(\mathcal{X}))}_{*} \\
        &\quad \quad  + \frac{1}{\bar{t}_{k^*}} \sum_{\bb{y} \in \Z_{g'} \cap \mathcal{Y}\cap \mathcal{O}_{j^*}} \hat{t}_{k^*}(\bb y) \underbrace{(y_{j^*} -\hat{{\mu}}_{k^*j^*}(\mathcal{X}))}_{\rightarrow -\infty}\\ 
         & \quad \quad + \frac{1}{\bar{t}_{k^*}}  \sum_{g= 1}^N \sum_{\bb{x} \in \Z_g \cap \mathcal{X}\cap \mathcal{O}_{j^*}^c} \hat{t}_{k^*}(\bb x)\\
         & \quad \quad \quad \quad \quad \left[ \underbrace{\hat{\mu}_{k^* j^*}(\mathcal{Z})-\hat{{\mu}}_{k^*j^*}(\mathcal{X}) }_{\text{bounded}}+ \hat{\bb{\Sigma}}_{reg,k^*}^{(j^*| \hat{\bb{w}}_{\bb x})} \left(\hat{\bb{\Sigma}}_{reg,k^*}^{(\hat{\bb{w}}_{\bb x}| \hat{\bb{w}}_{\bb x})} \right)^{-1} \underbrace{\left( \bb x^{(\hat{\bb{w}}_{\bb x})} - \hat{\bb{\mu}}_{k^*}(\mathcal{Z})^{(\hat{\bb{w}}_{\bb x})}\right)}_{*} \right] \\
         & \quad \quad + \frac{1}{\bar{t}_{k^*}} \sum_{\bb{y} \in \Z_{g'} \cap \mathcal{Y}\cap \mathcal{O}_{j^*}^c} \hat{t}_{k^*}(\bb y) \\
         & \quad \quad \quad \quad \quad \left[ \underbrace{\hat{\mu}_{k^* j^*}(\mathcal{Z}) -\hat{{\mu}}_{k^*j^*}(\mathcal{X})}_{\text{bounded}} + \hat{\bb{\Sigma}}_{reg,k^*}^{(j^*| \hat{\bb{w}}_{\bb y})} \left(\hat{\bb{\Sigma}}_{reg,k^*}^{(\hat{\bb{w}}_{\bb y}| \hat{\bb{w}}_{\bb y})} \right)^{-1} \left( \bb y^{(\hat{\bb{w}}_{\bb y})} - \hat{\bb{\mu}}_{k^*}(\mathcal{Z})^{(\hat{\bb{w}}_{\bb y})}\right) \right]. 
    \end{align*} 
   Due to Corollary~\ref{theorem:corollary} part a, the objective function of the uncontaminated sample is finite and due to Theorem~\ref{theorem:breakdownpoint}, part b.~and c., the estimated covariances on the uncontaminated sample are bounded and regular. Since we assume that the location estimator did not break down, 
   variables cannot be separated. 
    Thus, for all $\x \in \mathcal{X}$ there exists $k$ such that $|\x^{(\w)}-\hat{\bb{\mu}}_{k}^{(\w)}(\mathcal{X})|$ bounded for all feasible $\bb w$ -- otherwise the objective function would explode -- and thus, if $|\x^{(\w)}-\hat{\bb{\mu}}_{l}^{(\w)}(\mathcal{X})| \rightarrow \infty$ for $l \neq k$ it follows that $\hat{t}_{l}(\x) \rightarrow 0 $ and $t_{l}(\x)(\x^{(\w)}-\hat{\bb{\mu}}_{l}^{(\w)}(\mathcal{X})) \rightarrow 0$. 
    Thus, all subtraction parts marked with $*$ are bounded. The last term $\hat{t}_{k^*}(\bb y)\left(\bb y^{(\hat{\bb{w}}_{\bb y})} - \hat{\bb{\mu}}_{k^*}(\mathcal{Z})^{(\hat{\bb{w}}_{\bb y})}\right)$ is also bounded, since outliers are only outlying in variable $j^*$ and otherwise they are part of one cluster. Thus, with the same argument as for uncontaminated data, the term is bounded.
    
    Since $\hat{t}_{k^*}(\tilde{\bb y}) \geq 1/N$ and $\tilde{\bb y} \in \Z_{g'} \cap \mathcal{Y}\cap \mathcal{O}_{j^*}$ the whole sum of $\in \Z_{g'} \cap \mathcal{Y}\cap \mathcal{O}_{j^*}$ goes to minus infinity. To enable the equality of both sides, at least one of the covariances needs to explode (in variable $j^*$) to counteract the exploding sum.        
\end{proof}

\section{Simulation Study: Additional Details}\label{app:simulations}

\subsection{Benchmarks: Additional Details} \label{app:benchmarks}
We provide additional implementation details on the benchmark methods.

\textbf{cellMCD.} The calculation of the cellMCD is stopped at the initialization stage if too many marginal outliers are present or if $p$ is larger than $n_i$, in which case the runs are not included for this estimator.
The calculation stopped at the initialization stage for at most 21\% of the simulation runs across all considered simulation scenarios 1 to 4 with the ALYZ covariance structure, and for scenario 5 no runs are completed. Note that we also ran experiments with a Toeplitz covariance structure \citep[similar to][]{Raymaekers2023}. In those settings, cellMCD was oftentimes more competitive to cellMG-GMM but the problem of failed simulation runs was more pronounced. Results are available upon request.  

\textbf{OC.} \cite{Ollerer2015} only provide a cellwise robust, supervised covariance estimator, but no location estimate. Furthermore, cellwise outliers are not flagged as part of the estimation process, hence no results on outlier detection are included for this benchmark.

\textbf{cellGMM.} The cellGMM encounters internal errors during the computation (for scenario 4 often and for scenario 5 always), that are likely linked to increased singularity issues, in which case the runs are not included for this estimator.

Finally, for \textbf{mclust} and \textbf{cellGMM} there is no clear attribution of an estimated cluster to a group. Thus, both will only be calculated for the two-group settings and clusters will be assigned to groups in the most favorable way. In particular, the assignment of groups to clusters is such that it minimizes the 
KL-divergence. It is possible that the performance of estimating locations might suffer for the considered performance criteria.

\subsection{Additional Simulation Results for ALYZ Structure} \label{app:si:results}

\begin{figure}[H]
    \centering
    \includegraphics[width=0.9\textwidth, trim = {0cm 0.3cm 0cm 0.25cm}, clip]{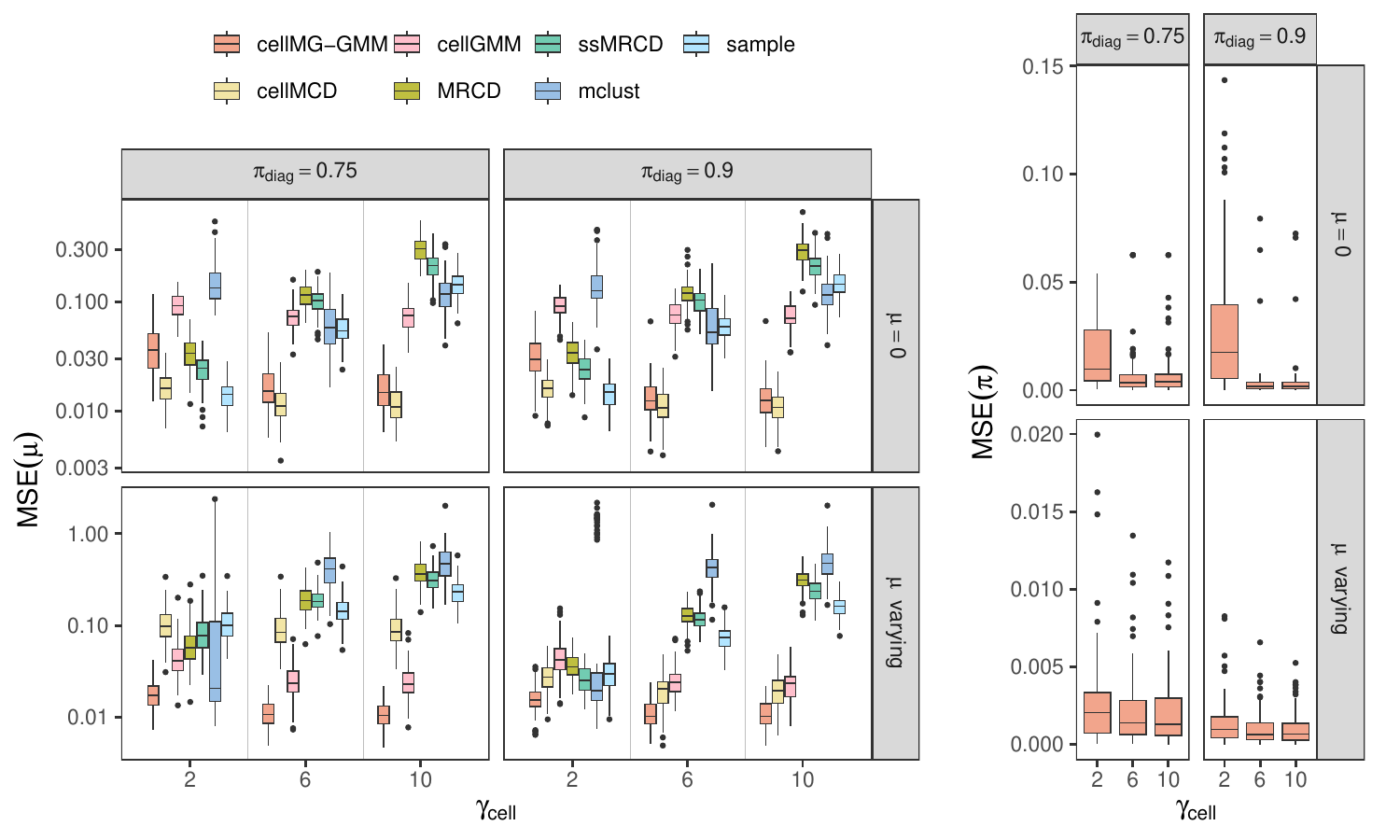}
    \caption{Parameter estimates for Scenario 1
    ($N = 2, p = 10, n_1 = n_2 = 100$).
     In the left panel MSE of the means $\bb \mu_k$, in the right of the mixture probabilities $\bb \pi$.}
    \label{fig:sim_balanced2_ALYZCOR_param}
\end{figure}

\begin{figure}[H]
    \centering
    \includegraphics[width=0.9\textwidth, trim = {0cm 0.3cm 0cm 0.25cm}, clip]{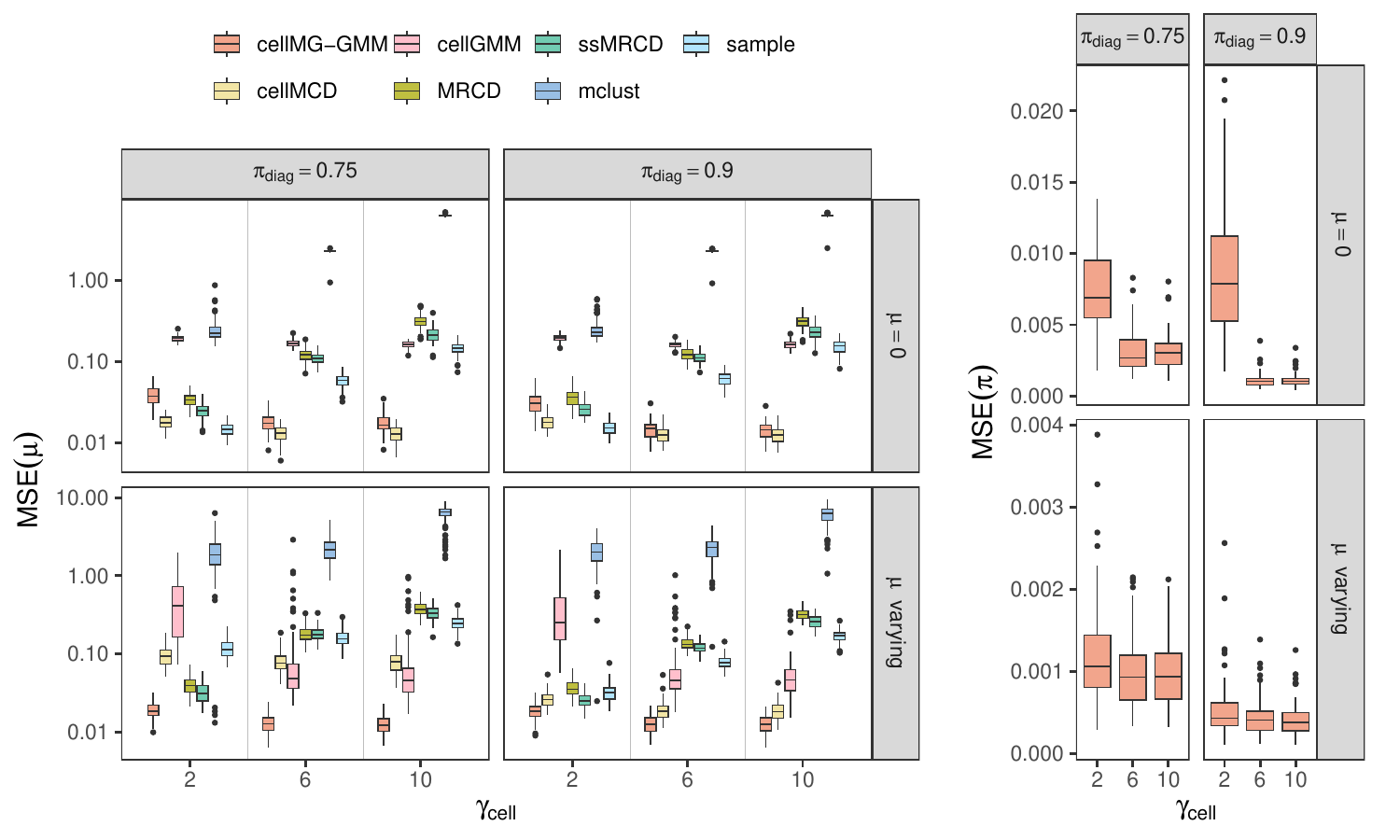}
    \caption{Parameter estimates for Scenario 2 
    ($N = 5, p = 10, n_i = 100$).
    In the left panel MSE of the means $\bb \mu_k$, in the right of the mixture probabilities $\bb \pi$.}
    \label{fig:sim_balanced5_ALYZCOR_param}
\end{figure}

\begin{figure}[H]
    \centering
     \includegraphics[width=\textwidth]{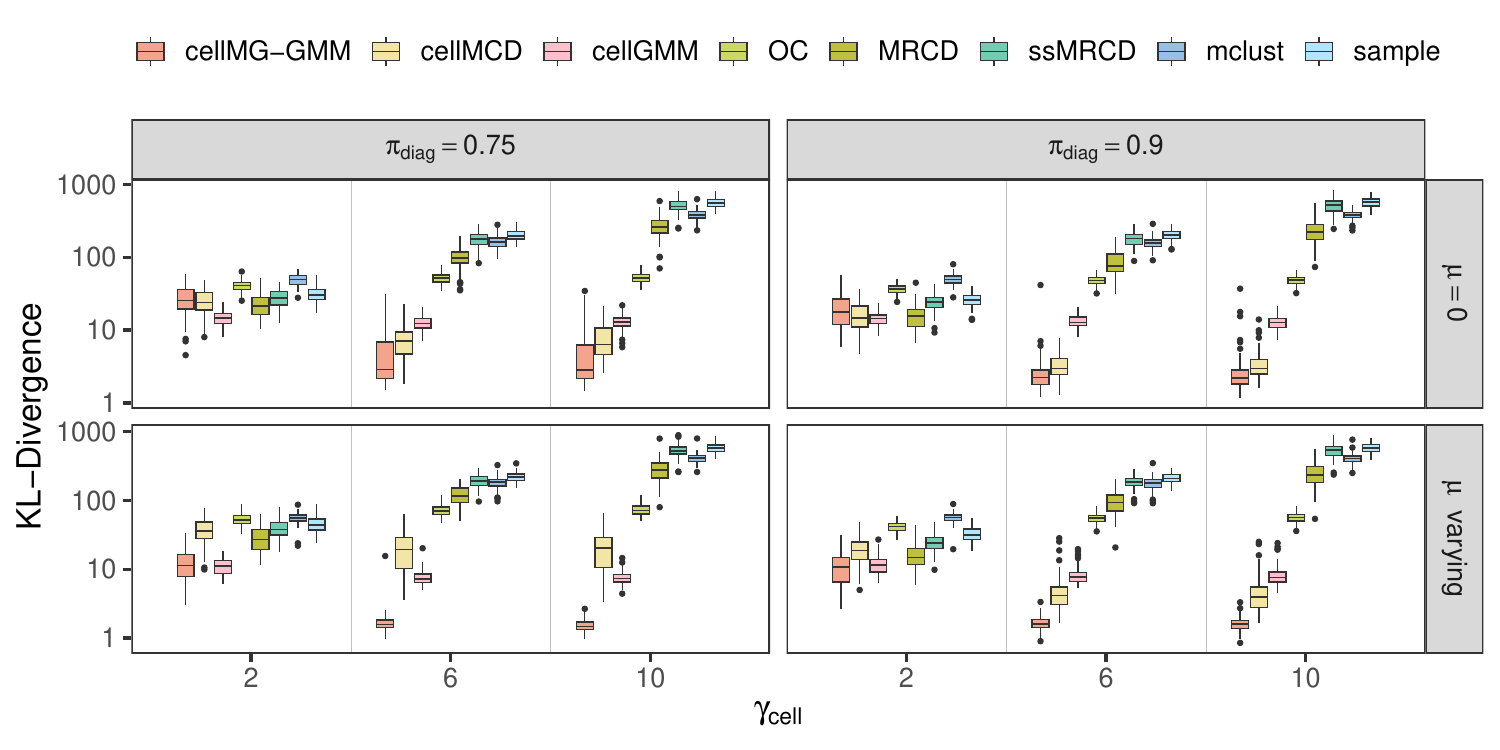}
    \includegraphics[width=\textwidth]{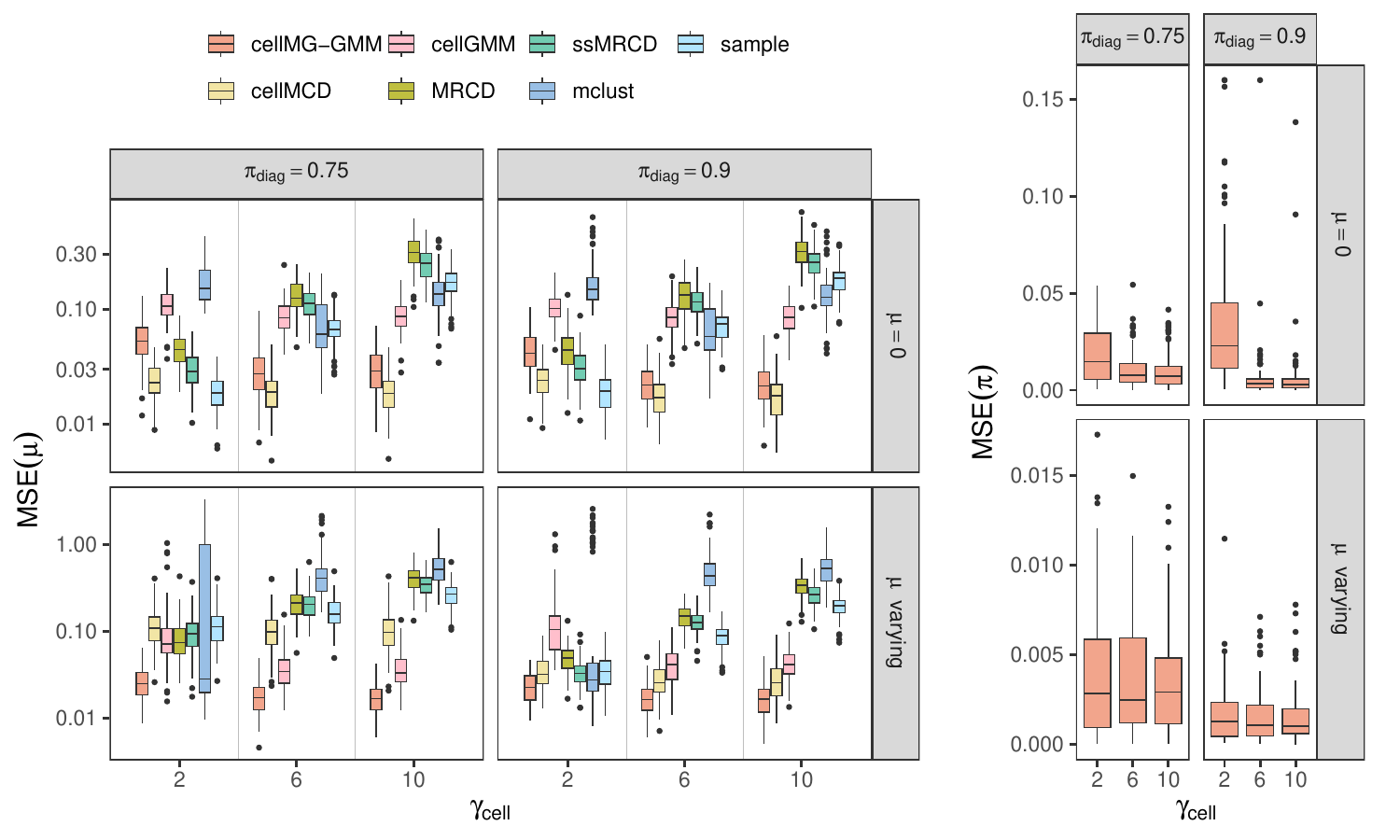}
    \caption{Parameter estimates for scenario 3, the unbalanced groups ($N = 2, p= 10, n_1 = 100, n_2 = 50$) with ALYZ structured covariance matrices. Top panel: KL-divergence of the covariance estimates. Bottom left panel: MSE of the mean estimation. Bottom right panel: MSE of the mixture probabilities $\bb \pi$.}
    \label{fig:sim_unbalanced_ALYZ_param}
\end{figure}

\begin{figure}[H]
    \centering
     \includegraphics[width=\textwidth]{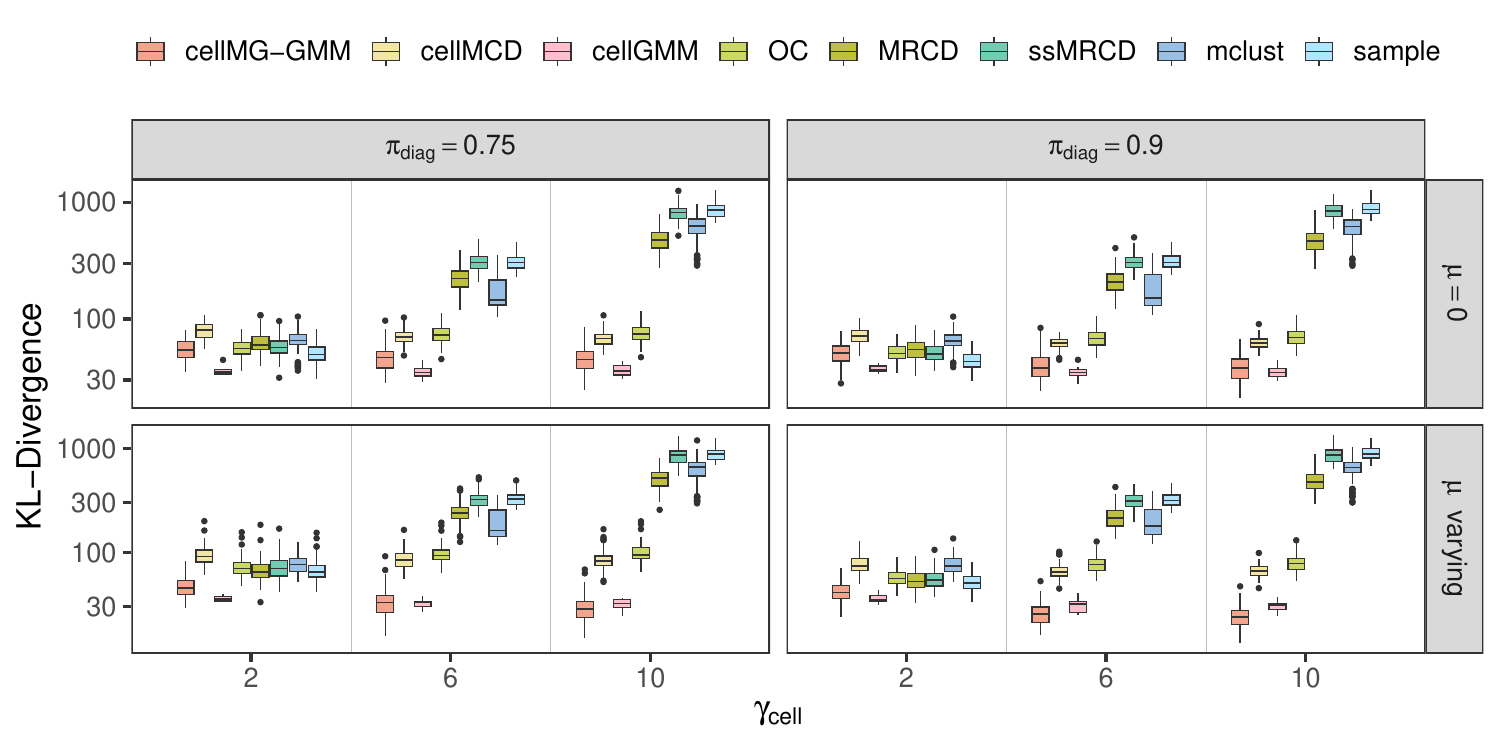}
    \includegraphics[width=\textwidth]{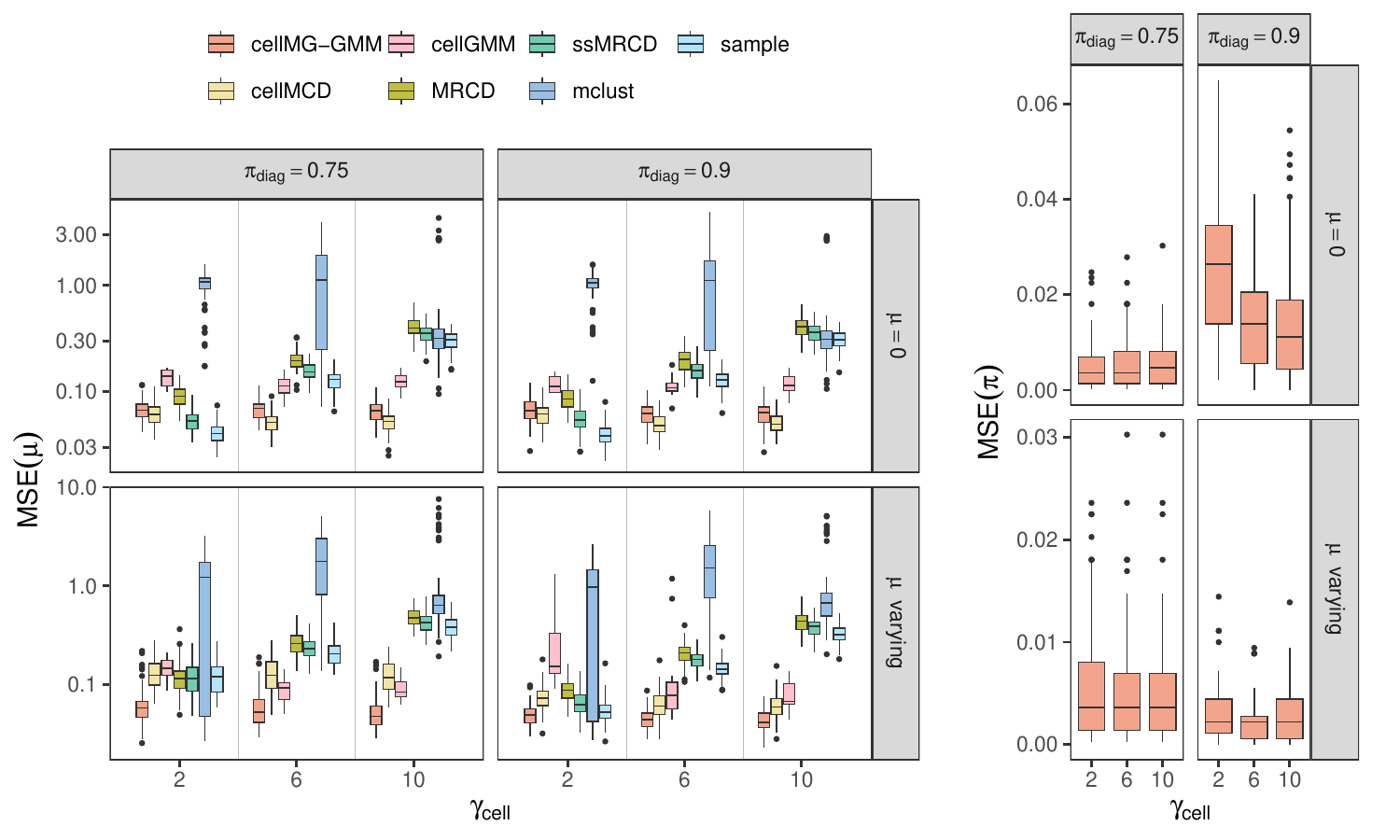}
    \caption{Parameter estimates for scenario 4 ($N = 2, p= 30, n_1 = n_2 = 40$) with ALYZ structured covariance matrices. Top panel: KL-divergence of the covariance estimates. Bottom left panel: MSE of the mean estimation. Bottom right panel: MSE of the mixture probabilities $\bb \pi$.}
    \label{fig:sim_mediumdim_ALYZ_param}
\end{figure}

\begin{figure}[H]
    \centering
     \includegraphics[width=\textwidth]{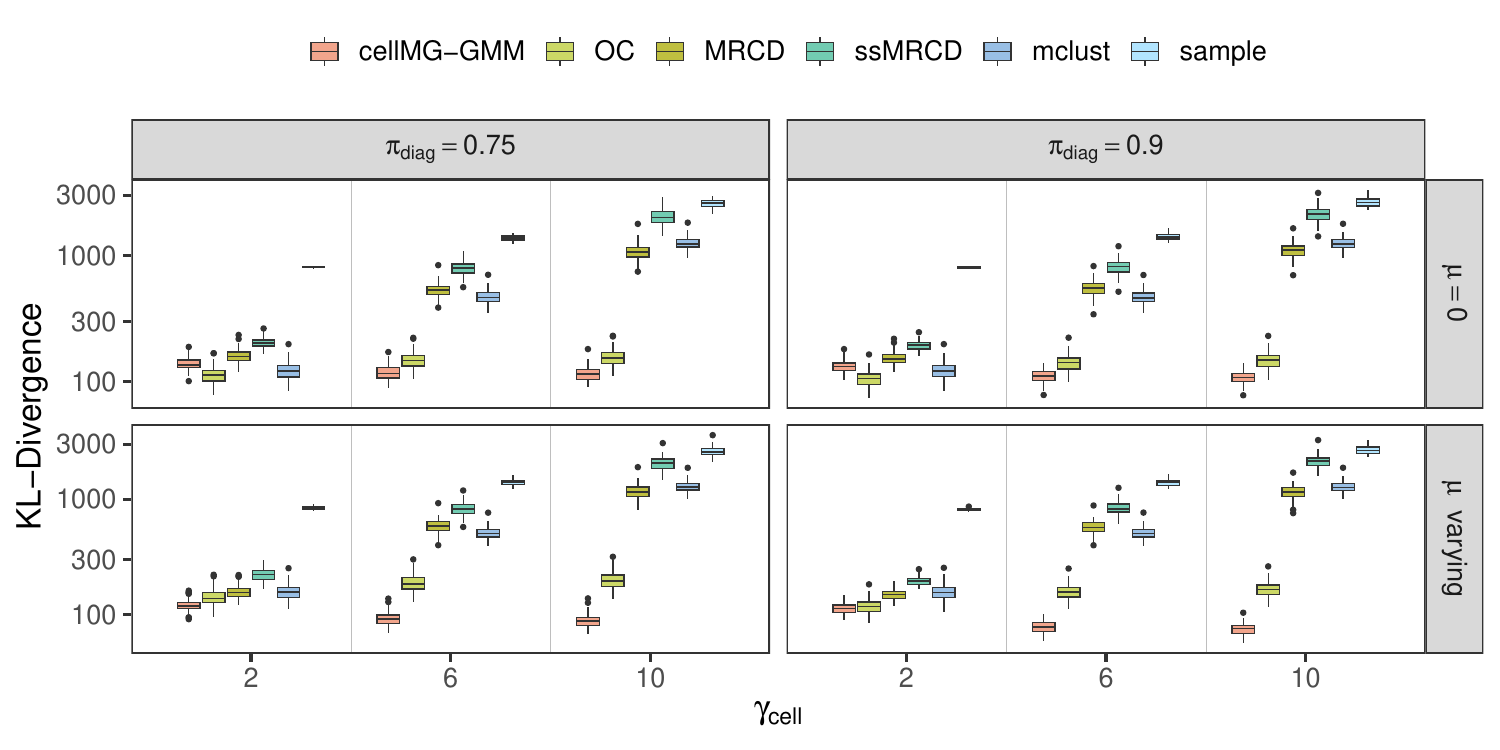}
    \includegraphics[width=\textwidth]{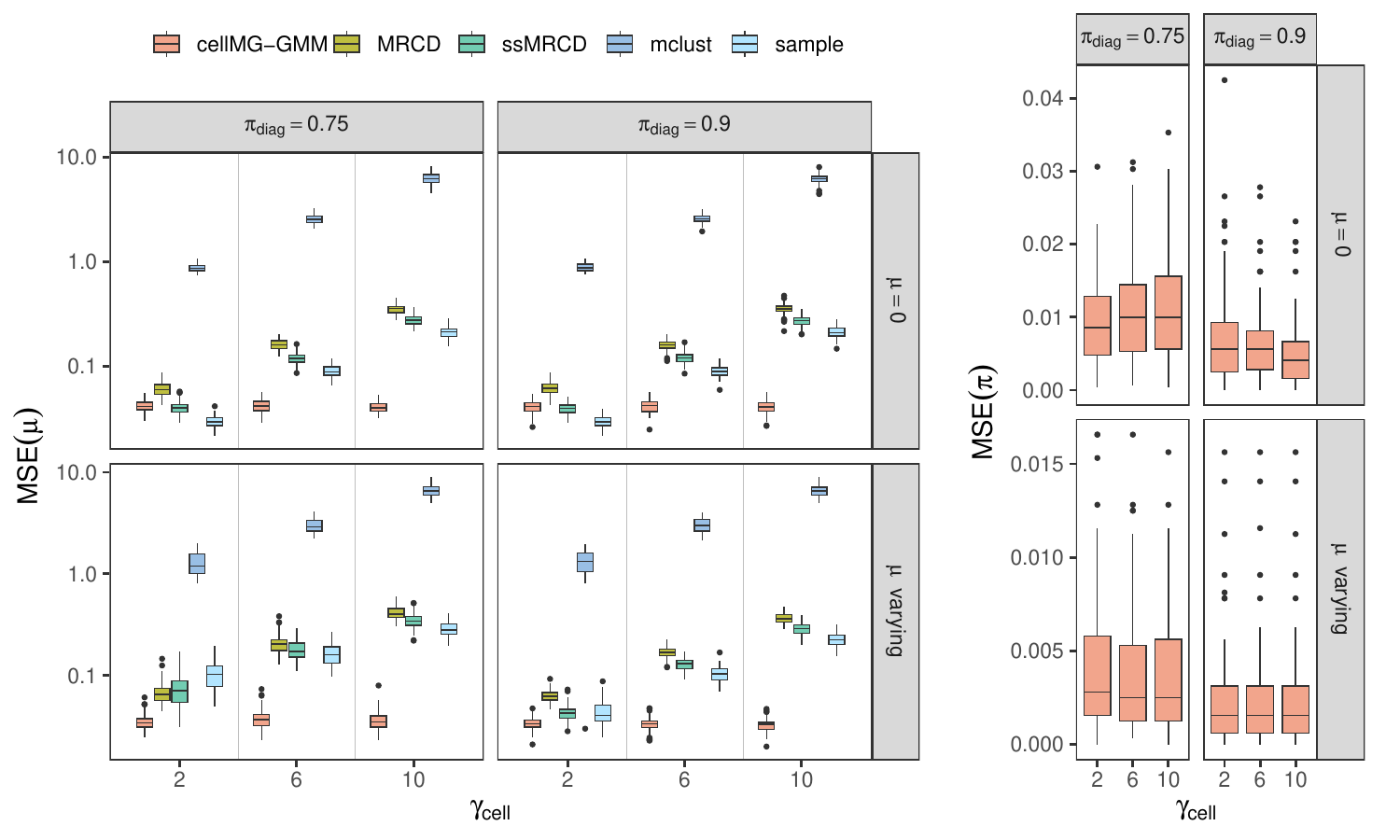}
    \caption{Parameter estimates for scenario 5 ($N = 2, p= 60, n_1 = n_2 = 40$) with ALYZ structured covariance matrices. Top panel: KL-divergence of the covariance estimates. Bottom left panel: MSE of the mean estimation. Bottom right panel: MSE of the mixture probabilities $\bb \pi$. No calculations of cellMCD and cellGMM were successful and are thus not included.}
    \label{fig:sim_highdim_ALYZ_param}
\end{figure}

\begin{figure}[H]
    \centering
    \includegraphics[width=\textwidth]{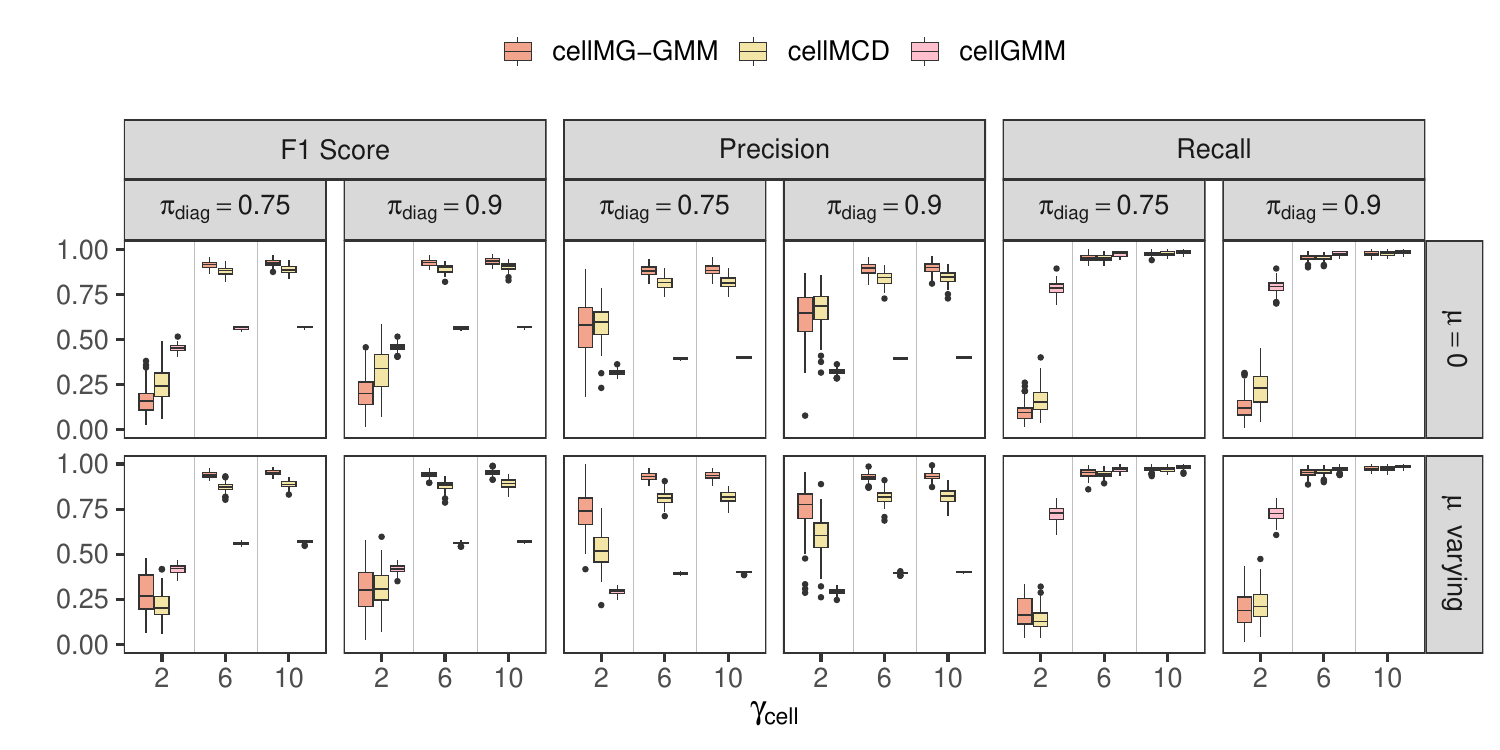}
    \caption{Performance of cellwise outlier detection in scenario 3, unbalanced groups ($N = 2, p= 10, n_1 = 100, n_2 = 50$) with ALYZ structured covariances evaluated by precision, recall and F1-score.}
    \label{fig:sim_unbalanced_ALYZ_W}
\end{figure}

\begin{figure}[H]
    \centering
    \includegraphics[width=\textwidth]{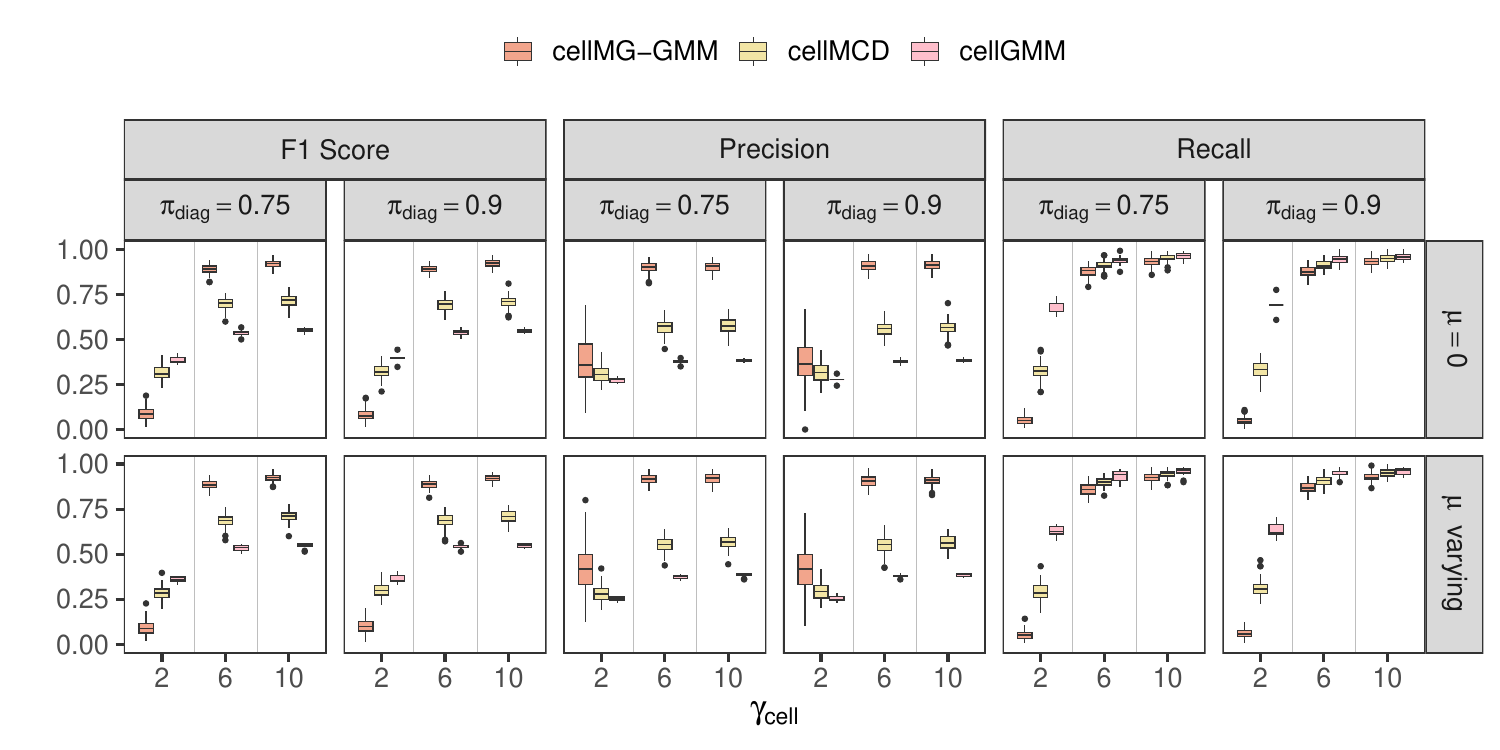}
    \caption{Performance of cellwise outlier detection in scenario 4 ($N = 2, p= 30, n_1 = n_2 = 40$) with ALYZ structured covariances evaluated by precision, recall and F1-score.}
    \label{fig:sim_mediumdim_ALYZ_W}
\end{figure}

\begin{figure}[H]
    \centering
    \includegraphics[width=\textwidth]{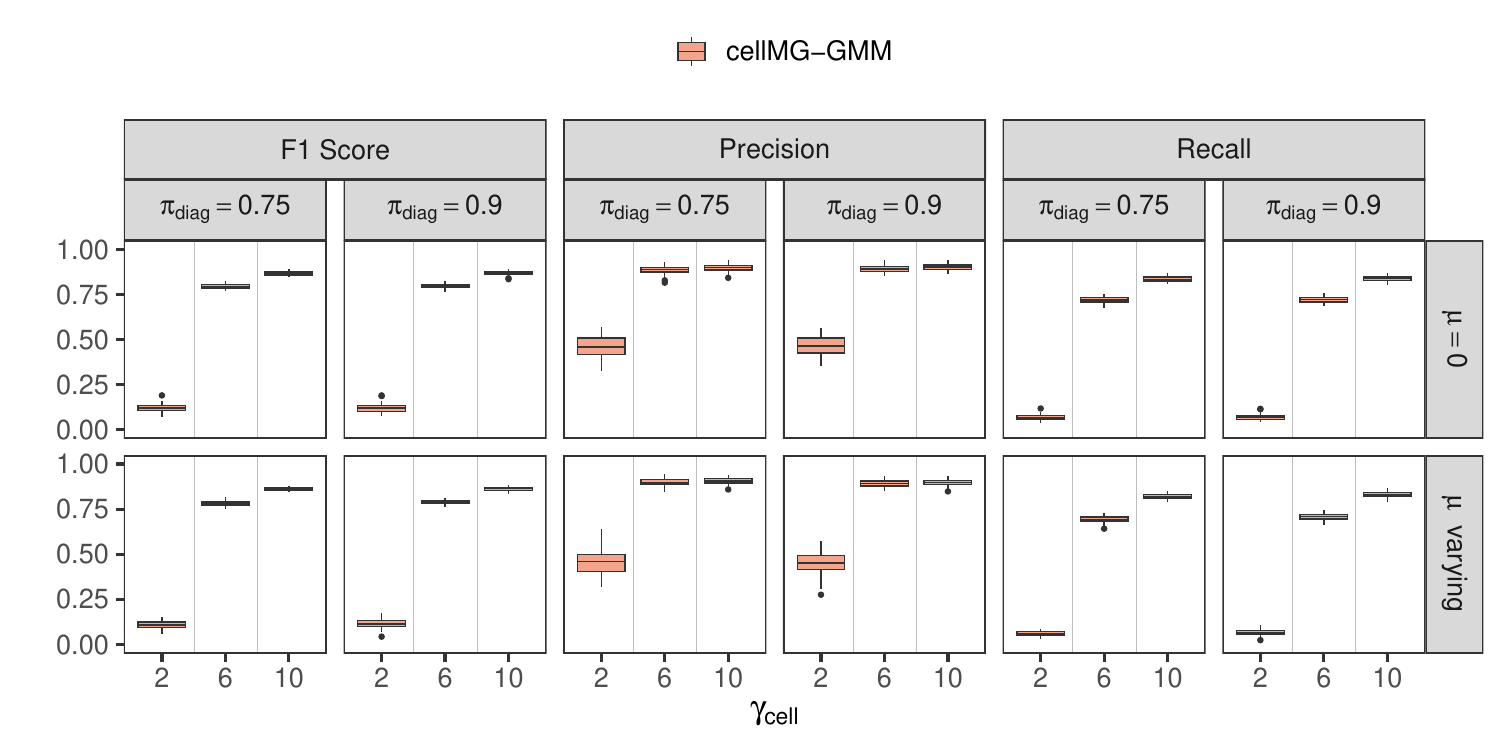}
    \caption{Performance of cellwise outlier detection in scenario 5 ($N = 2, p= 60, n_1 = n_2 = 40$) with ALYZ structured covariances evaluated by precision, recall and F1-score. No calculations of cellMCD and cellGMM were successful and are thus not included.}
    \label{fig:sim_highdim_ALYZ_W}
\end{figure}

\section{Application: Austrian Weather Data} 
\label{app:weather}

We  
use data 
of \citet{Geosphere2022}, with $p = 6$ monthly measured weather variables (averaged over the year 2021) at $n=183$ Austrian weather stations, including air pressure (p),  temperature (t), amount of rain (rsum), relative humidity (rel), hours of sunshine (s) and wind velocity (vv). 
The data set is available in the R-package \texttt{ssMRCD} \citep{ssMRCD_Cran} under the name \texttt{weatherAUT2021}. Figure~\ref{fig:weather_map_and_outliers}(a) shows the spatial locations and the underlying diverse geographical and 
meteorological structure of
the Alps.
We use this initial information to partition
the stations into $N =5$ more coherent groups, separated 
by the dashed lines
on the altitude map. The most western area (group 1, $n_1 = 31$) is characterized by  mountainous terrain, which extends to the east into group 2 ($n_2 = 80$) with high and low mountains. The most northern part (group 3, $n_3 = 35$) consists of low mountains and hills along the Danube river which flows through Vienna and the Vienna Basin (group 5, $n_5 = 21$). The  area to the East (group 4, $n_4 = 16$) 
is mainly flat.

\begin{figure}[t]
    \centering
    \begin{subfigure}[b]{0.55\textwidth}
        \includegraphics[height=0.33\textheight, trim = {1cm -1cm 1.1cm 0cm}, clip]{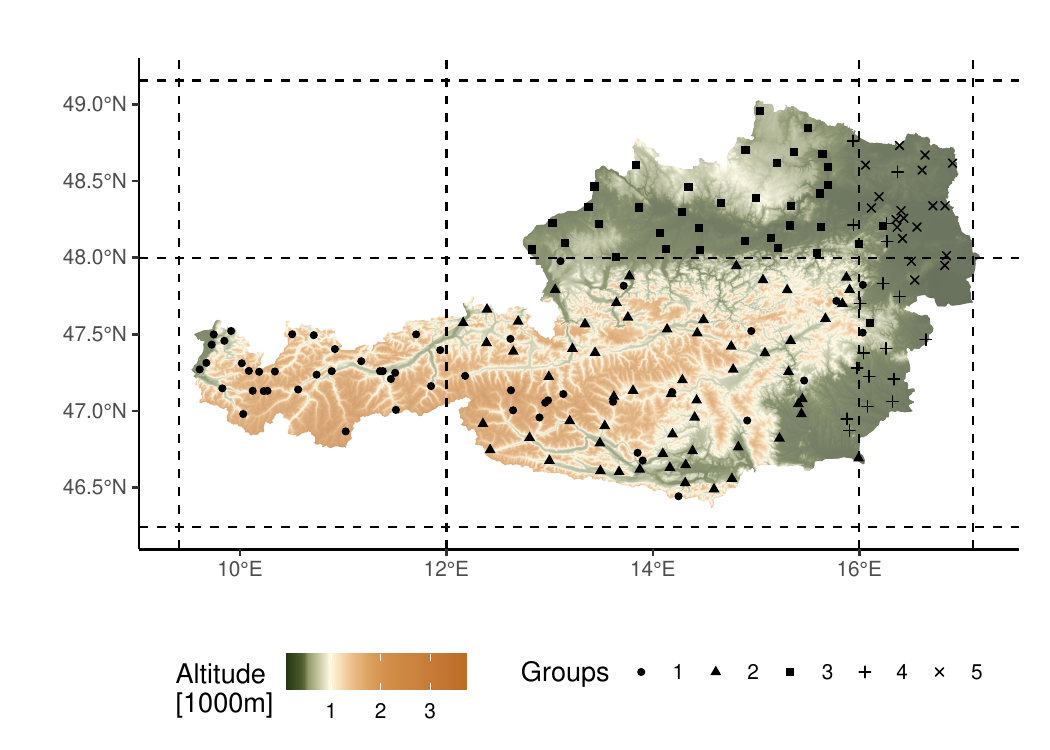}
        \caption{Altitude map.}
    \end{subfigure}
    \hfill
    \begin{subfigure}[b]{0.4\textwidth}
        \includegraphics[height=0.35\textheight, trim = {3.3cm 0cm 2.9cm 0cm}, clip]{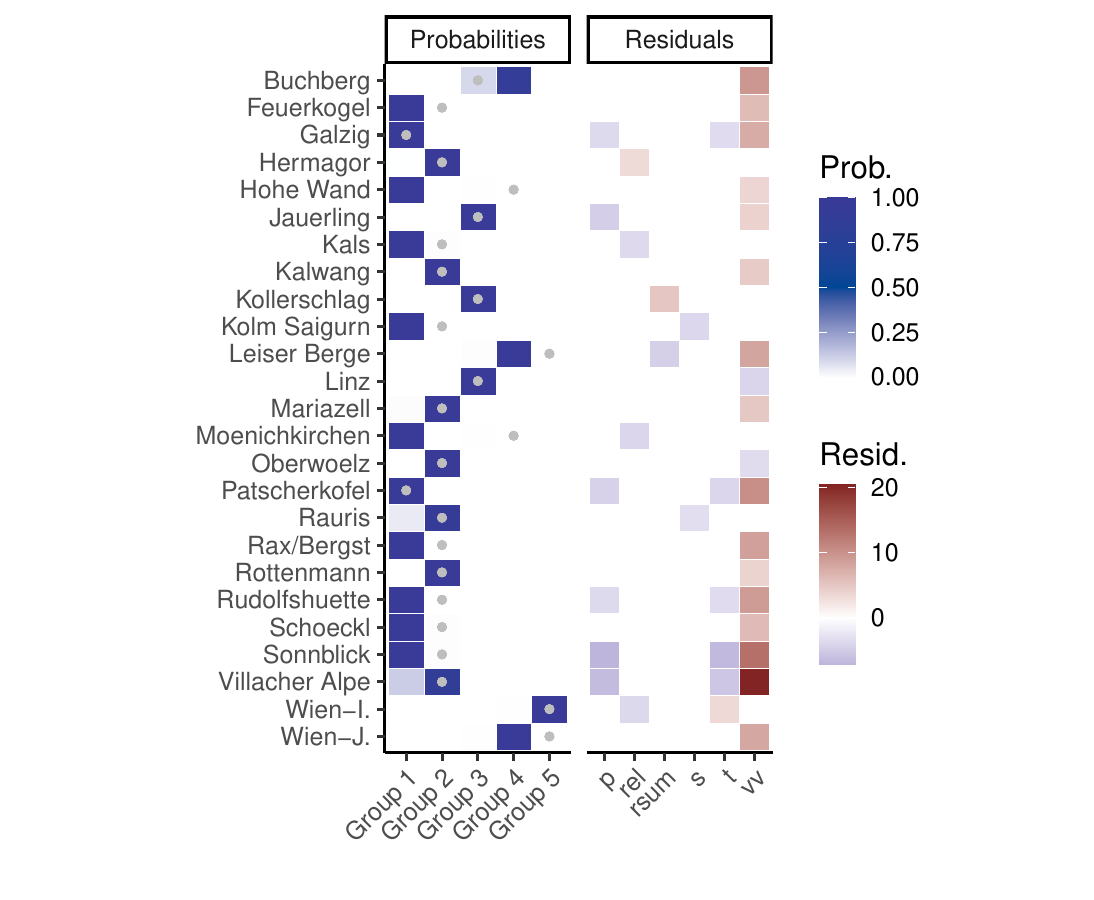}
     \caption{Group probabilities and residuals.}
    \end{subfigure}
    \caption{
    Left: Altitude map of Austria with $n=183$ weather stations separated into $N=5$ groups by the grid lines. Each station is (re-)assigned to a group, indicated by the different symbols, based on its maximal class probability.
    Right: Outlying weather stations (rows) with group probabilities $\hat{t}_{g,i,k}$ with dots at the initial groups in the left panel; and cell residuals in the right panel.} 
    \label{fig:weather_map_and_outliers}
\end{figure}

Our goal is to identify discrepancies between each station's predefined spatial label and its model-based grouping using the MG-GMM, to identify cellwise outliers and analyze why these occur.
We apply 
cellMG-GMM with $h_g = 0.75n_g$, allowing for up to $25\%$ of flagged cells per variable, and $\alpha = 0.5$, 
allowing for very flexible group re-assignments. The model-based grouping structure, based on each station's 
highest class probability $\max_k \hat{t}_{g,i,k}$, 
is shown on the altitude map of Figure~\ref{fig:weather_map_and_outliers}(a) through the 
different plotting symbols.
In Figure~\ref{fig:weather_map_and_outliers}(b), we  display
observations (in the rows) with at least one flagged cell. The color of each tile in the left panel shows the estimated class probabilities $\hat{t}_{g,i,k}$, while the initial group membership is marked by a dot. 
Here, cellMG-GMM identifies observations 
that are outlying 
in their initial group. Such stations can be observed from the left panel of Figure~\ref{fig:weather_map_and_outliers}(b), by their high probability of belonging to another group, and thus the mismatch between their initial group (dot) and dark blue tile (re-assigned group). For example, the weather station Hohe Wand is originally assigned to group 4 - a group of observations in a mostly flat area - but the weather station is located above 900m altitude and is actually very exposed. The model suggests that the group of high alpine weather stations (group 1) would be a better fit for Hohe Wand.

In the right panel of Figure~\ref{fig:weather_map_and_outliers}(b), outlying cells are colored according to their standardized residuals. Positive residual values indicate that the observed value is higher than what would be expected, vice versa for negative values.
cellMG-GMM can also identify 
observations that are outlying across all groups, as indicated by a high number of cellwise outliers (e.g.\ half of the cells being outlying).
Many outlying stations are connected to cell outliers in the variable wind velocity (vv), likely due to the diverse exposure of weather stations, even 
for stations in the same area. 
The five weather stations Villacher Alpe, Sonnblick, Rudolfsh\"utte, Patscherkofel, and Galzig have several outyling cells and display unexpected high values in wind velocity (vv) and low values in air pressure (p) and temperature (t). These are exactly the five highest weather stations with an altitude of more than $2000$ meters.

Finally, Figure~\ref{fig:weather_tempvswind} presents a more detailed analysis of the variables wind velocity and air temperature. The tolerance ellipses, based on the estimated locations and covariance matrices per group, show a smooth transition from groups connected to mountainous landscapes (group 1 and 2) that display higher variation in temperature to flatter landscapes (group 3 to 5) that display increased variation in wind velocity and generally higher temperatures. The weather station Wien-IS is the only cellwise outlier with unexpectedly high temperature, it is located in the city center of the capital Vienna. 

\begin{figure}[t]
    \centering
    \includegraphics[width=\textwidth, trim = {0cm 0.5cm 0cm 1.8cm}, clip]{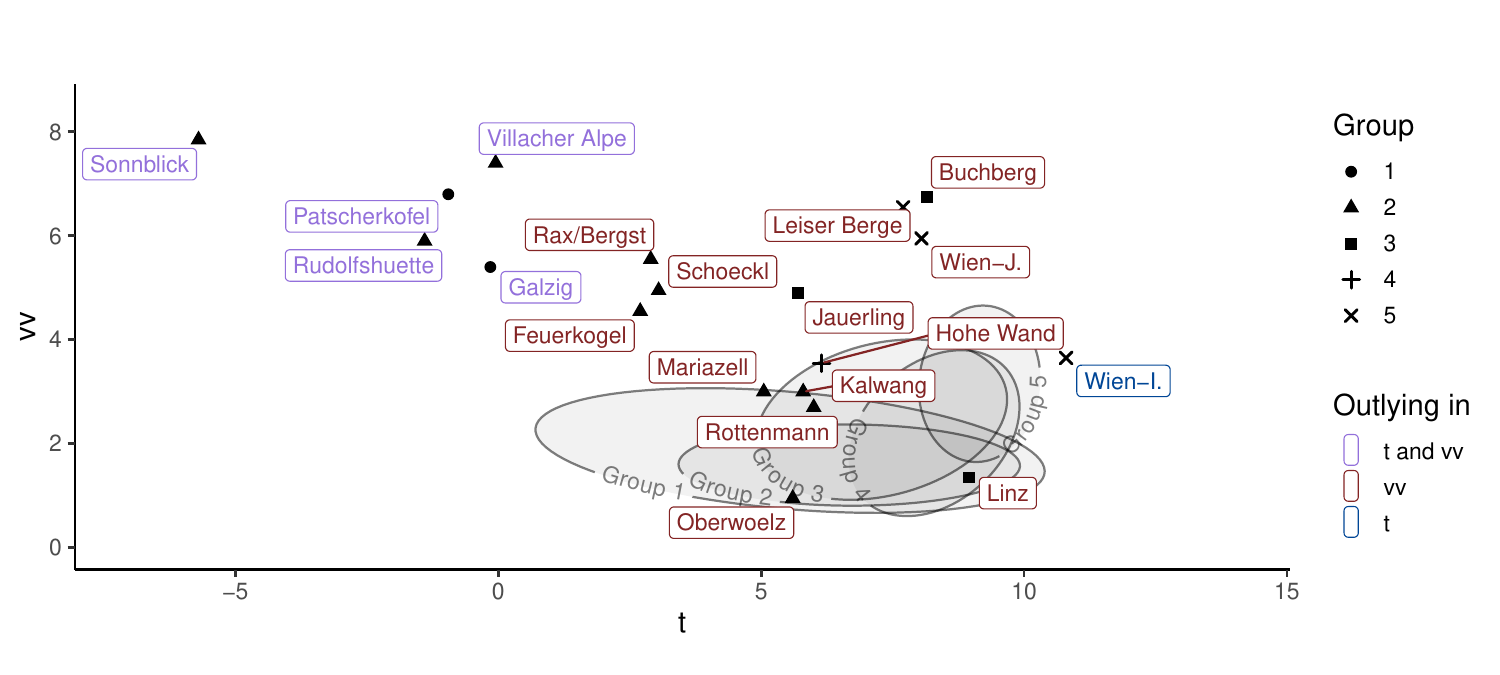}
    \caption{Bivariate feature space of wind velocity (vv) and air temperature (t). The $95\%$ tolerance ellipses are based on the estimated locations and covariance matrices per group. Stations outlying in at least one of the two variables are displayed. Shapes correspond to the initial group of each station, the color of the label indicates which cells are outlying.}
    \label{fig:weather_tempvswind}
\end{figure}

\end{document}